\newtheorem{remark}{Remark}
\colorlet{myPurple}{blue!40!red}
\colorlet{myPurplee}{blue!10!red}
\colorlet{myCyan}{cyan!60!gray}
\colorlet{myRed}{red!66!black}
\pgfplotsset{compat=1.14}
\newcommand{\sket}[1]{{\ensuremath{\lvert#1\rangle}}}
\newcommand{\lket}[1]{{\ensuremath{\left\lvert#1\right\rangle}}}
\newcommand{\ket}[1]{\if@display\lket{#1}\else\sket{#1}\fi}
\newcommand{\tp}{\otimes}
\newcommand{\sbra}[1]{{\ensuremath{\langle#1\rvert}}}
\newcommand{\lbra}[1]{{\ensuremath{\left\langle#1\right\rvert}}}
\newcommand{\bra}[1]{\if@display\lbra{#1}\else\sbra{#1}\fi}
\newcommand{\sbraket}[2]{{\ensuremath{\langle#1\rvert#2\rangle}}}
\newcommand{\lbraket}[2]{{\ensuremath{\left\langle#1\!\left\rvert\vphantom{#1}#2\right.\!\right\rangle}}}
\newcommand{\braket}[2]{\if@display\lbraket{#1}{#2}\else\sbraket{#1}{#2}\fi}
\newcommand{\sketbra}[2]{{\ensuremath{\lvert #1\rangle\!\langle #2\rvert}}}
\newcommand{\lketbra}[2]{{\ensuremath{\left\lvert #1\right\rangle\!\!\left\langle #2\right\rvert}}}
\newcommand{\ketbra}[2]{\if@display\lketbra{#1}{#2}\else\sketbra{#1}{#2}\fi}
\newcommand{\proj}[1]{\ketbra{#1}{#1}}
\newcommand{\physstatedm}{\varrho_{\rA\rB}}
\newcommand{\iso}{\Phi}
\newcommand{\junk}{\ket{\xi}}
\newcommand{\tr}{\mathrm{Tr}}
\newcommand{\idd}{\mathds{1}}
\newcommand{\rA}{\text{A}}
\newcommand{\rB}{\text{B}}
\newcommand{\rP}{\text{P}}
\newcommand{\M}{\mathsf{M}}
\newcommand{\X}{\mathsf{X}}
\newcommand{\x}{\left|\mathsf{X}\right|}
\newcommand{\Y}{\mathsf{Y}}
\newcommand{\y}{\left|\mathsf{Y}\right|}
\newcommand{\A}{\mathsf{A}}
\newcommand{\B}{\mathsf{B}}
\newcommand{\xe}{\mathtt{x}}
\newcommand{\aen}{\mathtt{a}}
\newcommand{\Exp}{\mathop{\mathds{E}}}
\theoremstyle{plain}
\newtheorem{thm}{Theorem}%
\newtheorem{thm*}{Result (informal)}
\newtheorem{lem}{Lemma}
\newtheorem{corollary}{Corollary}
\newtheorem{defi}{Definition}
\newcommand{\gen}{\mathsf{Gen}}
\newcommand{\eval}{\mathsf{Eval}}
\newcommand{\enc}{\mathsf{Enc}}
\newcommand{\dec}{\mathsf{Dec}}
\newcommand{\sk}{\mathsf{sk}}
\newcommand{\ct}{\mathsf{ct}}
\DeclareMathAlphabet{\mathcal}{OMS}{cmsy}{m}{n}
\begin{document}
\title{Quantum bounds for compiled XOR games and $d$-outcome CHSH games}

\author{Matilde Baroni}
\affiliation{Sorbonne Université, CNRS, LIP6, 4 place Jussieu, 75005 Paris, France}
\email{matilde.baroni@lip6.fr}
\author{Quoc-Huy Vu}
\affiliation{De Vinci Higher Education, De Vinci Research Center, Paris, France}
\author{Boris Bourdoncle}
\affiliation{Quandela, 7 Rue L\'{e}onard de Vinci, 91300 Massy, France}
\author{Eleni Diamanti}
\affiliation{Sorbonne Université, CNRS, LIP6, 4 place Jussieu, 75005 Paris, France}
\author{Damian Markham}
\affiliation{Sorbonne Université, CNRS, LIP6, 4 place Jussieu, 75005 Paris, France}
\author{Ivan \v{S}upi\'{c}}
\affiliation{Sorbonne Université, CNRS, LIP6, 4 place Jussieu, 75005 Paris, France}

\begin{abstract}
    Nonlocal games play a crucial role in quantum information theory and have numerous applications in certification and cryptographic protocols. Kalai et al. (STOC 2023) introduced a procedure to compile a nonlocal game into a single-prover interactive proof, using a quantum homomorphic encryption scheme, and showed that their compilation method preserves the classical bound of the game. Natarajan and Zhang (FOCS 2023) then showed that the quantum bound is preserved for the specific case of the CHSH game. Extending the proof techniques of Natarajan and Zhang, we show that the compilation procedure of Kalai et al. preserves the quantum bound for two classes of games: bipartite XOR games and a higher dimensional generalization of CHSH (the SATWAP inequality), that we refer to as $d$-outcome CHSH games.
    We also establish that, for any pair of qubit measurements, there exists a compiled XOR game such that its near-optimal winning probability serves as a robust self-test for that particular pair of measurements.
    Finally, we derive computational self-testing of three anticommuting qubit observables, based on the compilation of the nonlocal game corresponding to the so-called elegant Bell inequality. 
\end{abstract}

\section{Introduction}\label{intro}

In quantum information theory, nonlocal games are a class of games that exemplifies the separation between classical and quantum resources. In such games, a referee sends classical inputs to two or more distant parties, and the parties reply with classical outputs. A predicate on the tuples of inputs and outputs defines the winning conditions for the game. In some cases, if the distant parties have access to quantum resources, namely entangled states, they can achieve a score higher than if they only have classical resources. Building on the work of  Bell~\cite{Bell1964Einstein}, Clauser, Horne, Shimony and Holt introduced the setting for the most famous example of such a game, the CHSH game~\cite{CHSH1969}. While players with classical resources can only reach a maximal value of 0.75, players sharing entanglement can obtain a value of $\cos^2(\pi/8)$. 

In addition to its far-reaching foundational interest, Bell nonlocality~\cite{BCP2014Bell} enjoys a rich array of applications concerned with the certification of quantum resources and cryptographic tasks, such as device-independent quantum key distribution~\cite{ABG2007Device, VV2014Fully, ADF2018Practical}, certified randomness~\cite{Colbeck2007Quantum, PAM2010Random, AM2016Certified}, self-testing~\cite{MY2004Self, SB2020Self} and protocols for verifiable blind delegated quantum computing~\cite{reichardt2013classical,coladangelo2019verifier}. All these protocols are characterized by information-theoretic security. Bell nonlocality is also connected to the powerful notion of multiprover interactive proof systems in quantum complexity theory, $\textrm{MIP}^*$, in which two or more non-communicating provers are allowed to share quantum entanglement. In the classical setting, techniques and results from studying MIP (the classical version of $\textrm{MIP}^*$) usually also find applications in the cryptographic setting, where all parties are computationally bounded, such as the celebrated PCP theorem~\cite{babai1991non} and succinct arguments~\cite{STOC:KalRazRot13,FOCS:MetNatZha24}.

The framework of Bell nonlocality however requires at least two non-communicating parties, and the strongest guarantees of information-theoretic security can thus only be achieved if these parties are spatially separated, which is experimentally very challenging.  While effective for foundational proofs of quantumness, applying such setups to computing platforms faces challenges due to their incompatibility with Bell-type scenarios. From a pragmatical point of view, if one wants to certify resources on a quantum computer as it is an integral device, spatial separation and the absence of communication cannot be imposed. Thus, a possibility to translate certification tools requiring two or more spatially separated devices to single device setting would significantly boost their applicability.
Interestingly, in the other area where Bell nonlocality finds its application, namely quantum computational complexity that we commented above, there has been a progress in moving from two non-communicating provers to a single one. Crucially, a series of works~\cite{aiello2000fast,STOC:KalRazRot14} showed that, using a homomorphic encryption scheme, any MIP sound against non-signalling provers can be compiled into a single-prover protocol in which the single prover is computationally bounded and restrained by cryptographic tools.

Cryptography seems to offer a very promising tool: (quantum) homomorphic encryption, which allows to perform computations on encrypted (quantum) data without decrypting them. 
This could be used to mimic spatial separation by hiding to the single prover the complete knowledge of the inputs, which would otherwise make the game trivial. 
Kalai, Lombardi, Vaikuntanathan and Yang formalised in~\cite{STOC:KLVY23} the idea of using quantum homomorphic encryption~\cite{FOCS:Mahadev18b,C:Brakerski18} to emulate spatial separation between the nonlocal parties.
They propose a procedure (see~\Cref{fig:klvy}) to compile every $k$-players nonlocal game into a single-prover interactive proof of $2k$-rounds, by fixing a sequential structure of the input-output requests and encrypting the first $k-1$ rounds.
They then prove: (i) the \emph{completeness} of the procedure, that is, there is an explicit and efficient quantum strategy that achieves the quantum bound of the original nonlocal game; and (ii) its \emph{classical soundness}, that is, no classical prover can outperform the optimal classical bound up to a negligible function in the security parameter, which is the advantage of the classical adversary in the indistinguishability (IND-CPA) security game of the encryption scheme.

\begin{figure}[htbp]
    \begin{minipage}[t]{0.6\linewidth}
        \centering
\begin{tikzpicture}[scale=0.9]
\draw[->,>=stealth] (-0.5,4) -- (-0.5,0) node[midway,left] {time};

\draw (0,1) rectangle (2,3) node[midway] (alice) {Alice};

\draw (3,0) rectangle (5,4) node[midway] (nl_verifier) {Verifier};

\draw (6,1) rectangle (8,3) node[midway] (bob) {Bob};

\draw[->,>=stealth] (3,2.5) -- (2,2.5) node[midway,above] {$x$};
\draw[->,>=stealth] (2,1.5) -- (3,1.5) node[midway,above] {$a$};

\draw[->,>=stealth] (5,2.5) -- (6,2.5) node[midway,above] {$y$};
\draw[->,>=stealth] (6,1.5) -- (5,1.5) node[midway,above] {$b$};
\end{tikzpicture}         \label{fig:2provers}
    \end{minipage}%
    \begin{minipage}[t]{0.4\linewidth}
        \centering
\begin{tikzpicture}[scale=0.9]%
\draw (0,0) rectangle (2,4) node[midway] (verifier) {Verifier};

\draw (4.5,0) rectangle (6.5,4) node[midway,text width=2cm,align=center] (singleprover) {Single \\ Prover};

\draw[->,>=stealth] (2,3.5) -- (4.5,3.5) node[midway,above] {$\xe=\enc(x)$};
\draw[->,>=stealth] (4.5,2.5) -- (2,2.5) node[midway,above] {$\aen=\enc(a)$};

\draw[->,>=stealth] (2,1.5) -- (4.5,1.5) node[midway,above] {$y$};
\draw[->,>=stealth] (4.5,0.5) -- (2,0.5) node[midway,above] {$b$};

\end{tikzpicture}         \label{fig:singleprover}
    \end{minipage}
    \caption{Pictorial representation of the Kalai \emph{et. al.} compilation protocol for $2$-player nonlocal games. On the left, a general $2$-player nonlocal game; the two parties are spatially separated, and only communicate to a classical verifier which is sampling questions $(x,y)$ and collecting their answers $(a,b)$. On the right, the single prover game resulting from the compilation procedure. In this representation, time flows downwards.}
    \label{fig:klvy}
\end{figure}
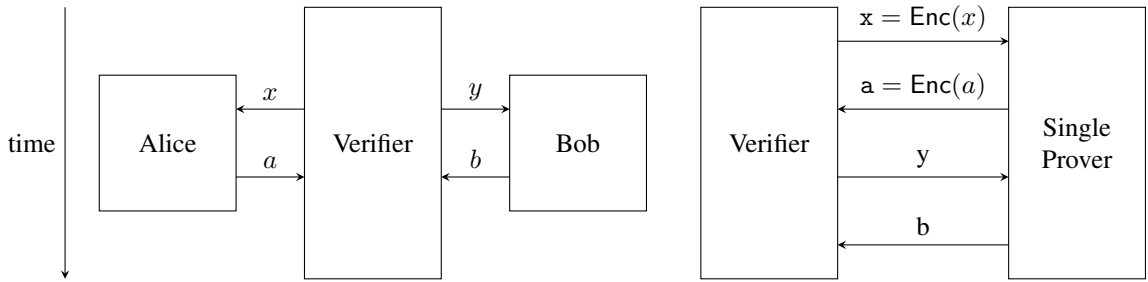

However \cite{STOC:KLVY23} does not provide an upper bound to the optimal score of a quantum prover.
Indeed they use classical rewinding techniques to show the soundness against classical adversaries, which is no longer possible against quantum provers.
Follow-up works by Natarajan and Zhang~\cite{FOCS:NatZha23} and Brakerski \emph{et al.}~\cite{C:BGKPV23} prove that the quantum bound is preserved in the specific case of the compiled CHSH game.
Their proofs heavily relies on specific properties of CHSH, providing little insight on whether Kalai \emph{et al.} compiler preserves the quantum bound for any other nonlocal games.

\subsection{Our Contributions}
In this work, we continue this research direction and provide a partial answer for the above open question. In particular, we prove that the quantum bound is preserved for two classes of compiled games: XOR games, where the predicate depends on the logical XOR of the outputs, and $d$-outcome CHSH games, that generalise the CHSH game to a scenario involving many-output measurements.
We also extend our results to self-testing, which relies on the fact that for specific non-local games there is a unique strategy that can achieve the optimal quantum score; then, reaching this bound is a device-independent certification of quantum resources. 

The techniques we use are based on the cryptographic SOS decomposition delineated in \cite{FOCS:NatZha23}.
The score of a nonlocal game is usually formulated as the expectation value of some function of the observables of the players. This is closely related to the Bell operator $\mathcal{B}$, and its shifted version $\beta \mathds{1} - \mathcal{B}$, where $\beta$ is what we call the shift.
SOS decomposition - standing for sum of squares - is a method widely used in Bell nonlocality to prove quantum bounds. It consists on mathematically rearranging the terms of the shifted Bell operator as a sum of polynomials squared, whose expectation value is positive semi-definite by definition. 
\begin{equation*}
   \bra{\psi} \beta \mathds{1} - \mathcal{B} \ket{\psi}= \sum_i \bra{\psi} P_i^2 \ket{\psi} \geq 0 \implies 
    \bra{\psi}\mathcal{B}\ket{\psi} \leq \beta
\end{equation*}
Hence the shifted Bell operator is also positive semi-definite, meaning that the shift $\beta$ is an upper bound for the optimal quantum score. The bound becomes tight if an explicit quantum strategy achieving $\beta$ is found.
If the polynomials appearing in the decomposition are of degree $n$, then we say we have a $n$th-order SOS decomposition.

We consider the the pseudo-expectation value map $\Tilde{\mathds{E}}$ defined in~\cite{FOCS:NatZha23}, that maps polynomials of nonlocal observables to correlations observed in the compiled nonlocal game.
This function allows to translate the SOS decomposition proof from the nonlocal to the compiled game.
For polynomials of observables with a physical interpretation, the map corresponds to a simple change of variables, compatible with the structure of the compiled game; therefore it acts trivially on the identity, and when applied to the Bell functional it gives the score of the compiled game.
The name pseudo-expectation refers to the fact that it is not a positive semi-definite function by definition. 
In particular, the pseudo-expectation of the polynomials squared from the SOS decomposition is not positive by definition anymore. 
Nevertheless, elaborating on the results of~\cite{FOCS:NatZha23}, we prove that - for specific polynomials -  these quantities are still positive up to a negligible function in the security parameter $\kappa$ of the QHE scheme.

\begin{thm*}
    Consider a Bell inequality $\mathcal{B}$ with a SOS decomposition, whose polynomials can be written as $P_i = A_i - \hat{B}_i$, where $\{A_i\}_i$ are Alice's observables and $\{\hat{B}_i\}_i$ are linear sums of Bob's observables. Then there exists a negligible function $\textrm{negl}(\cdot)$ such that $\Tilde{\mathds{E}}\left[P_i^2
    \right] \geq -\textrm{negl}(\kappa)$.
\end{thm*}
Technically, this is done by extending cryptographic arguments based on the security of the encryption scheme. Then, applying the pseudo-expectation map to the nonlocal SOS decomposition, we prove that the quantum bound of the compiled game is preserved up to a negligible function in the security parameter. 
\begin{equation*}
     \Tilde{\mathds{E}}[\beta \mathds{1} - \mathcal{B}] = \sum_i \Tilde{\mathds{E}}[P_i^2] \geq - \textrm{negl}(\kappa) \implies  \Tilde{\mathds{E}}[\mathcal{B}] \leq \beta + \textrm{negl}(\kappa)
\end{equation*}

The pseudo-expectation map is well defined for polynomials of degree $2$ in the number of observables, restricting us to work with first-order SOS decompositions.
We lack of a meaningful interpretation for polynomials containing two different Alice's observables in the compiled scenario, but we can fix the shape of the SOS polynomials $P_i$ to contain only one observable for Alice for several non-local games of interest.

In particular, in the paper we revise some known results for XOR games, and derive a new result on the decomposition of the shifted game operator. 
Notably, we prove that it is always possible to have a decomposition with polynomials $P_i$ containing only one Alice's observable. This helps us to prove that the quantum score of all XOR games is preserved by Kalai \emph{et al.} compilation.

\begin{thm*}
    Given a bipartite XOR game with optimal quantum winning probability $\omega_q$, the optimal quantum winning probability of the compiled XOR game is $\omega_q+\textrm{negl}(\kappa)$.
\end{thm*}

We also extend the pseudo-expectation map to treat higher dimensional inputs and outputs, hence generalised observables. As an example we apply it to the SATWAP inequality proposed in \cite{SATWAP17}, a generalisation of CHSH for which we know there exists a SOS decomposition in our desired form \cite{SSKA21}.
We prove that the quantum score of the compiled game is preserved in this case as well.
Apart from its mathematical relevance, this could lead to much more efficient certifications.

\begin{thm*}
    Consider the $d$-dimensional SATWAP Bell inequality $\mathcal{B}_d$ with quantum bound $\omega_q(\mathcal{B}_d)$, and its compiled version.
    If $d$ is polynomial w.r.t. the security parameter $\kappa$, then the quantum bound of the compiled SATWAP inequality is $\omega_q(\mathcal{B}_d) +\textrm{negl}(\kappa)$.
\end{thm*}

The SOS decomposition imposes constraints on the strategies achieving the optimal quantum score; sometimes these constraints are enough to uniquely identify these strategies, leading to self-testing protocols.
From the cryptographic SOS decomposition we develop computational self-testing techniques for interactive single prover games, built on top of the nonlocal self-testing proofs.

We show that for any pair of single-qubit observables, there is a tailored XOR game whose compiled version forces the prover’s device to implement (up to isometry) that specific pair.

\begin{thm*}
    For every pair of qubit observables , there exists an  XOR game $G$ such that any polynomial-time prover in the compiled protocol $G'$  who wins with probability at least $\omega_q(G) - \varepsilon$ must implement measurement operators within $O(\sqrt{\varepsilon}, \textrm{negl}(\kappa))$, up to a local isometry.
\end{thm*}

Our construction begins with the well-known information-theoretic self-test of two arbitrary qubit measurements via a suitable Bell inequality, then embeds it into the KLVY framework. A careful soundness analysis shows that the usual robust‐self-testing error bounds degrade only by a negligible factor in $\kappa$. To our knowledge, this is the first computational self-test of arbitrary two qubit measurement observables in a single-prover setting.

Beyond pairs of observables, practical quantum certification often demands a full Bloch‐sphere basis. We give the first compiled protocol that self-tests three anti-commuting qubit observables, equivalently, the Pauli set ${\sigma_x,\sigma_y,\sigma_z}$
through interaction with a single prover.

\begin{thm*}
    There is a single-prover compiled protocol, based on the elegant Bell inequality, such that any efficient prover achieving winning probability within $\delta$ of the quantum optimum must implement three anti-commuting Pauli measurements ${\sigma_x,\sigma_y,\sigma_z}$ up to $O(\delta,\textrm{negl}(\kappa))$ error and local isometry.
\end{thm*}

Our analysis combines the elegant inequality’s tight self-testing features with the general compilation machinery. The result provides a complete toolkit for certifying an arbitrary single-qubit state or operation under standard cryptographic assumptions.

\subsection{Outline of the paper}
The paper is structured in the following way.
Section \ref{sec:preliminaries} is dedicated to preliminary notions.
We introduce nonlocal games and Bell inequalities, with a focus on a sub-class: XOR games. For these, in Section~\ref{sec:xor} we revise some known results and derive a new result on the decomposition of the shifted game operator. Then, we introduce quantum homomorphic encryption and Kalai's compilation protocol \cite{STOC:KLVY23}.
We elaborate on the results of \cite{FOCS:NatZha23}, in particular on their idea of defining a pseudo-expectation value that maps polynomials of measurement observables to correlations observed in the compiled nonlocal game. 

The second part focuses on those cases for which we can prove that such a pseudo-expectation map can be used to upper bound the optimal quantum winning probabilities in compiled nonlocal games.
In Section~\ref{sec:qb-xor} we use the decomposition introduced in the preliminaries to prove the soundness of the quantum bound for any compiled XOR game.
Section~\ref{sec:st_qudits} generalizes our approach to encompass an inequality with a higher number of outputs.
Finally, in section~\ref{sec:st_2qubitmeas} we develop computational self-testing techniques for interactive single prover games, built on top of the nonlocal self-testing proofs. In particular, we show that a certain class of compiled XOR games can be used to self-test any pair of qubit measurements. Furthermore we introduce a compiled game which can self-test all three anti-commuting qubit observables. 

\subsection{Concurrent and independent work}

While finishing this manuscript we became aware of a recent work by Cui \textit{et al.}~\cite{CMMN2024Computational} who also show that  the compilation procedure of Kalai et al.~\cite{STOC:KLVY23} preserves the quantum bias of all XOR games. 
Even though both papers use a similar techniques based on~\cite{FOCS:NatZha23}, \cite{CMMN2024Computational} and our work also achieve different results. Their work shows a type of self-testing result for all compiled XOR games, while we explore the self-testing properties of carefully designed compiled XOR games, such that any pair of qubit measurements and tomographically complete sets of qubit measurements can be self-tested from the optimal winning probability. The work of  Cui \textit{et al.} explores the compilation of parallelly repeated XOR games and the Magic Square game, while we work with the compiled $d$-CHSH game which cannot be seen as a parallel repetition of XOR games.

\subsection{Subsequent works}
\label{sect:subsequent}
For completeness, we also highlight subsequent works that advanced our understanding on the quantum soundness of compiled games.

In \cite{mehta2025selftestingcompiledsettingtiltedchsh} the authors use similar techniques to show that the quantum bound of the tilted CHSH inequality is preserved under the compilation; this was the first example of an inequality not maximized by a maximally entangled state. 
Furthermore, they contribute in formalizing which kind of self-testing statements can be potentially derived from compiled games.

In \cite{kulpe2024boundquantumvaluecompiled}, a new algebraic framework is introduced, leading to show that for all bipartite games the compiled quantum value is asymptotically (with respect to the security of the cryptographic scheme) bounded by the quantum commuting operator value. This elegant general result provides a new point-of-view on compiled games; however it is an asymptotic result, hence it cannot be directly applied to practical scenarios, because it only holds in the case of perfect security.
Many interesting questions are still open.
How can we make this result more practical, i.e. when is it possible to estimate the value of the compiled game for a finite level of security and the speed of convergence? How to extend the same techniques to games for more than $2$ players?

\section{Preliminaries}\label{sec:preliminaries}

\subsection{Nonlocal games and Bell inequalities}\label{NlBI}

In a two-party nonlocal game, a referee randomly selects questions, also called inputs, according to a predetermined distribution and sends them to the non-communicating parties, usually called Alice and Bob. Upon receiving the answers, also called outputs, from Alice and Bob, the referee determines whether they won or lost based on the game rule, which is public. Winning is nontrivial because the players can't communicate during the game: they must generate their output solely based on their respective inputs and potentially a shared resource. $x \in \X$ and $y \in \Y$ denote the questions sent to Alice and Bob, respectively, $a \in \A$ and $b \in \B$ denote their respective answers. $\left|\cdot\right|$ denotes the cardinality of a set. The referee samples the questions from a distribution $q(x,y)$, and the game rule is a predicate $V : (a,b,x,y) \mapsto \{0,1\}$. Alice and Bob define a strategy to answer the questions, based on the resources that they have, and the conditional probabilities of obtaining outputs $a$ and $b$ when inputting $x$ and $y$, is denoted by $p(a,b|x,y)$. The winning probability of the game is then given by:

\begin{equation}
    \omega = \sum_{x,y}q(x,y)V(a,b,x,y)p(a,b|x,y).
\end{equation}

If Alice and Bob use classical resources, in the most general case they can share a random variable $t$ sampled from a distribution $p(t)$, and they can determine their respective output based on the value of $t$ and the received input. In the context of Bell nonlocality, such strategy corresponds to a local hidden variable model (LHV). In that case, the maximal winning probability is called the classical score of the game. Alice and Bob can always reach the maximal winning probability with deterministic response functions $f(x,t)$, $g(y,t)$ to choose their outputs~\cite{Fine1982Hidden}, and the maximal classical score is then equal to:
\begin{equation}
    \omega_c = \max_{f,g}\sum_{x,y,t}p(t)q(x,y)V(f(x,t),g(y,t),x,y).
\end{equation}

If Alice and Bob have access to quantum resources, they can share a quantum state $\ket{\psi}$ and perform quantum measurements on it. For each output, they can perform a different measurement, meaning that Alice has $\left|\X\right|$ different measurements, one for each for each $x$, characterized by $\left|\A\right|$ operators $\{M_{a|x}\}_a$, such that:
\begin{align}
&M_{a|x} \succeq 0  \qquad &&\forall x \in \X, \forall a \in \A \\
&\sum_{a\in \A} M_{a|x} = \idd  &&\forall x \in \X.
\end{align}

Bob's measurements are defined similarly and denoted by $\{N_{b|y}\}_b$. The probability that Alice and Bob get outputs $a$ and $b$, given inputs $x$ and $y$ is determined by the Born rule $p(a,b|x,y) = \bra{\psi}M_{a|x}\otimes N_{b|y}\ket{\psi}$. In that case, the winning probability is equal to: 
\begin{equation}
    \omega = \sum_{x,y}q(x,y)V(a,b|x,y)\bra{\psi}M_{a|x}\otimes N_{b|y}\ket{\psi}.
\end{equation}
and the maximal quantum score is given by:
\begin{equation}
    \omega_q = \max_{\{M_{a|x}\}_x,\{N_{b|y}\}_y,\ket{\psi}}\omega.
\end{equation}
Note that, more generally, Alice and Bob could have access to a mixed state $\rho$, but since we don't impose any restriction on the dimension of the underlying Hilbert space, any score reachable with a mixed state can also be reached with a pure state by increasing the dimension of the Hilbert space.   

If  players output bits, i.e. $\A=\B=\{0,1\}$, the game is said to be binary, and the quantum strategies can be characterised in a simpler way: their measurements are defined by the Hermitian measurement observables
\begin{align}\label{eq:binary_observables}
    A_x = \sum_a(-1)^aM_{a|x}, &\qquad B_y = \sum_b(-1)^bN_{b|y}.
\end{align}
In the rest of the paper, we use the following notations:
\begin{align}
    \ket{A_x} = A_x\ket{\psi}, &\qquad \ket{A} = \sum_x\ket{A_x}\otimes\ket{x},\\
    \ket{B_y} = B_y\ket{\psi}, &\qquad \ket{B} = \sum_y\ket{B_y}\otimes\ket{y},
\end{align}
which allows us to express the players' correlations through a correlation matrix as:
\begin{equation}
    C = \sum_{x,y}c_{xy}\ketbra{x}{y} \qquad \text{with } c_{xy} = \braket{A_x}{B_y}.
\end{equation}

Nonlocal games are closely related to Bell inequalities~\cite{SMA2008Relation, AH2012Logical}. A Bell inequality can be defined via a linear form on the space of the conditional distributions $p(a,b|x,y) \in \mathbb{R}^{|A||B||X||Y|}$:
\begin{equation}
\sum_{a,b,x,y} \gamma_{a,b,x,y} p(a,b|x,y) \leq \beta_c,
\end{equation}
where $\{ \gamma_{a,b,x,y} \}_{a,b,x,y}$ is a fixed set of complex coefficients; different coefficients identifies different inequalities.

Importantly, the quantum Bell score can be written as $\beta = \bra{\psi}\mathcal{B}\ket{\psi}$ where $\mathcal{B}$ is the Bell operator
\begin{equation}
    \mathcal{B} = \sum_{a,b,x,y}\gamma_{a,b,x,y}M_{a|x}\otimes N_{b|y}.
\end{equation}
When there is a gap between the maximal classical and quantum scores, the quantum strategies that achieve a score higher than the classical value also violate a Bell inequality. For instance, the quantum state and measurements yielding the maximal value for the CHSH game also yield the maximal quantum value, known as Tsirelson's bound, for the CHSH inequality. In the remainder of this paper, we sometimes switch between the nonlocal game and the Bell inequality formulations. The local bound of a Bell inequality will be denoted with $\beta_c$, while the maximal value considering quantum strategy will be denoted as $\beta_q$. For a Bell inequality with quantum bound $\beta_q$ we will often use the shifted Bell operator
\begin{equation}
    \beta_q\idd - \mathcal{B},
\end{equation}
which is by construction positive semi-definite.

\subsection{XOR games}\label{sec:xor}

XOR games are a subclass of nonlocal games with binary outputs, in which the winning conditions for each pair of inputs depend only on the parity of the two output bits, i.e. there is a function $f$ such that $V(a,b,x,y) = 1$ if and only if $f(x,y) = a\oplus b$. Hence, the winning probability takes the form
\begin{equation}
    \omega = \sum_{a,b,x,y}q(x,y) \delta_{a\oplus b = f(x,y)} p(a, b|x,y).
    \footnote{Here $\delta$ denotes the Kronecker function.}
\end{equation}
The winning probability of an XOR game is connected to its bias 
\begin{equation}
    \xi = \sum_{x,y}q(x,y)(-1)^{f(x,y)}c_{x,y},
\end{equation}
through the relation $\omega = (1+\xi)/2$. The term "bias" comes from the fact that in XOR games, the score $1/2$ can be achieved by players that do not use any resources, but randomly output their bits, independently of the inputs. An XOR game can be characterized by its so-called game matrix $\Phi$:
\begin{equation}
    \Phi = \sum_{x,y}q(x,y)(-1)^{f(x,y)}\ket{x}\bra{y}.
\end{equation}
The bias can then be compactly written as $\xi = \tr\left[C^T\Phi\right]$. We can also define the operator associated to an XOR game as 
\begin{equation}\label{eq:XOR_bell_operator}
    \mathcal{B}_g = \sum_{x,y}q(x,y)(-1)^{f(x,y)}A_x\otimes B_y,
\end{equation}
and the bias can then be computed as the expectation value of the corresponding game operator $\xi = \bra{\psi}\mathcal{B}_g\ket{\psi}$. From this game operator, one obtains a valid Bell operator by simply multiplying the r.h.s. of eq.~\eqref{eq:XOR_bell_operator} with $\left|\X\right|\cdot\left|\Y\right|$.

\begin{lem}[\cite{tsirelson}]\label{lemma1}
    Alice's optimal quantum strategy in an XOR game, encoded in the vector $\ket{A_q}$, is fully determined by Bob's optimal strategy $\ket{B_q}$ through a linear transformation:
    \begin{equation}
        \ket{A_q} = \idd\otimes F\ket{B_q}.
    \end{equation}
\end{lem}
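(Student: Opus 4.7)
The plan is to exploit the bilinear structure of the XOR bias. Writing out the definitions, the bias can be rewritten as
$\xi = \sum_{x,y}\Phi_{xy}\braket{A_x}{B_y} = \sum_x\braket{A_x}{\tilde A_x}$,
where I set $\ket{\tilde A_x}:=\sum_y\Phi_{xy}B_y\ket{\psi}$. The idea is that at any jointly optimal quantum strategy $(A_x^q,B_y^q,\ket{\psi^q})$, Alice's observables must in particular be locally optimal when Bob's observables and the shared state are held fixed. Decomposing the bias this way recasts the problem of optimising over Alice's strategy as $|\X|$ independent maximum-inner-product problems for the unit vectors $\ket{A_x}=A_x\ket{\psi}$ (unit because $A_x^2=\idd$).

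I would next apply Cauchy--Schwarz term by term, obtaining $\braket{A_x}{\tilde A_x}\le\|\ket{\tilde A_x}\|$, with equality iff $A_x\ket{\psi}$ is the positive unit multiple of $\ket{\tilde A_x}$. Assuming this alignment can be realised by a valid $\pm 1$ observable (addressed in the next paragraph), any optimal Alice strategy must then satisfy $A_x^q\ket{\psi^q}=c_x\sum_y\Phi_{xy}B_y^q\ket{\psi^q}$ with $c_x=1/\|\ket{\tilde A_x}\|$. Packaging these scalars into an $|\X|\times|\Y|$ matrix $F$ with entries $F_{xy}:=c_x\Phi_{xy}$ and substituting into the definitions $\ket{A_q}=\sum_x A_x\ket{\psi}\otimes\ket{x}$ and $\ket{B_q}=\sum_y B_y\ket{\psi}\otimes\ket{y}$ yields the claimed identity $\ket{A_q}=(\idd\otimes F)\ket{B_q}$ by a direct calculation.

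The main subtlety, and hence the main obstacle, is to ensure that the Cauchy--Schwarz saturation is genuinely attainable by a Hermitian observable squaring to $\idd$: a priori, the operator constraint $A_x^2=\idd$ could make the bound loose since not every prescribed direction for $A_x\ket{\psi}$ may come from a $\pm 1$ observable acting on Alice's side of the state. I would discharge this by invoking Tsirelson's classical characterisation of XOR games, which identifies the maximal quantum bias with the supremum of $\sum_{xy}\Phi_{xy}\braket{u_x}{v_y}$ over arbitrary unit vectors $\ket{u_x},\ket{v_y}$ in some Hilbert space, together with the fact that any such vector strategy lifts to a genuine quantum strategy with $\pm 1$ observables via a Clifford-algebra construction. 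This guarantees tightness of Cauchy--Schwarz at the quantum optimum and hence that the linear relation $\ket{A_q}=(\idd\otimes F)\ket{B_q}$ holds at any optimal Alice strategy, with $F$ determined solely by the game matrix $\Phi$ and by Bob's optimal strategy through the normalisations $c_x$.
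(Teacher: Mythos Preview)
Your argument is correct and reaches the same conclusion as the paper, but by a genuinely different route. The paper casts the optimal bias as the value of the semidefinite program~\eqref{sdpprimal}, forms the Lagrangian dual~\eqref{dual}, and extracts the linear relation $\ket{A_q}=(\idd\otimes\Lambda_A^{-1}\Phi)\ket{B_q}$ from the complementary slackness condition $(\tfrac{1}{2}\Lambda-\tilde\Phi)\ket{Q}=0$. You instead decouple the bias into $|\X|$ inner products and apply Cauchy--Schwarz directly, invoking Tsirelson's vector-game characterisation to ensure tightness at the quantum optimum. Both approaches ultimately rest on Tsirelson's result (in the paper, this is the statement that the first-level SDP relaxation is exact for XOR games), but the mechanics differ: SDP duality versus an elementary variational argument. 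Your normalisations $c_x=1/\|\ket{\tilde A_x}\|$ coincide with the paper's dual variables $\lambda_x^{-1}$ at the optimum, so the resulting $F$ is the same matrix. The trade-off is that your route is shorter and more transparent for the lemma in isolation, while the paper's SDP framing pays off downstream: the dual multipliers $\lambda_x$ and the dual feasibility constraint $\tfrac{1}{2}\Lambda-\tilde\Phi\succeq 0$ are exactly what drive Lemma~\ref{ostrevlemmarob} and the SOS decomposition in Theorem~\ref{sosThm}, so the duality machinery is not wasted effort. One small caveat shared by both arguments: the relation is only well-posed when $\|\ket{\tilde A_x}\|\neq 0$ (equivalently $\lambda_x\neq 0$), a nondegeneracy implicitly assumed throughout.
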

\begin{proof}
This lemma was proven by Tsirelson in~\cite{tsirelson}, but we reproduce here the proof of~\cite{escola}. The optimal quantum bias of an XOR game can be obtained as the solution to a semidefinite program (SDP)~\cite{wehner}. For that purpose let us define the Gram matrix $\tilde{Q}$
\begin{equation}
    \tilde{Q} = \left[
\begin{array}{c|c}
R  & C \\ \hline
 C^T & S
\end{array}\right], \textrm{ with } R = \sum_{x,x'}\braket{A_x}{A_{x'}}\ketbra{x}{x'}, \textrm{ and } S = \sum_{y,y'}\braket{B_y}{B_{y'}}\ketbra{y}{y'}.
\end{equation}
Every Gram matrix is positive semidefinite and has $1$-s on the diagonal. Thus, the SDP yielding the optimal quantum bias for an XOR game takes the form
\begin{align}\nonumber
    \xi_q = &\max_{\tilde{Q}}\tr\left[\tilde{Q}\tilde{\Phi}\right],\\ \label{sdpprimal}
    \mathrm{s.t.}\qquad  &\tilde{Q}_{ii} = 1,\quad \mathrm{for}\quad i = 1,\cdots,\x+\y,\\ \nonumber
    &\tilde{Q} \succeq 0,
\end{align}
where $\tilde{\Phi} = \frac{1}{2}\begin{pmatrix}
    0 & \Phi\\
    \Phi^T & 0
\end{pmatrix}$. From this primal form of the SDP, we can obtain the dual formulation by introducing the Lagrangian $\mathcal{L} = \tr[\tilde{Q}\tilde{\Phi}] - \frac{1}{2}\sum_i\lambda_i(\tr \left[\proj{i}\tilde{Q}\right] - 1)$, where $\lambda_i$ are nonnegative Lagrangian multipliers. Let us define the diagonal matrix $\Lambda = \mathrm{diag}(\lambda_1,\cdots,\lambda_{\x+\y})$. The minimal value of the Lagrangian upper bounds the solution to the primal SDP if $\frac{1}{2}\Lambda - \tilde{\Phi}\succeq 0$, so the dual SDP takes the form
\begin{align}\label{dual}
    \min\tr\left[\frac{\Lambda}{2}\right], \qquad \mathrm{s.t.}\quad \frac{\Lambda}{2} -\tilde{\Phi}\succeq 0.
\end{align}
As a consequence of the aforementioned construction, it can be deduced that  any given pair of primal and dual feasible solutions satisfy $\tr[\tilde{Q}\tilde{\Phi}] \leq \xi_q 
\leq \tr[\Lambda]/2$. The optimal values of the primal and of the dual coincide if strong duality holds. A sufficient condition for strong duality to hold, when the primal and dual are finite, is the existence of a strictly feasible solution to the dual problem, which is satisfied here.  Hence, the optimal value can be obtained if the complementary slackness condition is satisfied
\begin{equation}\label{compslack}
    \tr\left[\tilde{Q}\left(\frac{1}{2}\Lambda - \tilde{\Phi}\right)\right] = 0.
\end{equation}

Let us now introduce the notations $\Lambda = \Lambda_A\oplus\Lambda_B $, where $\Lambda_A$ ($\Lambda_B$) is a diagonal $\x\times \x$ $\left(\y\times \y\right)$ matrix, and $|Q\rangle = \ket{A}\oplus \ket{B}$. We can write the quantum bias as $\xi = \bra{Q}\idd\otimes \tilde{\Phi}\ket{Q}$. The slackness condition~\eqref{compslack} gives $\tr\left[\idd\otimes\left(\frac{1}{2}\Lambda - \tilde{\Phi}\right)\proj{Q}\right] = 0$, or equivalently $\idd\otimes(\frac{1}{2}\Lambda -\tilde{\Phi})\ket{Q} = 0$, since $\frac{1}{2}\Lambda - \tilde{\Phi}$ is positive semidefinite. The form of $\Lambda$ and $\tilde{\Phi}$ implies the following conditions for the optimal strategies of Alice and Bob:
\begin{equation}\label{ab}
    \idd\otimes \Lambda_A\ket{A_q} = \idd\otimes\Phi\ket{B_q}, \quad \mathrm{and}\quad \idd\otimes \Lambda_B\ket{B_q} = \idd\otimes\Phi^T\ket{A_q}.
\end{equation}
Taking $F = \Lambda_A^{-1}\Phi$ gives the desired result.
\end{proof}
Another convenient way to write the relations obtained above is
\begin{equation}\label{ax}
    \ket{A_x} = \lambda_{x}^{-1}\sum_{y=1}^{m_B}\Phi_{xy}\ket{B_y}, \qquad \mathrm{for}~x = 1,\cdots,\x.
\end{equation}
The matrix $\tilde{Q}$ can be seen as a moment matrix corresponding to the first level of Navascues-Pironio-Ac\'{\i}n hierarchy (NPA)~\cite{NPA}. The $n$-th level NPA moment matrix is obtained by defining all degree-$n$ monomials $S_i$ of the operators from the set $\{A_1,\cdots,A_{\x},B_1,\cdots,B_{\y}\}$ and taking $\tilde{Q}^{(n)} = \bra{\psi}S_i^{\dagger}S_j\ket{\psi}\ketbra{i}{j}$. Upper bounds on the maximal quantum score of an arbitrary nonlocal game can be obtained as solutions to an SDP analogous to~\eqref{sdpprimal}, but taking $\tilde{Q}^{n}$ instead of $\tilde{Q}$ and the appropriate game matrix $\tilde{\Phi}$. The larger $n$ is, the tighter the upper bound is. In the case of XOR games, the moment matrix of the first level of the NPA hierarchy actually suffices to find the exact optimal value, as it was proven in~\cite{navascues2010glance}. The dual formulation can be seen as an optimization over sum-of-squares (SOS) polynomials, given the duality theory between positive semidefinite moment matrices and SOS polynomials~\cite{laurent2009sums}. In the case of the NPA hierarchy the difference between the solutions of the dual and primal problems can be seen as the expectation value of an SOS polynomial $\sum_{i}S_i^\dagger S_i$, where $S_i$ belongs to the monomials used to create the corresponding moment matrix~\cite{tavakoli2024semidefinite}.  For XOR games, this implies that for every Gram matrix $\tilde{Q}$ obtained by measuring the state $\ket{\psi}$, the following holds:
\begin{equation}\label{sosnum}
    \xi_q - \tr[\tilde{Q}\tilde{\Phi}] = \bra{\psi}\sum_{i}P_i^\dagger P_i\ket{\psi},
\end{equation}
where the $P_i$-s are first degree polynomials over $\{A_1,\cdots,A_{\x},B_1,\cdots,B_{\y}\}$, i.e. $P_i = \sum_{x,y}(\alpha_x^i A_x + \beta_y^i B_y)$. As Eq.~\eqref{sosnum} holds for all quantum realizations, it can be written as
\begin{equation}
    \label{sosop}\xi_q\idd - \mathcal{B}_g = \sum_{i}P_i^\dagger P_i,
\end{equation}
where $\xi_q\idd - \mathcal{B}_g$ is usually called the shifted game operator. The following theorem stipulates that for XOR games, the shifted game operator can be written as a sum of squares, with each term containing a single Alice operator, plus a positive polynomial depending only of Bob's operators.

\begin{thm}\label{sosThm}
    Let $\mathcal{B}_g$ be the game operator of an XOR game with optimal quantum bias $\xi_q$. Then the following holds:
\begin{equation}\label{sosXOR}
    \xi_q\idd - \mathcal{B}_g = \sum_x\frac{\lambda_{x}}{2}\left(A_x - \sum_{y}F_{xy}B_y\right)^2 + P\left(\{B_y\}_y\right),
\end{equation}
where $F$ is the matrix of Lemma~\ref{lemma1} and $P(\{B_y\}_y)$ is a positive polynomial over Bob's measurement operators.
\end{thm}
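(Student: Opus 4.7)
The plan is to verify identity~\eqref{sosXOR} by expanding the right-hand side, reading off $P(\{B_y\}_y)$ from the leftover terms, and then establishing its positivity using the SDP dual feasibility condition already derived in the proof of Lemma~\ref{lemma1}.

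First I would expand each squared term. Since $A_x$ is a $\pm 1$-valued observable we have $A_x^2 = \idd$, and since $A_x$ and $B_y$ act on different tensor factors they commute, so
\begin{equation*}
\sum_x \frac{\lambda_x}{2}\Bigl(A_x - \sum_y F_{xy}B_y\Bigr)^{\!2} = \frac{\tr\Lambda_A}{2}\,\idd \;-\; \sum_{x,y}\lambda_x F_{xy}\,A_x B_y \;+\; \frac{1}{2}\sum_x\lambda_x\Bigl(\sum_y F_{xy}B_y\Bigr)^{\!2}.
\end{equation*}
Using $F=\Lambda_A^{-1}\Phi$, so that $\lambda_x F_{xy} = \Phi_{xy}$, the middle term collapses to $-\sum_{x,y}\Phi_{xy}A_xB_y = -\mathcal{B}_g$. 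Matching with $\xi_q\idd-\mathcal{B}_g$ and invoking strong duality $\xi_q = (\tr\Lambda_A + \tr\Lambda_B)/2$ together with $B_y^2=\idd$, I can identify
\begin{equation*}
P(\{B_y\}_y) = \tfrac{1}{2}\sum_{y,y'}\bigl(\Lambda_B - \Phi^T\Lambda_A^{-1}\Phi\bigr)_{yy'}\,B_y B_{y'}.
\end{equation*}

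To establish positivity of $P$, I would invoke dual feasibility $\tfrac{1}{2}\Lambda - \tilde\Phi \succeq 0$, which states that the real symmetric block matrix with diagonal blocks $\Lambda_A,\Lambda_B$ and off-diagonal blocks $-\Phi,-\Phi^T$ is positive semidefinite. Assuming $\Lambda_A\succ 0$, the Schur complement then yields $N := \Lambda_B-\Phi^T\Lambda_A^{-1}\Phi \succeq 0$. Writing $N=L^T L$ for some real matrix $L$, one rewrites
\begin{equation*}
P(\{B_y\}_y) = \tfrac{1}{2}\sum_k\Bigl(\sum_y L_{ky}B_y\Bigr)^{\!2},
\end{equation*}
which is manifestly a sum of squares of Hermitian operators in Bob's observables, and therefore positive semidefinite.

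The main bookkeeping obstacle I anticipate is the possibility that some dual multipliers $\lambda_x$ vanish, which would invalidate the inversion $\Lambda_A^{-1}$ in the Schur complement step. However, PSD-ness of the full block matrix forces the corresponding row of $\Phi$ to vanish in that case, so the input $x$ drops out of $\mathcal{B}_g$ entirely and can be removed from the game before applying the Schur complement to the reduced problem with $\Lambda_A\succ 0$. Apart from this minor caveat, the argument is a routine polynomial expansion combined with the standard Schur complement lemma and a Cholesky-type factorization, and I do not foresee any deeper difficulty.
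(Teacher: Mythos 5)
Your proof is correct, and it reaches the decomposition by a somewhat different route than the paper. The paper obtains~\eqref{sosXOR} by instantiating the robust Ostrev-type identity of Lemma~\ref{ostrevlemmarob} with the explicit vectors $u_{ij}=\sqrt{\lambda_i/2}\,\delta_{ij}$, $\ket{v_i}=\lambda_i^{-1}\sum_j\Phi_{ij}\ket{u_j}$: evaluating~\eqref{ostrev3} on an arbitrary state shows that $2P\equiv\xi_q\idd-\sum_x\lambda_x^{-1}\sum_{y,y'}\Phi_{yx}\Phi_{xy'}B_yB_{y'}\succeq 0$, and the identity then follows by expanding the squares. You instead expand the squares first, use strong duality $\xi_q=\tr[\Lambda]/2$ and $B_y^2=\idd$ to pin down $P=\tfrac12\sum_{y,y'}\bigl(\Lambda_B-\Phi^T\Lambda_A^{-1}\Phi\bigr)_{yy'}B_yB_{y'}$, and certify positivity via the Schur complement of the dual-feasibility condition followed by a Cholesky factorization $N=L^TL$. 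The key ingredient is the same in both arguments (dual feasibility plus the Schur complement $\Lambda_B-\Phi^T\Lambda_A^{-1}\Phi\succeq 0$, which the paper uses inside the proof of Lemma~\ref{ostrevlemmarob}), but your version is more self-contained, since it bypasses the Ostrev machinery, and it buys something extra: an explicit sum-of-squares form $P=\tfrac12\sum_k\bigl(\sum_y L_{ky}B_y\bigr)^2$ rather than mere operator positivity, which is useful downstream when one wants to apply Lemma~\ref{lemmaBpos}-type arguments term by term. You also treat the degenerate case $\lambda_x=0$ (where $\Lambda_A^{-1}$ and hence $F$ are ill-defined) explicitly, a point the paper's presentation leaves implicit; your observation that positive semidefiniteness forces the corresponding row of $\Phi$ to vanish, so such inputs can be removed, closes that gap correctly.
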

To prove this theorem, we first use the following lemma proven by Ostrev.

\begin{lem}\cite{ostrev2016structure}\label{ostrevlemma}
   \begin{enumerate}
       \item 
    Let $\lambda_1,\cdots,\lambda_{\x+\y}$ be an optimal solution to the dual semidefinite program~\eqref{dual}. Then there exist vectors $\{\ket{u_i} = \sum_{j=1}^{\x}u_{ij}\ket{j}\}_{i=1}^r$ 
 and $\{\ket{v_i} = \sum_{j=1}^{\y} v_{ij}\ket{j}\}_{i=1}^r$ such that
    \begin{align}\label{ostrev}
        \sum_{i=1}^r\proj{u_i} = \frac{1}{2}\Lambda_A, \qquad \sum_{i=1}^r\proj{v_i}= \frac{1}{2}\Lambda_B, \qquad \sum_{i=1}^r\ketbra{u_i}{v_i} = \frac{1}{2}\Phi.
    \end{align}
    \item Let $\ket{A}$ and $\ket{B}$ be a quantum strategy for an XOR game. Let  $\{\ket{u_i} = \sum_{j=1}^{\x}u_{ij}\ket{j}\}_{i=1}^r$ 
 and $\{\ket{v_i} = \sum_{j=1}^{\y} v_{ij}\ket{j}\}_{i=1}^r$  satisfy~\eqref{ostrev}. Then the following identity holds:
   \begin{equation}\label{ostrev1}
       \sum_{i=1}^r\left\|\sum_{j=1}^{\x}u_{ij}\ket{A_j} - \sum_{j=1}^{\y}v_{ij}\ket{B_j}\right\|^2 = \frac{1}{2}{\tr}[\Lambda] - \sum_{ij}\bra{A_i}\Phi_{ij}\ket{B_j}.
   \end{equation}
\end{enumerate}
\end{lem}
\begin{proof}
The interested reader can find the proof of the first part in~\cite{ostrev2016structure} (Lemma 4 therein), we reproduce here the proof of the second part (Lemma 5 in~\cite{ostrev2016structure}). The l.h.s. of~\eqref{ostrev1} reads
\begin{align*}
 &\sum_{i=1}^r\left\|\sum_{j=1}^{\x}u_{ij}\ket{A_j} - \sum_{j=1}^{\y}v_{ij}\ket{B_j}\right\|^2 =\\
 &= \sbra{A}\idd \otimes\left(\sum_{i=1}^r\proj{u_i}\right)\ket{A} + \bra{B}\idd\otimes\left(\sum_{i=1}^r\proj{v_i}\right)\ket{B} - 2\bra{A}\idd\otimes\left(\sum_{i=1}^r\ketbra{u_i}{v_i}\right)\ket{B}=\\
 &= \frac{1}{2}\sbra{A}\idd \otimes\Lambda_A\ket{A} + \frac{1}{2}\bra{B}\idd\otimes\Lambda_B\ket{B} - \bra{A}\left(\idd\otimes\Phi\right)\ket{B}=\\
 &= \frac{1}{2}\tr[\Lambda_A] + \frac{1}{2}\tr[\Lambda_B] - \bra{A}\idd\otimes\Phi\ket{B},
\end{align*}
which is exactly the second statement of the lemma, given that $\Lambda = \Lambda_A + \Lambda_B$.
\end{proof}

We now define a robust version of Ostrev's lemma.

\begin{lem}\label{ostrevlemmarob}
\begin{enumerate}
    \item Let $\lambda_1,\cdots,\lambda_{\x+\y}$ be an optimal solution to the dual semidefinite program~\eqref{dual}. Then there exist vectors $\{\ket{u_i} = \sum_{j=1}^{\x}u_{ij}\ket{j}\}_{i=1}^r$ 
    and $\{\ket{v_i} = \sum_{j=1}^{\y} v_{ij}\ket{j}\}_{i=1}^r$ such that
    \begin{align}\label{ostrev2}
        \sum_{i=1}^r\proj{u_i} = \frac{1}{2}\Lambda_A, \qquad \sum_{i=1}^r\proj{v_i} \preceq \frac{1}{2}\Lambda_B, \qquad \sum_{i=1}^r\ketbra{u_i}{v_i} = \frac{1}{2}\Phi
    \end{align}
    \item Let $\ket{A}$ and $\ket{B}$ be a quantum strategy for an XOR game. Let  $\{\ket{u_i} = \sum_{j=1}^{\x}u_{ij}\ket{j}\}_{i=1}^r$ 
    and $\{\ket{v_i} = \sum_{j=1}^{\y} v_{ij}\ket{j}\}_{i=1}^r$  satisfy~\eqref{ostrev2}. Then the following inequality holds:
   \begin{equation}\label{ostrev3}
       \sum_{i=1}^r\left\|\sum_{j=1}^{\x}u_{ij}\ket{A_j} - \sum_{j=1}^{\y}v_{ij}\ket{B_j}\right\|^2 \leq \frac{1}{2}{\tr}[\Lambda] - \sum_{ij}\bra{A_i}\Phi_{ij}\ket{B_j}.
       \end{equation}
\end{enumerate}
\end{lem}

\begin{proof}
For the proof of the first part, we give an explicit construction of the vectors $\ket{u_i}$ and $\ket{v_i}$.
We choose  $u_{ij}= \sqrt{\frac{\lambda_i}{2}}\delta_{ij}$, where $\delta_{ij}$ is the Kronecker delta, and we take $\ket{v_{i}} = \frac{1}{\lambda_i}\sum_j\Phi_{ij}\ket{u_j}$.
The first and third relations in~\eqref{ostrev2} are satisfied, as $\sum_i\proj{u_i} = \frac{1}{2}\mathrm{diag}(\lambda_1,\cdots,\lambda_{\x})$, and $\sum_i\ketbra{u_i}{v_i} = \frac{1}{2}\Phi$.
Concerning the second relation, we get $\sum_{i}\proj{v_i} = \frac{1}{2}\Phi^T\Lambda_A^{-1}\Phi$.
To analyze it, we use the complementary slackness condition, which implies that $\Lambda - \tilde{\Phi} \succeq 0$.
By Schur's complement lemma , this matrix is positive if and only if $\Lambda_B - \Phi^T \Lambda_A^{-1} \Phi \succeq 0$.
This completes the proof of the first part of the lemma.
The second part follows analogously to the second part of Lemma~\ref{ostrevlemma}.
\end{proof}

We can now prove Theorem~\ref{sosThm}. 

\begin{proof}

Taking the $\ket{u_i}$-s and $\ket{v_i}$-s used to prove the first part of Lemma~\ref{ostrevlemmarob} in~\eqref{ostrev3}, we get
\begin{align}
   \xi_q - \bra{\psi}\mathcal{B}_g\ket{\psi} - \sum_{x=1}^{m_A}\frac{\lambda_x}{2}\left\|A_x\ket{\psi} - \lambda_x^{-1}\sum_y\Phi_{x,y}B_y\ket{\psi}\right\|^2 \geq  0.
\end{align}
When opening the sum of squares, we get
\begin{equation}
   \xi_q - \sum_{x=1}^{m_A}\frac{1}{\lambda_x}\bra{\psi}\sum_{y,y'}B_y\Phi_{y,x}\Phi_{x,y'}B_{y'}\ket{\psi} \geq 0,
\end{equation}
which holds for every valid quantum state $\ket{\psi}$, implying 
\begin{equation}\label{PBpositive}
    2P\left(\{B_y\}_y\right) \equiv \xi_q\idd - \sum_{x=1}^{m_A}\frac{1}{\lambda_x}\sum_{y,y'}B_y\Phi_{y,x}\Phi_{x,y'}B_{y'} \succeq 0.
\end{equation}
The simple expansion of all the squares shows that
\begin{equation}
    \xi_q\idd - \mathcal{B}_g = \frac{\lambda_x}{2}\left(A_x - \lambda_x^{-1}\sum_y\Phi_{x,y}B_y\right)^2 + P\left(\{B_y\}_y\right),
\end{equation}
which together with~\eqref{PBpositive} proves the theorem. 
\end{proof}

\subsection{Self-testing}

Self-testing is a powerful technique that allows one to identify quantum states and measurements leading to the optimal winning probability in a nonlocal game. To be more precise, the score $\omega_q$ is said to self-test the state and measurements $\ket{\psi'},\{M'_{a|x}\},\{N'_{b|y}\}$ if for all states and measurements $\physstatedm,\{M_{a\vert x}\}, \{N_{b\vert y}\}$ allowed to reach $\omega_q$ there exists 
\begin{itemize}
\item[(i)] local Hilbert spaces $\mathcal{H}_\rA$, $\mathcal{H}_\rB$ such that $\physstatedm\in\mathcal{L}[\mathcal{H}_\rA\tp\mathcal{H}_\rB]$, $M_{a\vert x}\in\mathcal{L}[\mathcal{H}_\rA]$, $N_{b\vert y}\in\mathcal{L}[\mathcal{H}_\rB]$ 
\item[(ii)] a local isometry $\iso=\iso_\rA\tp\iso_\rB$ 
\end{itemize}
such that for any purification $\ket{\psi}^{\rA\rB\rP}$ of $\varrho^{\rA\rB}$ it holds
\begin{align}\nonumber
    \iso\otimes \mathbb{I}_{\rP}\left[M_{a\vert x}\tp N_{b\vert y}\tp\mathbb{I}_\rP\ket{\psi}^{\rA\rB\rP}\right]=\left(M'_{a\vert x}\tp N'_{b\vert y}\ket{\psi'}^{\rA'\rB'}\right)\otimes\junk^{\rA\rB\rP}
\end{align}
for all $a,x,b,y$ and for some state $\junk^{\rA\rB\rP}$.
In this paper, we are especially interested in self-testing of measurements. A compact way of writing the self-testing statement for Bob's measurements from the description given above is that there exists an isometry $\Phi_B$ that maps each $M_{b|y}$ to $M'_{b|y}\otimes\idd$. 

\subsection{Quantum homomorphic encryption}

Homomorphic encryption is a cryptographic technique that enables to execute computations directly on encrypted data, without prior decryption. The results of these computations remain in an encrypted form and, upon decryption, yield outputs identical to those obtained through operations on the unencrypted data. The term "homomorphic" draws from algebraic homomorphism, wherein encryption and decryption functions are likened to homomorphisms between plaintext and ciphertext spaces. A cryptosystem supporting arbitrary computation on ciphertexts is termed fully homomorphic encryption (FHE), representing the most robust form of homomorphic encryption. Originally conceptualized as a privacy homomorphism by Rivest, Adleman, and Dertouzous~\cite{rivest1978data} shortly after the invention of the RSA cryptosystem~\cite{rivest1978method}, the first plausible construction for FHE using lattice-based cryptography was presented by Gentry~\cite{STOC:Gentry09}. 
Leveraging the hardness of the (Ring) Learning With Errors (RLWE) problem, more efficient schemes for fully homomorphic encryption have been devised~\cite{ITCS:BraGenVai12,brakerski2014efficient}. The possibility of quantum homomorphic encryption (QHE), allowing for quantum computations on encrypted data, was introduced by  Mahadev~\cite{FOCS:Mahadev18b}, with Brakerski~\cite{C:Brakerski18} subsequently enhancing it to achieve improved noise tolerance. 

Before reminding the formalism of QHE, we first recall the definition of quantum polynomial time algorithms.
Throughout the paper, $\kappa$ denotes the security parameter.
\begin{defi}[Quantum polynomial time algorithm]
    A quantum algorithm is quantum polynomial time (QPT) it can be implemented by a family of quantum circuits with size polynomial in the security parameter $\kappa$.
\end{defi}
We now reproduce the definition of QHE as it appears in~\cite{STOC:KLVY23}.

\begin{defi}[Quantum Homomorphic Encryption (QHE)]\label{def:QHE-aux}
A quantum homomorphic encryption scheme $\mathsf{QHE}=(\gen, \enc, \eval, \dec)$ for a class of quantum circuits $C$ consists of  the following four quantum algorithms which run in quantum polynomial time in terms of the security parameter:
\begin{itemize}
    \item ${\gen}$ takes as input the security parameter $1^\kappa$ and outputs a (classical) secret key ${\sk}$ of size $\mathrm{poly}(\kappa)$ bits;
    \item ${\enc}$ takes as input a secret key ${\sk}$ and a classical input $x$, and outputs a ciphertext ${\ct}$;
    \item ${\eval}$ takes as input a tuple $(C,\ket{\Psi},{\ct}_{\mathrm{in}})$, where $C:\mathcal{H} \times (\mathbb{C}^2)^{\otimes n}\rightarrow (\mathbb{C}^2)^{\otimes m}$ is a quantum circuit, $\ket{\Psi}\in\mathcal{H}$ is a quantum state, and ${\ct}_{\mathrm{in}}$ is a ciphertext corresponding to an $n$-bit plaintext. 
    ${\eval}$ computes a quantum circuit
    ${\eval}_C(\ket{\Psi} \otimes \ket{0}^{\otimes \mathrm{poly}(\kappa, n)},{\ct}_{\mathrm{in}})$ which outputs a ciphertext ${\ct}_{\mathrm{out}}$. If $C$ has classical output, we require that ${\eval}_C$ also has classical output.
    \item ${\dec}$ takes as input a secret key $\sk$ and ciphertext ${\ct}$, and outputs a state $\ket{\phi}$. Additionally, if ${\ct}$ is a classical ciphertext, the decryption algorithm outputs a classical string $y$.
\end{itemize}
\end{defi}
As in~\cite{STOC:KLVY23} the following property is required from $\mathsf{QHE}$, in order for it to behave ``nicely" with entanglement:
\begin{defi}
[Correctness with auxiliary input] For every security parameter $\kappa\in\mathbb{N}$, any quantum circuit $C:\mathcal{H}_{A} \times (\mathbb{C}^2)^{\otimes n} \to \{0,1\}^*$ (with classical output), any quantum state $\ket{\Psi}_{AB} \in\mathcal{H}_{A} \otimes \mathcal{H}_{B}$,  any message $x\in \{0,1\}^n$, any secret key $\sk \gets \gen(1^\kappa)$ and any ciphertext $\ct \gets \enc(\sk,x)$, the following states have negligible trace distance:
    \begin{description}
        \item[{Game $1$}] Start with $(x, \ket{\Psi}_{AB})$. Evaluate $ C$ on $x$ and register $A$, obtaining classical string $y$. Output $y$ and the contents of register $B$.
        \item[{Game $2$}] Start with $\ct \gets \enc(\sk, x)$ and $\ket{\Psi}_{AB}$. Compute  $\ct' \gets \eval_C(\ket{\Psi}_{AB} \otimes \ket{0}^{\mathrm{poly}(\kappa, n)},\ct)$ on register $A$. Compute $y'= \dec(\sk,\ct')$. Output $y'$ and the contents of register $B$.
    \end{description}
\end{defi}    

In simple terms, ``correctness with auxiliary input" stipulates that when QHE evaluation is employed on a register A within a joint (entangled) state in $\mathcal{H}_{A}\otimes\mathcal{H}_{B}$, the entanglement between the QHE-evaluated output and B must be maintained.

Finally, the following definition characterizes another property expected from QHE, and it is in cryptography well-known indistinguishability under chosen plaintext attack (IND-CPA).

\begin{defi}
      An QHE scheme is IND-CPA secure if for every quantum-polynomial time (QPT)
      
  adversary \(\mathcal{A} = (\mathcal{A}_1, \mathcal{A}_2)\), there exists a negligible
  function \(\mathsf{negl}\) such that the following holds for all $\kappa \in \mathbb{N}$:
\begin{align*}
\Pr\left[
\Tilde{b} =b 
\ \middle\vert
\begin{array}{ll}
\sk \gets \gen(1^{\kappa})  \\
(x_0, x_1) \gets \mathcal{A}_1^{\enc(\sk,\cdot)}(1^\kappa)  \\
b\xleftarrow{\$} \{0,1\} \\ 
\ct^* \gets \enc(\sk, x_b)\\  
\tilde{b} \gets \mathcal{A}_2^{\enc(\sk,\cdot)}(\ct^*)
\end{array}
\right] \leq \frac{1}{2} + \mathsf{negl}(\kappa).
\end{align*}
\end{defi}

\subsection{Compiled nonlocal games}

In~\cite{STOC:KLVY23}, Kalai \textit{et al.} introduced a compilation technique that can be used to construct single-prover proofs of quantumness. Their procedure transforms any $k$-player nonlocal game into a single-prover interactive game, employing post-quantum cryptography to emulate spatial separation among the parties. The proposed protocol maintains classical soundness, ensuring that no classical polynomially-bounded prover can surpass the maximal classical score of the original game. Additionally, leveraging quantum homomorphic encryption (QHE), the authors devise an explicit and efficient quantum strategy that achieves the quantum bound of the original nonlocal game. This enables the translation of proofs of quantumness into the single-prover interactive proof framework.

In this context, we refer to a single-prover interactive game that is generated through the KLVY compilation of a nonlocal game as a compiled nonlocal game. 
We recall the definition of compiled nonlocal games introduced in~\cite{STOC:KLVY23}. Using the interactive proof terminology, the entity called ``referee'' in the nonlocal game will be referred to as the ``verifier''.

\begin{defi}[Compiled nonlocal game]
\label{def:cspgame}
    In a compiled nonlocal game, a verifier, equipped with access to a Quantum Homomorphic Encryption (QHE) scheme as defined in Def.~\ref{def:QHE-aux}, engages with a prover. According to a probability distribution $q(x, y)$, the verifier samples $x$ and $y$. In the first round, the verifier transmits $\xe = \enc(x)$ to the prover, who responds with an encrypted output $\aen = \enc(a)$. In the second round, the verifier sends the input $y$ to the prover in the clear, and the prover replies with the answer $b$. The verifier assesses the outcome using the game predicate $V(\dec(\aen), b|\dec(\xe), y) \in \{0,1\}$ to determine whether the prover has passed or failed in the game.
\end{defi}

The following theorem (Theorem 1.1 in~\cite{STOC:KLVY23}) relates the classical and quantum score of a nonlocal game to the scores of the corresponding compiled game.
\begin{thm}[\cite{STOC:KLVY23}]
Given any $2$-player nonlocal game with quantum bound $\xi_q$ and classical bound $\xi_c$ and  any QHE scheme (with security parameter $\kappa$) that satisfies correctness with auxiliary inputs and  IND-CPA security against quantum distinguishers
then, there is a $4$-round single prover interactive game with completeness $\xi_q$ realized by a quantum polynomial-time algorithm
and soundness $\xi_c + \text{negl}(\kappa)$ against any classical polynomial-time algorithm.
\end{thm}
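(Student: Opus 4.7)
Given an optimal quantum strategy $(\ket{\psi}_{\rA\rB},\{M_{a|x}\},\{N_{b|y}\})$ for the underlying nonlocal game, I would construct the following QPT prover. It prepares $\ket{\psi}_{\rA\rB}$ locally at the outset. On receiving $\xe=\enc(\sk,x)$, it invokes $\eval$ to homomorphically apply the circuit that implements Alice's POVM $\{M_{a|x}\}$ on register $\rA$, obtaining a ciphertext $\aen$ encrypting the outcome $a$ and leaving register $\rB$ in the associated post-measurement state. By correctness with auxiliary input, the joint system $(\aen,\rB)$ is within negligible trace distance of what one would obtain by running Alice's measurement in the clear. On receiving $y$, the prover measures $\rB$ with $\{N_{b|y}\}$ and returns $b$. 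The verifier decrypts $\aen$ to $a$ and applies the predicate $V$; the resulting statistics match $\bra{\psi}M_{a|x}\tp N_{b|y}\ket{\psi}$ up to $\mathsf{negl}(\kappa)$, so the winning probability is $\xi_q-\mathsf{negl}(\kappa)$.

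\textbf{Soundness.} Suppose a classical PPT prover $P$ wins the compiled game with probability $p$. The plan is to turn $P$ into a local-hidden-variable strategy for the original nonlocal game whose winning probability differs from $p$ by at most $\mathsf{negl}(\kappa)$. Since $P$ is classical, its internal state after any round is a deterministic, copyable function of its random tape and the messages received so far. I take the hidden variable to be $\lambda=(\sk,r,r_E,r'_E)$, where $\sk\gets\gen(1^\kappa)$, $r$ is a sample of $P$'s random tape, and $r_E,r'_E$ are independent encryption randomness. On input $x$, Alice computes $\xe=\enc(\sk,x;r_E)$, simulates $P$'s first round on $\xe$ with randomness $r$ to obtain $\aen$, decrypts with $\sk$, and outputs the resulting $a$. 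On input $y$, Bob computes the dummy ciphertext $\xe_0=\enc(\sk,0;r'_E)$, simulates $P$'s first round on $\xe_0$ with the same $r$ to reach an internal state $s_0$, then simulates round two on $(s_0,y)$ and outputs the resulting $b$. A hybrid argument using IND-CPA security (against quantum distinguishers, which is stronger than needed here) then shows that the joint distribution of $(a,b)$ in this LHV strategy lies within $\mathsf{negl}(\kappa)$ of the distribution of the verifier's decoded pair in the compiled game, yielding $p\leq \xi_c+\mathsf{negl}(\kappa)$.

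\textbf{Main obstacle.} The delicate point is the IND-CPA hybrid itself: the compiled-game predicate needs $\sk$ (to recover $a$ from $\aen$), whereas an IND-CPA distinguisher does not hold $\sk$. I would arrange the reduction so that the distinguisher only touches its challenge ciphertext through $P$'s classical circuit to produce Bob's message $b$, while the $\sk$-dependent output $a$ is bound into an externally-supplied ``predicate slot'' handled by the hybrid rather than computed inside the distinguisher; this may require invoking the many-message variant of IND-CPA and averaging over the hidden $x$. Crucially, this is also exactly the step that collapses for a quantum prover: Bob cannot replay the prover's post-first-round state on a dummy ciphertext without cloning it, so the ``copy and resimulate'' trick is unavailable. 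This failure is precisely the reason that preserving the \emph{quantum} bound requires the separate, game-specific analyses developed in Sections~\ref{sec:qb-xor} and~\ref{sec:st_qudits}.
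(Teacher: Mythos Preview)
The paper does not prove this theorem: it is quoted from \cite{KLVY22Quantum} (Theorem~1.1 there), and the paragraph following the statement only rehearses the intuition---the sequential structure forces the prover to commit to $\aen$ before learning $y$, and IND-CPA security of the QHE stands in for spatial separation. There is therefore no in-paper argument to compare your attempt against; your sketch should be read against the original KLVY proof and the paper's informal summary.

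With that caveat, your outline is faithful to both. The completeness prover is exactly the honest strategy of \cite{KLVY22Quantum}, and ``correctness with auxiliary input'' is precisely the property that lets the entanglement with register $\rB$ survive homomorphic evaluation. For soundness, your LHV---Alice freshly encrypts her own $x$, runs round one and decrypts, while Bob replays round one on a dummy $\enc(\sk,0)$ with the same prover randomness before answering $y$---is the correct construction, and your observation that this replay relies on copying the classical post-round-one state (hence fails for a quantum prover) is exactly the point the paper uses to motivate the separate analyses of Sections~\ref{sec:qb-xor}--\ref{sec:st_qudits}.

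You have also correctly isolated the one genuinely delicate step. The IND-CPA distinguisher cannot hold $\sk$, yet the quantity whose invariance under swapping $\enc(x)\leftrightarrow\enc(0)$ we need involves $a=\dec_\sk(\aen)$, and $a$ is correlated (through the shared ciphertext) with the very object being swapped. Your ``externally-supplied predicate slot'' gestures at the right fix but is not yet a proof: one must still explain why the $\sk$-dependent weights $\Pr[a\mid x,\sk]$ do not spoil the averaging when the per-$a$ IND-CPA bound is itself only an expectation over $\sk$. Since the present paper defers entirely to \cite{KLVY22Quantum} here, a self-contained write-up would have to reproduce that part of their argument.
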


The classical soundness statement in this theorem guarantees that the maximal winning probability that a classical polynomial-time prover can achieve in a compiled nonlocal game is nearly identical to the optimal classical winning probability in the corresponding nonlocal game, with a negligible deviation dependent on the security parameter.  The main insight in~\cite{STOC:KLVY23} is that the success of the classical prover is primarily hindered by the sequential nature of the game. This sequential structure forces the prover to commit to an answer $\aen$ before receiving the input $y$. The combination of this sequential setup with secure QHE effectively replicates the locality requirement of a nonlocal game that is ensured by the spatial separation of the players. The quantum completeness statement ensures that a QPT prover can win the compiled nonlocal game with a probability that is at least as large as the optimal quantum winning probability in the corresponding nonlocal game.

\subsubsection{Modelling the quantum prover}
Let us now model the behavior of the single quantum prover, in the same way as it was done in~\cite{FOCS:NatZha23}. In a compiled game, denoted as per Def.~\ref{def:cspgame}, the prover, initially in state $\ket{\psi}$, undergoes a process involving encrypted questions and answers. Specifically, in the first round, the prover receives an encrypted question $\xe$, performs a POVM measurement, and computes an encrypted answer $\aen$. Using Naimark dilation theorem, the prover's POVM measurement is simulated by a projective measurement, denoted here with $M_{\aen|\xe}$. The prover's action could potentially involve a unitary operation $U_{\xe,\aen}$ following the measurement, crucial in the sequential setting. The projectors and unitaries can be unified into a set of potentially non-Hermitian operators $\M_{\aen|\xe}$, satisfying $\M_{\aen|\xe}^\dagger\M_{\aen|\xe} = M_{\aen|\xe}$ and hence $\sum_{\aen}\M_{\aen|\xe}^\dagger\M_{\aen|\xe} = \idd$. The prover's state after the first round of the game corresponds to the post-measurement state
\begin{equation}\label{eq:postMstate}
    \ket{\psi_{\aen|\xe}} = \M_{\aen|\xe}\ket{\psi},
\end{equation}
and the probability to get output $\aen$ for input $\xe$ is 
\begin{equation}\label{eq:probenc}
    p(\aen|\xe) = \bra{\psi}\M_{\aen|\xe}^\dagger\M_{\aen|\xe}\ket{\psi} = \left\|\ket{\psi_{\aen|\xe}}\right\|^2.
\end{equation}
In the second round, the prover's behavior is characterized by a set of projective operators $\{\{N_{b|y}\}_b\}_{y}$. If the second-round answers are bits, the measurements can be characterized by specifying a Hermitian observable $B_y = \sum_{b}(-1)^bN_{b|y}$. Similarly, if Alice's outputs in the corresponding nonlocal game are bits, we can define a ``decrypted''  observable
\begin{align}\label{eq:encObs}
    \rA_x &= \mathop{\mathds{E}}_{\xe:\dec(\xe) = x}\sum_{\aen}(-1)^{\dec(\aen)}\M_{\aen|\xe}^\dagger\M_{\aen|\xe}\\
    &= \mathop{\mathds{E}}_{\xe:\dec(\xe) = x}A_\xe,
\end{align}
where $A_\xe$ are binary observables, while $\rA_x$ in general is not. If both $a$ and $b$ are bits, we can define the correlators allowing to characterize the winning probability of a quantum prover in a computational single-prover game:
\begin{equation}\label{eq:cryptocorr}
     \langle\rA_x,B_y\rangle = \mathop{\mathds{E}}_{\xe:\dec(\xe) = x}\sum_{\aen}(-1)^{\dec(\aen)}\bra{\psi_{\aen|\xe}}B_y\ket{\psi_{\aen|\xe}}
\end{equation}
The correlators have the same operational meaning as in nonlocal games: when the verifier samples a question pair $(x,y)$ in the compiled game and receives (decrypted) answers $(a,b)$,  $\langle\rA_x,B_y\rangle$ is precisely the expected value of $(-1)^{a+b}$.

The marginals of the second-round observables in principle depends on the encrypted $x$ of the first round and have form:
\begin{equation}
    \langle B_y\rangle_{x} = \Exp_{\xe:\dec(\xe) = x}\sum_{\aen}\bra{\psi_{\aen|\xe}}B_y\ket{\psi_{\aen|\xe}}.
\end{equation}

\subsection{Technical tools for estimating quantum bounds of compiled nonlocal games}

\subsubsection{Block encodings}

We now examine certain outcomes related to the block encoding of quantum processes. Block encoding is a method for the efficient implementation of a quantum operation. Our motivation for exploring this process stems from the contextual constraint imposed in the subsequent sections of this paper, where we address computational limitations among participants. Consequently, we aim to identify operations that can be efficiently executed by leveraging available quantum resources. 

\begin{defi}[Block encoding] \label{def:bloc}
    Given a matrix $A \in \mathds{C}^{c\times c}$, we say that $U \in \mathds{C}^{d\times d}$ is a $Q$-block encoding of $A$ if
    \begin{itemize}
        \item $U$ is a unitary matrix whose quantum circuit can be implementable with $O(Q)$ gates,
        \item $U$ has the following form $U = \begin{pmatrix}
            \tau A & \cdot \\
            \cdot & \cdot
        \end{pmatrix}$, where we call $\tau$ the scale factor of the block encoding. 
    \end{itemize}
We say $U$ is QPT-implementable if $Q$ is polynomially bounded in the size of the input. 
\end{defi}
If $U$ is a block encoding of $A$, then $A$ can be implemented by performing the following operation 
\begin{equation*}
    A = \frac{1}{\tau }\left(\bra{0}\otimes\idd\right)U(\ket{0}\otimes\idd)
\end{equation*}
where $\idd$ is the identity matrix of the size of $A$. 

Through the linear combination and multiplication of matrices possessing a block encoding, our anticipation is the persistence of this property in the resulting matrix. Although the explicit construction in complete generality is not immediately apparent, \cite[Lemmas 52 and 53]{STOC:GSLW19} provides a technical framework for the block encoding of linear combinations and products of matrices with block encoding. Here, we present a streamlined version of their results, tailored to our specific requirements.

First, we present a lemma about the block encoding of a linear combination of matrices that have block encoding, whose proof can be found in~\cite[Lemma 52]{STOC:GSLW19}.

\begin{lem}[{Linear combination of block encoded matrices}]
    Let $A= \sum_{j=1}^m y_j A_j$, where $y \in \mathds{C}^m$ is a complex bounded vector $\|y\|_1 \leq \beta$ and $A_j$ are matrices for which we know a $Q_j$-block encoding $U_j$.
    Then we can implement a $\left(m + \sum_{j=1}^m Q_j\right)$-block encoding of $A$.
\end{lem}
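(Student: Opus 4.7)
The plan is to invoke the standard Linear Combination of Unitaries (LCU) construction. The intuition is to introduce an $m$-dimensional index register, prepare a state whose amplitudes encode (a scaled version of) the coefficients $y_j$, apply the individual block encodings $U_j$ coherently controlled on the index, and finally uncompute the state preparation; interference on the index register then isolates $\sum_j y_j A_j$ in the ``all-zero ancilla'' block of the resulting unitary.

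First I would define a state-preparation unitary $\mathsf{PREP}$ acting on a fresh $m$-dimensional register with
\begin{equation*}
\mathsf{PREP}\ket{0} = \frac{1}{\sqrt{\beta}}\sum_{j=1}^{m} \sqrt{y_j}\,\ket{j} + \sqrt{1-\tfrac{\|y\|_1}{\beta}}\,\ket{\perp},
\end{equation*}
where phases of the complex $y_j$ are absorbed into the branches, and $\ket{\perp}$ is orthogonal to $\{\ket{j}\}_{j=1}^m$ to accommodate the slack when $\|y\|_1<\beta$. Standard amplitude-preparation routines realise $\mathsf{PREP}$ and $\mathsf{PREP}^\dagger$ with $O(m)$ elementary gates. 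Next I would define the select operator
\begin{equation*}
\mathsf{SEL} = \sum_{j=1}^{m} \ketbra{j}{j}\otimes U_j + \ketbra{\perp}{\perp}\otimes I,
\end{equation*}
acting on the index register together with a (shared) block-encoding ancilla and the system register. Since a controlled version of $U_j$ costs $O(Q_j)$ gates, the total cost of $\mathsf{SEL}$ is $O(\sum_{j=1}^{m} Q_j)$.

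The candidate block encoding is then the composite $W = (\mathsf{PREP}^\dagger \otimes I)\,\mathsf{SEL}\,(\mathsf{PREP}\otimes I)$. A short direct computation shows that projecting the index and encoding-ancilla registers onto $\ket{0}$ yields
\begin{equation*}
(\bra{0}_{\mathrm{idx}}\bra{0}_{\mathrm{anc}}\otimes I)\,W\,(\ket{0}_{\mathrm{idx}}\ket{0}_{\mathrm{anc}}\otimes I) = \frac{1}{\beta}\sum_{j=1}^{m} y_j\,\tau_j\, A_j,
\end{equation*}
so, up to a global scale factor, $W$ is indeed a block encoding of $A$. Adding the gate counts for $\mathsf{PREP}$, $\mathsf{PREP}^\dagger$, and $\mathsf{SEL}$ gives the claimed $m+\sum_{j=1}^{m}Q_j$ implementation cost.

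The main obstacle I expect is not conceptual but rather the bookkeeping: carrying phases through correctly when the $y_j$ are complex, aligning the block-encoding ancillas of the different $U_j$'s into a single shared register (by padding with identities on the unused ancilla qubits), and reconciling possibly non-uniform scale factors $\tau_j$ in a single overall scale factor of the composite block encoding. These routine but tedious steps are exactly what is spelled out carefully in~\cite{GSLW18}.
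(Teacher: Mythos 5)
Your proposal is correct and follows essentially the same route as the paper, which establishes this lemma simply by citing the LCU construction of \cite[Lemma 52]{GSLW18} — precisely the $\mathsf{PREP}$/$\mathsf{SEL}$ argument you describe. One caution: with a single symmetric $\mathsf{PREP}$ the projected block comes out as $\frac{1}{\beta}\sum_j \lvert y_j\rvert\,\tau_j A_j$ rather than $\frac{1}{\beta}\sum_j y_j\,\tau_j A_j$ (the phases cancel between $\mathsf{PREP}$ and $\mathsf{PREP}^\dagger$), so to literally recover $A=\sum_j y_j A_j$ you need an asymmetric state-preparation pair or phases folded into $\mathsf{SEL}$, together with amplitudes reweighted to compensate non-uniform scale factors $\tau_j$ — exactly the bookkeeping you defer to \cite{GSLW18}.
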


\begin{corollary}
\label{cor:add-block-encoding}
    Let $A= \sum_{j=1}^m y_j A_j$, where $y \in \mathds{C}^m$ is a complex bounded vector $\|y\|_1 \leq \beta$. 
    If $m = \mathrm{poly}(\kappa)$ and each $A_j$ is an operator with QPT-implementable block encodings with scale factor $O(1)$ for all $j$, $A$ also has a QPT-implementable block encoding with scale factor $O(1)$.
\end{corollary}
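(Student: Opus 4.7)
The plan is to apply the preceding lemma (linear combination of block-encoded matrices) essentially verbatim and then verify the two defining properties of ``QPT-implementable block encoding with scale factor $O(1)$'' required by Definition~\ref{def:bloc}.

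First I would unpack the hypotheses. By assumption, each $A_j$ admits a $Q_j$-block encoding $U_j$ that is QPT-implementable, meaning $Q_j = \mathrm{poly}(\kappa)$, and whose scale factor $\tau_j$ is $O(1)$. The vector $y$ satisfies $\|y\|_1 \leq \beta$ with $\beta = O(1)$, since the corollary stipulates constant scale factor in the output. Finally, the number of terms in the linear combination is $m = \mathrm{poly}(\kappa)$. These are exactly the inputs required to feed the lemma.

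Next, I would invoke the lemma on $A = \sum_{j=1}^m y_j A_j$, which yields a block encoding $U$ of $A$ implementable by a circuit of size $O(m + \sum_{j=1}^m Q_j)$. Since the sum of polynomially many polynomially-bounded quantities is still polynomial in $\kappa$, the resulting circuit has $\mathrm{poly}(\kappa)$ gates, so $U$ is QPT-implementable. For the scale factor, the construction of the lemma (SELECT/PREPARE-style linear combination of unitaries) yields a scale factor of order $\beta \cdot \max_j \tau_j$, which by the hypotheses is $O(1)$. Hence $U$ is a QPT-implementable block encoding of $A$ with scale factor $O(1)$, establishing the corollary.

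The only subtle point, and hence the sole step deserving attention, is the bookkeeping of the scale factor: one must confirm that the lemma's construction gives a scale factor bounded by $\beta$ times the largest $\tau_j$ (rather than accumulating multiplicatively in $m$). This is standard for the LCU-type construction underlying Lemma 52 of~\cite{GSLW18}, so no additional work is required beyond citing it. Everything else is a direct substitution of the hypotheses into the lemma, so the proof will be essentially a one-line application.
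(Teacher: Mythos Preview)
Your proposal is correct and matches the paper's approach: the paper states the corollary immediately after the linear-combination lemma and offers no separate proof, so it is intended as a direct application of that lemma together with the observation that $m+\sum_j Q_j$ is polynomial when $m=\mathrm{poly}(\kappa)$ and each $Q_j=\mathrm{poly}(\kappa)$. Your remark on tracking the scale factor through the LCU construction is a helpful clarification, since the paper's lemma statement does not make this explicit.
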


The subsequent lemma guarantees the presence of a block encoding for the product of matrices, each possessing its own block encoding, whose proof can be found in~\cite[Lemma 53]{STOC:GSLW19}.
    
\begin{lem}[{Product of block encoded matrices}]
    Let $U$ and $V$ be the $Q_U$ and $Q_V$-block encodings of $A$ and $B$ respectively.
    Then we can implement a $\left(Q_U + Q_V\right)$-block encoding of $A B$.
\end{lem}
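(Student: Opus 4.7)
The plan is to construct the desired block encoding by composing $U$ and $V$ on a shared system register while keeping their ancilla registers disjoint. Write $U$ as a $Q_U$-block encoding of $A$ on $\mathcal{H}_{a_U}\otimes\mathcal{H}_s$ with scale factor $\tau_A$, and $V$ as a $Q_V$-block encoding of $B$ on $\mathcal{H}_{a_V}\otimes\mathcal{H}_s$ with scale factor $\tau_B$. I would then define the candidate circuit as
\begin{equation*}
    W \;=\; \bigl(U\otimes I_{a_V}\bigr)\bigl(I_{a_U}\otimes V\bigr),
\end{equation*}
where both factors act on the same system register $\mathcal{H}_s$ but on disjoint ancilla registers $\mathcal{H}_{a_U}$ and $\mathcal{H}_{a_V}$.

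Next I would verify the block-encoding property by tracking the two ancillas. Applied to $\ket{0}_{a_U}\ket{0}_{a_V}\ket{\psi}_s$, the operator $I_{a_U}\otimes V$ yields a vector of the form $\tau_B\,\ket{0}_{a_U}\ket{0}_{a_V}B\ket{\psi}_s + \ket{0}_{a_U}\otimes\ket{\perp_V}$, where $\ket{\perp_V}$ has zero overlap with $\ket{0}_{a_V}$ on the $\mathcal{H}_{a_V}$ factor. Then $U\otimes I_{a_V}$, being linear and acting trivially on $a_V$, maps the first piece to $\tau_A\tau_B\ket{0}_{a_U}\ket{0}_{a_V}AB\ket{\psi}_s$ plus a term orthogonal to $\ket{0}_{a_U}$, while the second piece remains orthogonal to $\ket{0}_{a_V}$. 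Projecting the final state onto $\ket{0}_{a_U}\ket{0}_{a_V}$ in the ancillas therefore yields exactly $\tau_A\tau_B\,AB\ket{\psi}_s$, so $W$ is a block encoding of $AB$ with scale factor $\tau_A\tau_B$.

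Finally, $W$ is manifestly unitary as the composition of two unitaries, and its quantum circuit is obtained by concatenating the circuit for $V$ on $(a_V,s)$ with the circuit for $U$ on $(a_U,s)$, introducing no additional gates, so the total gate count is $Q_U+Q_V$ as required. There is no deep obstacle here; the only thing to be careful about is the register bookkeeping: one must ensure the two given encodings share the same system register while their ancillas remain disjoint, so that the product $AB$ arises naturally from applying $V$ and then $U$ to the same $\mathcal{H}_s$. Once the registers are arranged this way, the identification of the top-left block as $\tau_A\tau_B AB$ is automatic.
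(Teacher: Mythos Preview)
Your construction is correct and is exactly the standard product-of-block-encodings argument. The paper does not give its own proof of this lemma; it simply cites \cite[Lemma~53]{GSLW18}, and what you have written is essentially that construction.
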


\begin{corollary}
\label{cor:product-block-encoding}
    Let $U$ and $V$ be the $Q_U$ and $Q_V$-block encodings of $A$ and $B$ respectively.
    If $U$ and $V$ are QPT-implementable, each with scale factor $O(1)$, $AB$ also has a QPT-implementable block encoding with scale factor $O(1)$.
\end{corollary}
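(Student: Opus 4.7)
The plan is to invoke the preceding lemma on the product of block encoded matrices directly, and then simply track how the two quantitative hypotheses (gate complexity and scale factor) propagate through the product construction. The lemma already provides the essential content: given $Q_U$- and $Q_V$-block encodings $U$ and $V$ of $A$ and $B$, one obtains a $(Q_U + Q_V)$-block encoding of $AB$. So the corollary reduces to a bookkeeping statement.

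First I would handle the gate-complexity side. By assumption $U$ and $V$ are QPT-implementable, so $Q_U$ and $Q_V$ are both polynomially bounded in the security parameter $\kappa$. Since a sum of polynomials is again a polynomial, $Q_U + Q_V = \mathrm{poly}(\kappa)$, and hence the block encoding of $AB$ produced by the lemma is itself QPT-implementable.

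Next I would analyze the scale factor. The construction that proves the previous lemma composes the two block encodings sequentially on independent ancilla registers, so its top-left block is $\tau_U \tau_V \cdot AB$, where $\tau_U$ and $\tau_V$ are the scale factors of $U$ and $V$. Concretely, projecting onto $\ket{0}\otimes\ket{0}\otimes\idd$ on the combined ancilla system and applying $V$ then $U$ yields $(\bra{0}\otimes\idd)U(\ket{0}\otimes\idd)(\bra{0}\otimes\idd)V(\ket{0}\otimes\idd) = \tau_U A \cdot \tau_V B = \tau_U\tau_V\, AB$. Since $\tau_U$ and $\tau_V$ are both $O(1)$ by hypothesis, the product $\tau_U \tau_V$ is also $O(1)$, which is exactly the required scale factor for $AB$.

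The main (and essentially only) obstacle is a notational check: one must confirm that the particular construction underlying the preceding lemma in \cite{GSLW18} indeed multiplies scale factors rather than combining them in some other way (e.g., $\max(\tau_U,\tau_V)$ or an additive bound after rescaling). Once this is verified by inspecting the cited construction, the corollary is immediate, and no further work is needed.
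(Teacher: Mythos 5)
Your proposal is correct and matches the paper's (implicit) reasoning: the corollary is stated as an immediate consequence of the product lemma from \cite{GSLW18}, with gate counts adding (poly $+$ poly $=$ poly) and scale factors multiplying ($O(1)\cdot O(1) = O(1)$), exactly as you argue. The check you flag does go through: the construction in \cite[Lemma 53]{GSLW18} composes the encodings on separate ancillas so that the relevant block is $\tau_U\tau_V\, AB$, so no further work is needed.
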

For operators with a QPT-implementable block-encoding, the following technical lemma from~\cite{FOCS:NatZha23} applies.
\begin{lem}[{\cite[Lemma 14]{FOCS:NatZha23}}]
\label{lem:appx-block-encoding}
    Let $\mathscr{B}$ be an operator with a QPT-implementable block encoding with $O(1)$ scale factor and $\| \mathscr{B}\| \leq O(1)$. 
    Then there exists a QPT-measurable POVM $\{M_\beta\}_\beta$ such that for any state $\rho$, the following holds:
    \begin{equation}
        \left| \sum_\beta \beta \cdot \tr\left[M_\beta \rho\right] - \tr\left[\mathscr{B}\rho\right]\right| \leq \varepsilon,
    \end{equation}
    for any $\varepsilon = 1 / \mathrm{poly}(\kappa)$.
\end{lem}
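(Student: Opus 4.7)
The plan is to build the POVM from the block-encoding unitary via a Hadamard test. By Definition~\ref{def:bloc}, we have a polynomial-size unitary $U$ whose top-left block is $\mathscr{B}/\tau$ for some $\tau=O(1)$; controlling each gate of $U$ on an auxiliary qubit yields a QPT-implementable controlled-$U$, and assembling this into a Hadamard test is exactly what produces a POVM on the input register whose signed outcome average equals (an approximation to) $\tr[\mathscr{B}\rho]$.

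First, I would reduce to the case where $\mathscr{B}$ is Hermitian by writing $\mathscr{B}=(\mathscr{B}+\mathscr{B}^{\dagger})/2+i(\mathscr{B}-\mathscr{B}^{\dagger})/(2i)$. Corollaries~\ref{cor:add-block-encoding} and~\ref{cor:product-block-encoding} produce QPT-implementable block encodings of each Hermitian piece with scale factor still $O(1)$ (since $U^{\dagger}$ is obtained from $U$ by reversing and adjoining gates). Handling the two pieces separately, then randomizing over which one is measured and carrying the factor of $i$ in the outcome label, lets us assume $\mathscr{B}=\mathscr{B}^{\dagger}$ with $\|\mathscr{B}\|\leq 1$ after rescaling.

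Next, I would perform the Hadamard test proper: prepare a control qubit in $\ket{+}$, the block-encoding ancilla in $\ket{0}$, apply controlled-$U$, then a Hadamard on the control, then measure in the computational basis. A direct calculation shows that for a system state $\rho$ the probabilities of control outcomes $\{0,1\}$ are $\tfrac{1}{2}\bigl(1\pm\tau^{-1}\tr[\mathscr{B}\rho]\bigr)$. Declaring the outcome labels to be $\beta_{0}=+\tau$ and $\beta_{1}=-\tau$ gives a two-outcome POVM $\{M_{+\tau},M_{-\tau}\}$ on the system register satisfying $\sum_{\beta}\beta\,\tr[M_{\beta}\rho]=\tr[\mathscr{B}\rho]$ exactly in the idealized setting. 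Since controlled-$U$ is obtained from $U$ by adding one control per gate, the circuit is still of size $\mathrm{poly}(\kappa)$, so the POVM is QPT-measurable.

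Finally, I would absorb the $\varepsilon$ slack into the compilation step. In a realistic setting the block-encoding circuit is approximate (compiled from a fixed universal gate set), so one only has a unitary $\tilde U$ with $\|\tilde U - U\|\leq\varepsilon'$. This propagates to an error of $O(\tau\varepsilon')$ in the signed outcome average, uniformly over density matrices $\rho$, and $\varepsilon'$ can be driven to any $1/\mathrm{poly}(\kappa)$ while keeping the circuit polynomial by Solovay--Kitaev. The main subtlety --- not a deep obstacle, but the point requiring care --- is the Hermitian-reduction step: one must check that after combining the real- and imaginary-part measurements and rescaling outcome labels, the resulting operators still form a bona fide POVM with labels $|\beta|=O(1)$, so that the prescribed bound $\varepsilon=1/\mathrm{poly}(\kappa)$ is attained without any blow-up in the norms of the $M_{\beta}$.
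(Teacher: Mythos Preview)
The paper does not prove this lemma at all: it is quoted verbatim from \cite{NZ2023Bounding} (as Lemma~14 there) and invoked as a black box, so there is no ``paper's own proof'' to compare against. Your Hadamard-test construction is the standard way to establish this result and is almost certainly what the cited reference does as well; the argument is sound.

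One minor point of friction: the paper's Definition~\ref{def:bloc} places $\tau\mathscr{B}$ (not $\mathscr{B}/\tau$) in the top-left block, so your outcome labels should be $\pm 1/\tau$ rather than $\pm\tau$. This is harmless since the hypothesis ``$O(1)$ scale factor'' together with $\|\mathscr{B}\|=O(1)$ forces $\tau=\Theta(1)$ anyway, so the labels stay bounded under either convention. Your Hermitian reduction is fine; alternatively one can skip it entirely by running both the standard and the $S$-gate-modified Hadamard tests and attaching complex labels $\beta$ directly, which is slightly cleaner and avoids the bookkeeping you flag as the main subtlety.
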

These results, together with the definition of the IND-CPA security, allow us to state the following:
\begin{lem}[{adapted from \cite[Lemmas 15-17]{FOCS:NatZha23}}]
\label{lem:qhe-bound}
    Let $\mathscr{B}$ be an operator with a QPT-implementable block encoding with $O(1)$ scale factor and $\| \mathscr{B}\| \leq O(1)$.  
    Let $\mathrm{QHE} = (\gen, \enc, \eval, \dec)$ be a secure quantum homomorphic encryption scheme (see Definition~\ref{def:QHE-aux}), let $D_0, D_1$ be any two QPT sampleable distributions over plaintext questions $x$, and let $\ket{\psi}$ be any efficiently preparable state of the prover. Then, for any security parameter $\kappa \in \mathbb{N}$, there exists a negligible function $\delta_{\mathrm{QHE}}(\kappa)$ such that
    \begin{align}
    \left|
    \mathop{\mathds{E}}_{x\leftarrow D_0} \mathop{\mathds{E}}_{\xe=\enc(x)} \sum_\aen \bra{\psi} \M_{\aen|\xe}^{\dagger} \mathscr{B} \M_{\aen|\xe} \ket{\psi} - 
    \mathop{\mathds{E}}_{x\leftarrow D_1} \mathop{\mathds{E}}_{\xe=\enc(x)} \sum_\aen \bra{\psi} \M_{\aen|\xe}^\dagger \mathscr{B} \M_{\aen|\xe} \ket{\psi}
    \right|
    \leq \delta_{\mathrm{QHE}}(\kappa).
    \end{align}
\end{lem}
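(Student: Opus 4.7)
The plan is a proof by contradiction that reduces the claim to the IND-CPA security of $\mathsf{QHE}$: if the bound failed, then combining the prover's first-round measurement with an efficient approximation of $\mathscr{B}$ would yield a QPT distinguisher that violates IND-CPA. Concretely, suppose for contradiction there exists a polynomial $p$ such that, for infinitely many $\kappa$, the left-hand side exceeds $1/p(\kappa)$. Fix $\varepsilon := 1/(10\,p(\kappa))$ and apply Lemma~\ref{lem:appx-block-encoding} to $\mathscr{B}$ (whose block encoding is QPT-implementable with $O(1)$ scale factor, possibly assembled via Corollaries~\ref{cor:add-block-encoding} and~\ref{cor:product-block-encoding}) to obtain a QPT-measurable POVM $\{M_\beta\}_\beta$ satisfying $\bigl|\sum_\beta \beta\,\tr[M_\beta \rho] - \tr[\mathscr{B}\rho]\bigr| \le \varepsilon$ on every state $\rho$.

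I would then build a QPT adversary $\mathcal{A}$ for the IND-CPA game as follows. On challenge ciphertext $\ct$, $\mathcal{A}$ (i) prepares the efficiently preparable state $\ket{\psi}$; (ii) applies the prover's first-round operator $\M_{\aen|\ct}$, obtaining outcome $\aen$ and post-measurement state $\ket{\psi_{\aen|\ct}} = \M_{\aen|\ct}\ket{\psi}$; (iii) measures that state with $\{M_\beta\}_\beta$ to obtain $\beta$; and (iv) outputs $1$ with probability $(C+\beta)/(2C)$, where $C = O(1)$ is a universal upper bound on $|\beta|$ on the POVM's support, which exists because $\|\mathscr{B}\| = O(1)$. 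By linearity, the affine rescaling $2C\Pr[\mathcal{A}(\ct)=1]-C$ equals $\mathbb{E}_\aen \sum_\beta \beta\,\bra{\psi_{\aen|\ct}}M_\beta\ket{\psi_{\aen|\ct}}$, which is within $\varepsilon$ of $\sum_\aen \bra{\psi}\M_{\aen|\ct}^\dagger \mathscr{B}\M_{\aen|\ct}\ket{\psi}$ by the guarantee of Lemma~\ref{lem:appx-block-encoding}. Averaging over $\ct=\enc(x)$ with $x \leftarrow D_b$, the distinguishing advantage of $\mathcal{A}$ between the two experiments is therefore at least $(1/p(\kappa) - 2\varepsilon)/(2C) = \Omega(1/p(\kappa))$, i.e.\ non-negligible.

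To close the contradiction, I would invoke the standard hybrid lift of IND-CPA from fixed messages to QPT-sampleable message distributions, using the encryption oracle provided in the IND-CPA definition so that the reduction can sample from $D_0$ and $D_1$ internally; this rules out any QPT distinguisher separating $\enc(D_0)$ from $\enc(D_1)$ non-negligibly, contradicting the previous paragraph. The delicate step, and the one I would write out most carefully, is the interplay between the two polynomial regimes: the assumed gap $1/p(\kappa)$ must exceed twice the approximation error $\varepsilon$ from Lemma~\ref{lem:appx-block-encoding}, which forces $\varepsilon$ to depend on $p$; one must then verify that the resulting $\kappa$-indexed family of distinguishers is uniformly QPT, which holds because the POVM's circuit complexity is polynomial in $1/\varepsilon = 10\,p(\kappa)$, and because each of the steps (preparing $\ket{\psi}$, applying $\M_{\aen|\ct}$, running the block-encoded POVM, and the final Bernoulli rescaling) is manifestly QPT.
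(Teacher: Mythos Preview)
Your reduction is correct and is precisely the strategy the paper defers to by citing Lemmas~15--17 of~\cite{NZ2023Bounding}: approximate $\mathscr{B}$ by an efficiently measurable POVM via Lemma~\ref{lem:appx-block-encoding}, compose it with the prover's first-round action on the efficiently preparable $\ket{\psi}$, rescale the outcome to a Bernoulli bit, and contradict IND-CPA (lifted from fixed messages to QPT-sampleable distributions by the standard averaging/hybrid argument). Two cosmetic points worth tightening: with the paper's unnormalized convention $\ket{\psi_{\aen|\ct}}=\M_{\aen|\ct}\ket{\psi}$ the quantity you want is $\sum_\aen$ rather than $\mathbb{E}_\aen$, and the bound $|\beta|\le C$ you invoke for the Bernoulli rescaling is a feature of the underlying block-encoding construction in~\cite{NZ2023Bounding} rather than of Lemma~\ref{lem:appx-block-encoding} as restated here, so either cite that or replace the Bernoulli step by a Hadamard test on the block encoding.
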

The proof of this Lemma encompasses Lemmas 15, 16 and 17 in~\cite{FOCS:NatZha23}. Importantly, $\mathscr{B}$ does not have to be a binary observable, it just has to be efficiently implementable by quantum circuits.
This lemma establishes a link between QPT-implementable block encodings and IND-CPA security. 
Looking ahead, using the fact that XOR games and SATWAP Bell inequalities have a nice form of SOS decomposition into linear combination and product of block encodings as presented in previous sections, this lemma will allow us to relate the quantum bound of the compiled nonlocal games to those of the original nonlocal games, at the expense of a negligible security loss.

\subsubsection{Crypto-correlation matrix and pseudo-expectation map}\label{cryptoNPA}

In this section, we revisit and further expound upon the concepts delineated in Section 4.4 of \cite{FOCS:NatZha23}. In that section, the authors introduced a compelling argument to establish the optimal quantum winning probability for the compiled CHSH game through the introduction of a cryptographic SOS decomposition. Our objective is to broaden their findings to encompass a more extensive array of games, namely XOR games and $d$-outcome CHSH games.

Consider an XOR game with an optimal quantum bias $\xi_q$. Its game operator~\eqref{eq:XOR_bell_operator} is 
\begin{equation}
    \mathcal{B}_g 
    = \sum_{x,y} \gamma^{AB}_{x,y} A_x B_y \label{eq:Bell_operator},
\end{equation}
and the corresponding shifted game operator has an SOS decomposition
\begin{align}
    \xi_q \mathds{1} - \mathcal{B}_g
     = \sum_i b_i P_i^\dagger P_i
    +\sum_i d_i [A_{x_i}, B_{y_i}] + \sum_i e_i (\mathds{1} - A_{x_i}^2) + \sum_i f_i (\mathds{1} - B_{y_i}^2),\label{eq:SOS_Bell}
\end{align}
where $b_j \in \mathds{R}^+$ are real positive coefficients for the terms of the SOS decomposition, $d_i$-s could be complex numbers referring to the constraint that Alice's and Bob's operators commute, $e_i$-s and $f_i$-s multiply terms that vanish if the measurements of Alice and Bob are projective. All terms on the r.h.s. of~\eqref{eq:SOS_Bell} except the first one are general constraints that are usually implicitly assumed. Here, the aim is to develop an analogous procedure for bounding the optimal quantum bias of compiled games, so one has to be careful as some of the constraints satisfied in the case of nonlocal games might not be satisfied in the case of compiled games. For example, operations in two rounds of the compiled game do not necessarily commute. 

Analogously to matrix $\tilde{Q}$ from Section~\ref{sec:xor}, expounding on the ideas from~\cite{FOCS:NatZha23}, we define a $(|\X|+|\Y|)\times (|\X|+|\Y|)$ matrix $\tilde{\mathcal{Q}}$ as follows
\begin{equation}
    \tilde{\mathcal{Q}}=\left[
\begin{array}{c|c}
\idd_{|\X|}  & {C} \\ \hline
 {C}^T & {S}
\end{array}\right],
\end{equation}
where $\idd_{|\X|}$ is the $|\X|\times |\X|$ identity matrix, where 
\begin{align}
    {C} &= \sum_{x,y}\langle \rA_x,B_y\rangle \ketbra{x}{y},\\
    {S} &= \sum_{y,y'}\mathop{\mathds{E}}_{x \in \X} \mathop{\mathds{E}}_{\xe: \dec(\xe)=x} \sum_\aen  \bra{\psi_{\aen|\xe}}  B_y B_{y'}  \ket{\psi_{\aen|\xe}}\ketbra{y}{y'},\label{eq:Gamma_BB}
\end{align}
and where $\langle \rA_x,B_y\rangle$ correspond to Eq.~\eqref{eq:cryptocorr}. The matrix $\tilde{\mathcal{Q}}$, unlike $\tilde{Q}$, is not necessarily positive semidefinite, as there is no real consistency in assigning values to its entries. Similarly to $\tilde{Q}$, $\tilde{\mathcal{Q}}$ has ones on the diagonal, in the first block by construction and in the second because
\begin{align*}
    {S}_{y,y} &= \mathop{\mathds{E}}_{x \in \X} \mathop{\mathds{E}}_{\xe: \dec(\xe)=x} \sum_\aen  \bra{\psi_{\aen|\xe}}  \idd  \ket{\psi_{\aen|\xe}}\\
    &= \mathop{\mathds{E}}_{x \in \X} \mathop{\mathds{E}}_{\xe: \dec(\xe)=x} \sum_\aen  \bra{\psi}\M_{\aen|\xe}^\dagger\M_{\aen|\xe}\ket{\psi}\\
    &= \mathop{\mathds{E}}_{x \in \X} \mathop{\mathds{E}}_{\xe: \dec(\xe)=x}   \braket{\psi}{\psi}\\
    &= 1.
\end{align*}

Then, as in \cite{FOCS:NatZha23}, we define a linear operator $\Tilde{\mathds{E}}$ that maps every homogeneous degree-2 polynomial in the variables $A_x,B_y$  to linear combinations of elements of the matrix $\tilde{\mathcal{Q}}$ in the following way: 
\begin{align}\label{pseudoExpAB}
    \Tilde{\mathds{E}}[A_xB_y] = {C}_{x,y},&\qquad  \Tilde{\mathds{E}}[B_y\rA_x] = {C}^{T}_{y,x},\\ \label{pseudoExpAABB}
    \Tilde{\mathds{E}}[A_xA_{x'}] = \delta_{x,x'},&\qquad  \Tilde{\mathds{E}}[B_yB_{y'}] = S_{y,y'}.
\end{align}
For $y=y'$ or $x=x'$ in Eq.~\eqref{pseudoExpAABB}, we get a consistent mapping of identity
\begin{equation}\label{xxyy}
    \Tilde{\mathds{E}}[A_xA_{x}] =  \Tilde{\mathds{E}}[\idd] = 1,\qquad  \Tilde{\mathds{E}}[B_yB_y] = \Tilde{\mathds{E}}[\idd] = S_{y,y} = 1.
\end{equation}
As in~\cite{FOCS:NatZha23}, we call the map $\Tilde{\mathds{E}}$ a pseudo-expectation. Such defined pseudo-expectation maps the game operator $\mathcal{B}_g$ introduced in Eq.~\eqref{eq:Bell_operator}) to a bias in the compiled nonlocal game:
\begin{equation}
    \Tilde{\mathds{E}}[\mathcal{B}_g] = \sum_{x,y}\gamma_{x,y} \Tilde{\mathds{E}}[A_xB_y] = \sum_{x,y}\gamma_{x,y}\langle\rA_x,B_y\rangle = \bar{\xi}.
\end{equation}
The optimal quantum bias can be upper bounded using the SOS decomposition of the shifted game operator(eq.~\eqref{eq:SOS_Bell}):
\begin{align}
    \Tilde{\mathds{E}}\left[\xi_q \mathds{1} - \mathcal{B}_g\right] = \Tilde{\mathds{E}}\left[\sum_i b_i P_i^\dagger P_i
    +\sum_i d_i [A_{x_i}, B_{y_i}] + \sum_i e_i (\mathds{1} - A_{x_i}^2) + \sum_i f_i (\mathds{1} - B_{y_i}^2)\right]
\end{align}
Given~\eqref{pseudoExpAB}, the terms multiplied by $d_i$ vanish because, under the pseudo-expectation map, the operators $A_x$ and $B_y$ commute. The terms multiplied by $e_i$ and $f_i$ also vanish because of Eq.~\eqref{xxyy}; hence
\begin{align}
    \Tilde{\mathds{E}}\left[\xi_q \mathds{1} - \mathcal{B}_g\right] = \sum_i b_i \Tilde{\mathds{E}}\left[P_i^\dagger P_i
    \right].
\end{align}
Thus, if $\Tilde{\mathds{E}}\left[P_i^\dagger P_i \right]$ is non-negative, the bias in a compiled game cannot be larger than the optimal quantum bias of the corresponding nonlocal game.

\section{Quantum bound of compiled biparite XOR games}
\label{sec:qb-xor}

In Section~\ref{sec:xor}, we established key insights into XOR games. Here, building upon the methodology outlined in~\cite{FOCS:NatZha23} and revisited in~\ref{cryptoNPA}, we present our first important result: compiled XOR games exhibit a quantum bias that closely aligns with the quantum bias of the corresponding nonlocal game, fluctuating only slightly with the security parameter. The power of the quantum prover to win a compiled XOR game with a probability larger than the optimal quantum winning probability for the corresponding XOR game crucially depends on their ability to transmit from the first round information about the received plaintext input. However, inputs are encrypted in such a way that the encryption satisfies IND-CPA security, meaning that even the quantum prover cannot do better than randomly guessing its question $x$ knowing the encryption $\xe$ of $x$. This inability of a QPT prover to break the encryption is articulated in Lemma \ref{lem:qhe-bound}.
In essence, this lemma conveys that regardless of the measurement employed by a QPT prover in the second round of the game, they are unable to differentiate between states resulting from distinct samples of plaintext questions taken in the first round.

Before stating our main result let us state two lemmas about the behavior of the XOR game operator under the pseudo-expectation map defined in Sec.~\ref{cryptoNPA}. The first lemma is a generalization of Claims 31 and 33 from~\cite{FOCS:NatZha23}.
\begin{lem}\label{lemmasquare}
    Let $P_x = A_x - \sum_{y}F_{xy}B_y$ with$\{A_x\}_{x\in \X}$ and $\{B_y\}_{y\in \Y}$  binary observables. Then there exists a  negligible function $\delta_{\mathrm{QHE}}(\cdot)$ such that 
    we have $\Tilde{\mathds{E}}\left[P_x^\dagger P_x
    \right] \geq - \delta_{\mathrm{QHE}}(\kappa)$, where $\Tilde{\mathds{E}}[\cdot]$ is the pseudo-expectation map defined in Sec.~\ref{cryptoNPA}.
\end{lem}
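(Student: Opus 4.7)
The plan is to follow the Natarajan--Zhang template: expand $\Tilde{\mathds{E}}[P_x^\dagger P_x]$ through the pseudo-expectation rules, isolate the single obstruction to non-negativity (namely that $S_{y,y'}$ averages over all first-round plaintext inputs, whereas the cross term singles out the specific $x$ appearing in $A_x$), and then show that this obstruction is QPT-indistinguishable from zero by invoking Lemma~\ref{lem:qhe-bound}.

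Concretely, using \eqref{pseudoExpAB}--\eqref{xxyy} together with $A_x, B_y$ Hermitian and $F$ real (so that $\Tilde{\mathds{E}}[A_xB_y] = \Tilde{\mathds{E}}[B_yA_x] = C_{x,y}$), a direct expansion of $P_x^\dagger P_x = A_x^2 - A_x\sum_y F_{xy}B_y - \sum_y F_{xy}B_yA_x + \sum_{y,y'}F_{xy}F_{xy'}B_yB_{y'}$ gives
\begin{equation*}
\Tilde{\mathds{E}}\bigl[P_x^\dagger P_x\bigr] = 1 - 2\sum_y F_{xy} C_{x,y} + \sum_{y,y'} F_{xy} F_{xy'} S_{y,y'}.
\end{equation*}
I would then introduce the $x$-conditioned analogue
\begin{equation*}
S^x_{y,y'} \;:=\; \mathop{\mathds{E}}_{\xe:\dec(\xe)=x} \sum_\aen \bra{\psi_{\aen|\xe}} B_y B_{y'} \ket{\psi_{\aen|\xe}},
\end{equation*}
so that $S_{y,y'}$ is obtained from $S^x_{y,y'}$ by averaging $x$ uniformly over $\X$.

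The key algebraic step is to verify that replacing $S$ by $S^x$ turns the quantity into a genuine sum of norms squared. Using $\sum_\aen \|\ket{\psi_{\aen|\xe}}\|^2 = 1$ and $\bigl((-1)^{\dec(\aen)}\bigr)^2 = 1$, a line-by-line expansion yields the identity
\begin{equation*}
1 - 2\sum_y F_{xy} C_{x,y} + \sum_{y,y'} F_{xy} F_{xy'} S^x_{y,y'} \;=\; \mathop{\mathds{E}}_{\xe:\dec(\xe)=x} \sum_\aen \left\| \Bigl((-1)^{\dec(\aen)}\idd - \sum_y F_{xy} B_y\Bigr)\ket{\psi_{\aen|\xe}} \right\|^2 \;\geq\; 0,
\end{equation*}
and consequently
\begin{equation*}
\Tilde{\mathds{E}}\bigl[P_x^\dagger P_x\bigr] \;\geq\; \sum_{y,y'} F_{xy} F_{xy'} \bigl(S_{y,y'} - S^x_{y,y'}\bigr).
\end{equation*}

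To close the proof I would apply Lemma~\ref{lem:qhe-bound} to $\mathscr{B} = B_y B_{y'}$: each $B_y$ is a binary observable, hence a unitary serving as its own block encoding with scale factor $1$, so by Corollary~\ref{cor:product-block-encoding} the product admits a QPT-implementable block encoding with $O(1)$ scale factor and $\|\mathscr{B}\| = 1$. Taking $D_0$ to be the point mass at $x$ and $D_1$ the distribution on $\X$ averaged in $S_{y,y'}$, Lemma~\ref{lem:qhe-bound} gives $\bigl|S_{y,y'} - S^x_{y,y'}\bigr| \leq \delta_{qhe}(\kappa)$ for each pair $(y,y')$. Since $|\Y|$ and the entries of $F$ are game-dependent constants independent of $\kappa$, summing merely inflates $\delta_{qhe}$ by an $O(1)$ factor, yielding $\Tilde{\mathds{E}}[P_x^\dagger P_x] \geq -\delta_{\mathrm{QHE}}(\kappa)$ for some negligible $\delta_{\mathrm{QHE}}$.

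The step that will require the most care is the SOS identity in the middle display: one must symmetrize the two cross terms correctly via $\Tilde{\mathds{E}}[A_xB_y] = \Tilde{\mathds{E}}[B_yA_x]$, and make sure that when the norm squared is opened the indices $y, y'$ attach to the \emph{same} post-measurement state $\ket{\psi_{\aen|\xe}}$ — this is precisely why $S^x$, not $S$, appears on the left-hand side. Everything else is routine bookkeeping, provided one checks (as the appeal to Corollary~\ref{cor:product-block-encoding} makes explicit) that Lemma~\ref{lem:qhe-bound} is legitimately applicable to $B_y B_{y'}$.
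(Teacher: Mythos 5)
Your proof is correct and follows essentially the same route as the paper: expand $\Tilde{\mathds{E}}[P_x^\dagger P_x]$ via the pseudo-expectation rules, observe that only the $S$-term averages over all plaintext inputs, use Lemma~\ref{lem:qhe-bound} to replace that average by the fixed $x$, and recognize the resulting quantity as $\mathop{\mathds{E}}_{\xe=\enc(x)}\sum_\aen\bigl\|\bigl((-1)^{\dec(\aen)}\idd-\hat{B}_x\bigr)\ket{\psi_{\aen|\xe}}\bigr\|^2\geq 0$. The only (harmless) difference is that you invoke Lemma~\ref{lem:qhe-bound} term-by-term with $\mathscr{B}=B_yB_{y'}$ and sum the $O(1)$ many negligible errors, whereas the paper applies it once to $\mathscr{B}=\hat{B}_x^\dagger\hat{B}_x$ using Corollaries~\ref{cor:add-block-encoding} and~\ref{cor:product-block-encoding}.
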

\begin{proof}
    Let us introduce the shortened notation $\hat{B}_x = \sum_yF_{xy}B_y$.

\begin{align*}
\Tilde{\mathds{E}}\left[P_x^\dagger P_x\right]
&= \Tilde{\mathds{E}}\left[A_x^\dagger A_x\right] - \Tilde{\mathds{E}}\left[A_x^\dagger\hat{B}_x\right] - \Tilde{\mathds{E}}\left[\hat{B}_x^\dagger A_x \right] + \Tilde{\mathds{E}}\left[\hat{B}_x^\dagger \hat{B}_x\right]\\
&=1 - 2\sum_yF_{xy}C_{xy} + \sum_{y,y'}F_{yx}F_{xy'}S_{yy'}\\
&=1 -  2\sum_yF_{xy}\mathop{\mathds{E}}_{\xe=\enc(x)} \sum_\aen (-1)^{\dec(\aen)}\bra{\psi} \M_{\aen|\xe}^{\dagger} B_y \M_{\aen|\xe}\ket{\psi} \\
&\qquad + \sum_{y,y'}F_{yx}F_{xy'}\mathop{\mathds{E}}_{z\in \X} \mathop{\mathds{E}}_{\xe=\enc(z)} \sum_\aen \bra{\psi} \M_{\aen|\xe}^{\dagger} B_y B_{y'} \M_{\aen|\xe}\ket{\psi}\\
&=1 -  2\mathop{\mathds{E}}_{\xe=\enc(x)} \sum_\aen (-1)^{\dec(\aen)}\bra{\psi} \M_{\aen|\xe}^{\dagger} \hat{B}_x \M_{\aen|\xe}\ket{\psi} + \mathop{\mathds{E}}_{z\in \X} \mathop{\mathds{E}}_{\xe=\enc(z)} \sum_\aen \bra{\psi} \M_{\aen|\xe}^{\dagger} \hat{B}_x^\dagger \hat{B}_{x} \M_{\aen|\xe}\ket{\psi}\\
&\approx_{\delta_{\mathrm{QHE}}(\kappa)} 1 -  2\mathop{\mathds{E}}_{\xe=\enc(x)} \sum_\aen (-1)^{\dec(\aen)}\bra{\psi} \M_{\aen|\xe}^{\dagger} \hat{B}_x \M_{\aen|\xe}\ket{\psi} +  \mathop{\mathds{E}}_{\xe=\enc(x)} \sum_\aen \bra{\psi} \M_{\aen|\xe}^{\dagger} \hat{B}_x^\dagger \hat{B}_{x} \M_{\aen|\xe}\ket{\psi}\\
&= \mathop{\mathds{E}}_{\xe=\enc(x)} \sum_\aen \bra{\psi} \M_{\aen|\xe}^{\dagger} \left( 
\mathds{1} -2 (-1)^{\dec(\aen)}  \hat{B}_x + \hat{B}_x^2
\right)\M_{\aen|\xe} \ket{\psi}\\
&= \mathop{\mathds{E}}_{\xe=\enc(x)} \sum_\aen \bra{\psi} \M_{\aen|\xe}^{\dagger} 
\left( 
(-1)^{\dec(\aen)} \mathds{1} - \hat{B}_x
\right)^2
\M_{\aen|\xe}\ket{\psi}
    \end{align*}
In the first two lines, we used the linearity and the definition of the pseudo-expectation map. In the third line, we used the definition of matrices $C$ (eq.~\eqref{pseudoExpAB}) and $S$ (eq.~\eqref{pseudoExpAABB}). In the fourth line, we just used the definition of $\hat{B}_x$. To get the fifth line we used the fact that $\hat{B}_x^\dagger\hat{B}_x$ has a QPT-implementable block encoding (Corrolaries~\ref{cor:add-block-encoding} and~\ref{cor:product-block-encoding}) and thus we can apply Lemma~\ref{lem:qhe-bound} given above to fix the  input $z \in \X$ to be exactly $x$, paying the price of a $\delta_{\mathrm{QHE}}(\kappa)$.
To get the sixth line we used linearity and we grouped all terms inside of one average over $\xe=\enc(x)$; in particular we noticed that we could write the first term $1$ as $\mathds{1}$ inside this average.
Finally, in the seventh line we recognised that the expression within round brackets can be expressed as the square of an operator.
This is necessarely positive, implying that up to a negligible factor $\Tilde{\mathds{E}}\left[P_x^\dagger P_x\right]$  must also be positive.
\end{proof}

\begin{remark}
This lemma is the key-point that allows to generalize the framework introduced in \cite{FOCS:NatZha23} to all games for which we have a SOS decomposition whose polynomials can be written as  $P_x = A_x - \sum_{y}F_{xy}B_y$. This specific shape might seem restricting, but we showed in the previous section that it is expressive enough to describe all XOR games. 
Extending the statement to more general forms of the polynomials might be challenging, because we lack of a clear intuition on how to interpret product of Alice's observables in the compiled scenario. The pseudo-expectation map is designed to mimic a true expectation map, assigning non-negative values to positive operators. Its small deviation from positivity, allowed only by a negligible function of the security parameter, is justified by the IND-CPA security of the encryption scheme. However, this justification relies on the SOS decomposition having the specific form described above. More general SOS forms lead to pseudo-expectation values that, with current techniques, cannot be meaningfully estimated.
\end{remark}

\begin{lem}\label{lemmaBpos}
    Let $P$ be a positive semi-definite homogeneous degree-2 polynomial over binary observables $\{B_y\}_y$. Then $\Tilde{\mathds{E}}\left[P
    \right] \geq 0$, where $\Tilde{\mathds{E}}[\cdot]$ is the pseudo-expectation map defined in Sec.~\ref{cryptoNPA}.
\end{lem}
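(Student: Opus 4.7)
The plan is to unfold $\tilde{\mathds{E}}[P]$ by linearity, swap the finite sum over Bob's indices with the expectations over encrypted inputs and Alice's outcomes, and then recognize the resulting quantity as a convex combination of genuine expectation values of $P$ on the post-first-round states $\ket{\psi_{\aen|\xe}}$. Operator positivity of $P$ then gives non-negativity term by term.

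Concretely, I would first write any homogeneous degree-$2$ polynomial in Bob's binary observables as $P = \sum_{y,y'} c_{y,y'} B_y B_{y'}$, and apply $\tilde{\mathds{E}}$ by linearity using the definition in Eq.~\eqref{pseudoExpAABB}, obtaining $\tilde{\mathds{E}}[P] = \sum_{y,y'} c_{y,y'} S_{y,y'}$. Substituting the explicit expression~\eqref{eq:Gamma_BB} for $S_{y,y'}$, I would then exchange the (finite) sum over $y,y'$ with $\mathop{\mathds{E}}_{x\in\X}\mathop{\mathds{E}}_{\xe:\dec(\xe)=x}\sum_\aen$ and pull the coefficients $c_{y,y'}$ back inside the matrix element, yielding
\begin{equation*}
\tilde{\mathds{E}}[P] \;=\; \mathop{\mathds{E}}_{x\in\X}\mathop{\mathds{E}}_{\xe:\dec(\xe)=x}\sum_\aen \bra{\psi_{\aen|\xe}} P \ket{\psi_{\aen|\xe}}.
\end{equation*}
Since $P \succeq 0$ as an operator on the prover's Hilbert space, each summand $\bra{\psi_{\aen|\xe}} P \ket{\psi_{\aen|\xe}}$ is non-negative, so the convex combination is non-negative, which concludes the argument.

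The statement is in fact much softer than Lemma~\ref{lemmasquare}, and I do not expect a genuine obstacle: notice that, unlike the previous lemma, no IND-CPA or QHE-based approximation is required, and thus no $\delta_{\mathrm{QHE}}(\kappa)$ loss appears. The reason is structural: both $B_y$ and $B_{y'}$ live in the \emph{same} pseudo-expectation entry $S_{y,y'}$, which is already a genuine expectation value over a single averaged post-measurement ensemble; there is no need to swap the distribution from which the encrypted input $\xe$ is drawn, which was precisely the source of the approximation in Lemma~\ref{lemmasquare}. The only mild care needed is to ensure that the constant term, if $P$ contains any $B_y^2 = \idd$ contributions, is mapped consistently to $1$ by $\tilde{\mathds{E}}$, which is exactly the consistency condition~\eqref{xxyy}.
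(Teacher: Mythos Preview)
Your proof is correct and follows essentially the same approach as the paper: expand $\tilde{\mathds{E}}[P]$ via the definition of $S_{y,y'}$ to obtain $\mathop{\mathds{E}}_{x\in\X}\mathop{\mathds{E}}_{\xe:\dec(\xe)=x}\sum_\aen \bra{\psi_{\aen|\xe}} P \ket{\psi_{\aen|\xe}}$, then conclude by operator positivity of $P$. The paper states this in one line without the intermediate unpacking; your additional remark that no IND-CPA step (and hence no $\delta_{\mathrm{QHE}}$ loss) is needed here is correct and worth noting, but is not part of the paper's own proof.
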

\begin{proof}
    The pseudo-expectation of $P$ reads
    \begin{align}
        \Tilde{\mathds{E}} \left[ P\right] = 
    \mathop{\mathds{E}}_{x \in \X} \mathop{\mathds{E}}_{\xe: \dec(\xe)=x} \sum_\aen  \bra{\psi} \M_{\aen|\xe}^{\dagger} P \M_{\aen|\xe} \ket{\psi} \geq 0.
    \end{align}
    Since all of the elements in the sum are positive, because $P$ is positive semi-definite, the relation trivially holds.
\end{proof}

We can now state the main theorem, providing the quantum optimal bias of compiled XOR games.

\begin{thm}\label{thm:qbound_XOR}
    Given an XOR game with optimal quantum bias $\xi_q$, the optimal quantum bias of the compiled XOR game is $\xi_q+\delta_{\mathrm{QHE}}(\kappa)$, where $\delta_{\mathrm{QHE}}(\cdot)$ is a negligible function.
\end{thm}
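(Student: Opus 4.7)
The plan is to evaluate the structured sum-of-squares decomposition of the shifted game operator from Theorem \ref{sosThm} under the pseudo-expectation map $\tilde{\mathds{E}}$ introduced in Section \ref{cryptoNPA}. Concretely, given any QPT prover strategy $(\ket{\psi}, \{\M_{\aen|\xe}\}, \{B_y\})$ in the compiled XOR game, the corresponding correlators $\langle \rA_x, B_y\rangle$ determine the matrix $\tilde{\mathcal{Q}}$ and hence the pseudo-expectation. Applying $\tilde{\mathds{E}}$ to both sides of
\begin{equation*}
\xi_q\idd - \mathcal{B}_g \;=\; \sum_{x}\frac{\lambda_x}{2}\left(A_x - \sum_{y}F_{xy}B_y\right)^{2} + P(\{B_y\}_y)
\end{equation*}
yields $\xi_q - \bar{\xi}$ on the left, where $\bar{\xi} := \tilde{\mathds{E}}[\mathcal{B}_g]$ is the bias achieved by the prover in the compiled game.

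For the right-hand side I would bound each group of terms separately. By Lemma~\ref{lemmasquare}, applied with $P_x = A_x - \sum_y F_{xy} B_y$, each square contributes $\tilde{\mathds{E}}[P_x^\dagger P_x] \geq -\delta_{\mathrm{QHE}}(\kappa)$; by Lemma~\ref{lemmaBpos}, together with the positivity of $P(\{B_y\}_y)$ established in Eq.~\eqref{PBpositive}, the Bob-only polynomial contributes a nonnegative amount. Summing with the (game-dependent, $\kappa$-independent) weights $\lambda_x/2$ gives
\begin{equation*}
\xi_q - \bar{\xi} \;\geq\; -\Bigl(\sum_{x}\tfrac{\lambda_x}{2}\Bigr)\,\delta_{\mathrm{QHE}}(\kappa),
\end{equation*}
which is still negligible in $\kappa$ because the game data (and in particular the optimal dual solution $\Lambda_A$) are fixed independently of the security parameter. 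Rearranging gives the upper bound $\bar{\xi} \leq \xi_q + \delta'_{\mathrm{QHE}}(\kappa)$, which matches the claimed asymptotic soundness.

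The matching lower bound (completeness) is the easier direction: the honest KLVY prover homomorphically evaluates Alice's optimal measurement on the encrypted input $\xe$ and then applies Bob's optimal observable $B_y$ in the second round; correctness with auxiliary input of the QHE scheme guarantees that the resulting joint distribution of $(\dec(\aen), b)$ is statistically close to the optimal nonlocal distribution, so the bias is at least $\xi_q - \mathsf{negl}(\kappa)$. Combining completeness and soundness yields the equality $\xi_q + \delta_{\mathrm{QHE}}(\kappa)$ asserted by the theorem.

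The main obstacle I anticipate is verifying the hidden hypothesis of Lemma~\ref{lemmasquare}, namely that $\hat{B}_x^\dagger \hat{B}_x$ (and the intermediate operators) admit a QPT-implementable block encoding with $O(1)$ scale factor, so that Lemma~\ref{lem:qhe-bound} may be invoked to replace an expectation over a uniformly sampled $z \in \X$ by an expectation over $\enc(x)$ at the cost of a negligible IND-CPA error. This requires $F_{xy}$ to be bounded and $|\X|, |\Y|$ to be polynomial in $\kappa$ (in fact constant for a fixed game), which follows from Corollaries~\ref{cor:add-block-encoding} and~\ref{cor:product-block-encoding} applied to the finite linear combination $\hat{B}_x = \sum_y F_{xy} B_y$; I would make this bookkeeping explicit so that the sum over $x$ of the per-term negligible bounds is itself negligible.
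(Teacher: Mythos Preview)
Your proposal is correct and follows essentially the same route as the paper: apply the pseudo-expectation map to the SOS decomposition of Theorem~\ref{sosThm}, invoke Lemma~\ref{lemmasquare} on each $P_x^\dagger P_x$ and Lemma~\ref{lemmaBpos} on $P(\{B_y\}_y)$, and conclude $\bar{\xi}\leq \xi_q+\delta_{\mathrm{QHE}}(\kappa)$. You are in fact slightly more careful than the paper in tracking the game-dependent constant $\sum_x\lambda_x/2$ and in spelling out the completeness direction via the honest KLVY prover, which the paper's proof of this theorem leaves implicit.
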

\begin{proof}
    Let us recall Theorem~\ref{sosThm} which states that for every XOR game we have
    \begin{equation*}
    \xi_q\idd - \mathcal{B}_g = \sum_x\frac{\lambda_{x}}{2}\left(A_x - \sum_{y}F_{xy}B_y\right)^2 + P\left(\{B_y\}_y\right),
\end{equation*}

By applying the pseudo-expectation map we get
\begin{align}
\Tilde{\mathds{E}}\left[\xi_q\idd - \mathcal{B}_g\right] &= \Tilde{\mathds{E}}\left[\sum_x\frac{\lambda_{x}}{2}\left(A_x - \sum_{y}F_{xy}B_y\right)^2 + P\left(\{B_y\}_y\right)\right]\\
    & = \sum_x\frac{\lambda_{x}}{2}\Tilde{\mathds{E}}\left[\left(A_x - \sum_{y}F_{xy}B_y\right)^2\right] + \Tilde{\mathds{E}}\left[P\left(\{B_y\}_y\right)\right].
\end{align}
To get the second line we used the linearity of the pseudo-expectation map. Based on Lemmas~\ref{lemmasquare} implies that the first term on the r.h.s is up to a negligible function nonnegative, while Lemma~\ref{lemmaBpos} implies that the second term on the r.h.s. must be nonnegative.  The pseudo-expectation map on the l.h.s. gives $\xi_q - \bar{\xi}_q$, implying that in the worst case the compiled game optimal bias can be only negligibly larger than the optimal quantum bias of the corresponding XOR game: 
\begin{equation}
    \bar{\xi}_q \leq \xi_q + \delta_{\mathrm{QHE}}(\kappa),
\end{equation}
which completes the proof.
\end{proof}

\section{Quantum bound of compiled nonlocal games with many outputs and conditions for self-testing of qudit measurements from a single prover}\label{sec:st_qudits}

We now introduce a family of nonlocal games in which the players respond with more than two outputs. In this case, the quantum strategy involves $d$-outcome measurements, $\{M_{a|x}\}_{a=0}^{d-1}$ for Alice and $\{N_{b|y}\}_{b=0}^{d-1}$ for Bob. These measurements can be represented through generalized observables
\begin{equation}
    A_x^{(k)} = \sum_{a=0}^{d-1} \omega^{ak} M_{a|x},   \qquad B_y^{(l)} = \sum_{b=0}^{d-1} \omega^{bl} N_{b|y}, \quad   \forall k,l = 0,\cdots,d-1, 
\end{equation}
where $\omega=\exp \left(\frac{2 \pi i}{d}\right)$. Clearly, the eigenvalues of the generalized unitary observables $A_x^{(k)}$ and $B_y^{(l)}$ are roots of unity. We will refer to the indices $k$ and $l$ as the orders of the generalised measurement; notice that the orders $k=0 \mod d$ are the identity. For $d>2$, generalized observables are not Hermitian, but still satisfy the following regular property:
\begin{equation*}
    (A^{(k)})^\dagger = A^{(-k)} \qquad  (B^{(l)})^\dagger = B^{(-l)}
\end{equation*}
and hence  $\left[A^{(k)}\right]^\dagger A^{(k)} = \idd$ for every $k$.
Considering only projective measurements, it is easy to check that $A^{(k)} = \left[A^{(1)}\right]^k$ and $B^{(l)} = \left[B^{(1)}\right]$; we will use these two notations interchangeably.

In this section, we adopt the perspective of Bell inequalities.
The CHSH inequality is the simplest Bell inequality whose maximal violation self-tests the maximally entangled pair of qubits. Several propositions have been formulated to capture different features of the CHSH inequality, some focusing on self-testing maximally entangled shared states~\cite{SATWAP17}, some on self-testing mutually unbiased bases measurements for both parties~\cite{kaniewski2019maximal}. We concentrate on the Salavrakos-Augusiak-Tura-Wittek-Acin-Pironio (SATWAP) Bell inequality introduced in \cite{SATWAP17}, for which a self-test of arbitrary local dimension with minimal number of measurements ($2$ per part) has been proposed in \cite{SSKA21}.
We refer to this inequality as $d$-CHSH, because it shares many similarities with the CHSH inequality, generalising it to outputs that are $d$-dimensional.
However, this is not related to the class of Bell inequalities introduced  by Buhrman and Massar in~\cite{BM05}, and later developed in \cite{bavarian2014}, \cite{Liang09} and \cite{Ram16}, which are also known as $d$-CHSH inequalities.

The SATWAP inequality is defined for two spatially separated players, each receiving a bit as input, $\X = \Y = \{1,2\}$, and subsequently producing an output that can take one of $d$ different values, $\A = \B =  \{0,\dots,d-1\}$.
Many Bell inequalities can be expressed by conveniently defining some regular combination of probabilities given fixed inputs, called correlators.
The definition of the generalised correlator is:
\begin{equation}\label{eq:generalised_correlator}
    \langle A_x^{(k)} B_y^{(l)} \rangle = 
    \sum_{a=0}^{d-1} \sum_{b=0}^{d-1} \omega^{ak+bl} p(ab|xy).
\end{equation}
The SATWAP inequality is given in terms of these generalised correlators:
\begin{align}\label{eq:SATWAP_B}
    \beta_d^{\text{SATWAP}} =
    \sum_{k=1}^{d-1} \left( 
    a_k \langle A_1^k B_1^{d-k} \rangle +
    a_k^* \omega^k \langle A_1^k B_2^{d-k} \rangle+
    a_k^* \langle A_2^k B_1^{d-k} \rangle+
    a_k \langle A_2^k B_2^{d-k} \rangle 
    \right),
\end{align}
 with the following definition of the phases $a_k = \frac{\omega^{\frac{2k-d}{8}}}{\sqrt{2}} = \frac{1 - i}{2} \omega^{k/4}$  such that  $a_k^* = a_{d-k}$ and $a_{\pm d} = \pm \frac{1+i}{2}$.
To simplify the notation it is convenient to group Bob's observables in the following sums:
 \begin{equation}\label{c12}
      C_1^{(k)} = a_k B_1^{-k} + a_k^* \omega^k B_2^{-k},  \qquad
      C_2^{(k)} = a_k^* B_1^{-k} + a_k B_2^{-k}. 
 \end{equation}
Using these definitions, the corresponding Bell operator is thus
\begin{equation}
    \mathcal{B}_d^{\text{SATWAP}} =  \sum_{k=1}^{d-1}
    \left(
     A_1^k\otimes C_1^{(k)}  +  A_2^k\otimes C_2^{(k)} \right).
\end{equation}
Bell value is thus given as $\beta_d^{\text{SATWAP}} = \bra{\psi}\mathcal{B}_d^{\text{SATWAP}}\ket{\psi}$. The classical and the quantum bounds of this inequality are 
\begin{equation*}
    \left(\beta_d^{\text{SATWAP}}\right)_l = \frac{1}{2} \left(3 \cot\left(\frac{\pi}{4}d\right) - \cot\left(\frac{3\pi}{4}d\right)\right) - 2,
    \qquad  
    \left(\beta_d^{\text{SATWAP}}\right)_q = 2(d-1).
\end{equation*}

The quantum bound can be found by explicitly building the SOS decomposition of the shifted Bell operator, as shown in~\cite{SATWAP17}. The terms of the SOS decomposition are labeled by $x \in \{1,2\}$ and $k \in \{0, \dots, d-1\}$:
\begin{align*}
    P_{x,k} &= (A_x^k)^\dagger - C_x^{(k)}.
\end{align*}
A quick calculation allows us to check the correctness of the SOS decomposition :
\begin{align}\label{shiftedSATWAP}
    \beta_q \mathds{1} - \mathcal{B}_d^{\text{SATWAP}} &= 
    \frac{1}{2} \sum_{k=1}^{d-1} \left( 
    P_{1,k}^\dagger P_{1,k} + P_{2,k}^\dagger P_{2,k}
    \right) \\ \nonumber
    &= - \mathcal{B}_d^{\text{SATWAP}} +(d-1) \mathds{1} + \frac{1}{2} \sum_{k=1}^{d-1} \left( C_1^{(d-k)} C_1^{(k)} +C_2^{(d-k)} C_2^{(k)} \right) \\ \nonumber
    &= - \mathcal{B}_d^{\text{SATWAP}} +2(d-1).
\end{align}
The maximal quantum violation of this inequality is a self-test of a maximally entangled pair of qudits, together with measurements that are unitarily equivalent to the measurements used to maximally violate well-known Collins-Gisin-Linden-Massar-Popescu (CGLMP) Bell inequalities~\cite{collins2002bell}. 
We summarise this technical result in the following lemma, proven in~\cite{SSKA21}.
\begin{lem}[\cite{SSKA21}]\label{sska21}
The maximal violation of the SATWAP Bell inequality certifies the following:
\begin{itemize}
    \item The dimension of Alice's and Bob's Hilbert spaces is a multiple of $d$, and we can write their Hilbert space as the tensor products $\mathcal{H}_A = \mathds{C}^d \otimes \mathcal{H}_{A'}$, $\mathcal{H}_B = \mathds{C}^d \otimes \mathcal{H}_{B'}$,  where $\mathcal{H}_{A'}$ and $\mathcal{H}_{B'}$ are auxiliary Hilbert spaces of finite dimension;
    \item There exist local unitary transformations $U_A : \mathcal{H}_A \to \mathcal{H}_A$ and $U_B : \mathcal{H}_B \to \mathcal{H}_B$ such that
    \begin{align}\label{dBobs}
        & U_B B_1 U_B^\dagger = Z_d \otimes \mathds{1}_{B'}, && U_B B_2 U_B^\dagger = T_d \otimes \mathds{1}_{B'},\\
        & U_A A_1 U_A^\dagger = (a_1^* Z_d + 2 (a_1^*)^3 T_d) \otimes \mathds{1}_{A'}, && U_A A_2 U_A^\dagger = (a_1 Z_d  - a_1^* T_d) \otimes \mathds{1}_{B'}.
    \end{align}
    where $Z_d = \text{diag}[1, \omega, \dots, \omega^{d-1}]$ and $ T_d = \sum_{i=0}^{d-1} \omega^{i + 1/2} \ket{i}\bra{i} - \frac{2}{d} \sum_{j,i=0}^{d-1} (-1)^{\delta_{i,0} + \delta_{j,0}} \omega^{(i + j + 1)/2} \ket{i}\bra{j}$.
    \item Alice and Bob share a state $\ket{\psi_{AB}}$ which is unitarily equivalent to the maximally entangled pair of qudits
    \begin{equation*}
        U_A \otimes U_B \ket{\psi_{AB}} = \ket{\phi^+_d} \otimes \ket{\tau_{A'B'}}.
    \end{equation*}
\end{itemize}
\end{lem}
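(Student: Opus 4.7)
The plan is to exploit the explicit SOS decomposition in equation~\eqref{shiftedSATWAP}. Since each term $P_{x,k}^\dagger P_{x,k}$ is positive semidefinite and the shifted Bell operator equals $\tfrac{1}{2}\sum_{k=1}^{d-1}(P_{1,k}^\dagger P_{1,k} + P_{2,k}^\dagger P_{2,k})$, any state $\ket{\psi}$ attaining the quantum bound must satisfy $P_{x,k}\ket{\psi} = 0$ for every $x \in \{1,2\}$ and $k \in \{1,\dots,d-1\}$. Equivalently, on the support of $\ket{\psi}$ we get the family of operator identities $(A_x^k)^\dagger \ket{\psi} = C_x^{(k)}\ket{\psi}$, where $C_x^{(k)}$ is the linear combination of Bob's observables from~\eqref{c12}. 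These identities are the only algebraic input I will use.

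First I would focus on the $k=1$ identities, which express $A_1^\dagger\ket{\psi}$ and $A_2^\dagger\ket{\psi}$ as explicit linear combinations of $B_1^\dagger\ket{\psi}$ and $B_2^\dagger\ket{\psi}$ with coefficients built from $a_1$, $a_1^*$ and $\omega$. Inverting the associated $2 \times 2$ coefficient matrix (whose determinant one checks is non-zero for $d\ge 2$) produces dual relations expressing $B_y^\dagger\ket{\psi}$ in terms of Alice's operators acting on the state. Combining the $k=1$ identities, their duals, and the unitarity conditions $A_x^d = B_y^d = \mathds{1}$, I would derive a closed set of algebraic identities satisfied by $\{B_1,B_2\}$ on the support of $\text{tr}_A(\proj{\psi})$. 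The target is to show that $B_1$ and $B_2$ satisfy, on that support, exactly the defining algebraic relations of the ideal pair $(Z_d, T_d)$, and analogously for Alice's operators on her side.

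The second step is to build the local isometries $U_A$, $U_B$. The natural choice is a generalized swap: since $B_1$ has spectrum in the $d$-th roots of unity, one labels its eigenspaces by $\{0,1,\ldots,d-1\}$ so that $B_1$ acts as $Z_d\otimes\mathds{1}_{B'}$, and then uses $B_2$ (together with the relations derived above) to implement the $T_d$ action mixing those eigenspaces. The derived algebra then forces $U_B B_2 U_B^\dagger = T_d\otimes\mathds{1}_{B'}$. The same reasoning applied to Alice, using the dual relations that express $A_x^\dagger\ket{\psi}$ as the specific combinations appearing in the lemma statement, yields $U_A A_x U_A^\dagger$ in the claimed form. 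Finally, applying $U_A\otimes U_B$ to the identity $A_1^\dagger\ket{\psi} = C_1^{(1)}\ket{\psi}$ forces the reduced state on the $\mathds{C}^d\otimes\mathds{C}^d$ factor to be the maximally entangled pair $\ket{\phi^+_d}$, with all remaining freedom absorbed in an auxiliary state $\ket{\tau_{A'B'}}$.

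The main obstacle is the second step: converting the state-dependent identities $P_{x,k}\ket{\psi} = 0$ into genuine algebraic relations usable to build the isometry, and then verifying that the candidate isometry recovers the precise operator $T_d$ rather than some other unitary with the same commutation structure. In dimension two, Jordan's lemma makes this essentially automatic, but for $d>2$ the operator $T_d$ contains the non-trivial correction $-\tfrac{2}{d}\sum_{j,i}(-1)^{\delta_{i,0}+\delta_{j,0}}\omega^{(i+j+1)/2}\ketbra{i}{j}$, and matching that exact form requires careful bookkeeping of phases when propagating the $k=1$ identities to the higher-$k$ constraints using $B_y^{-k} = (B_y^\dagger)^k$. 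This is precisely the technical core on which~\cite{SSKA21} spends most of its effort, and any proof following the above outline will stand or fall on executing it cleanly.
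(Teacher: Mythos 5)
The statement you are asked to prove is not actually proven in this paper: it is imported verbatim from~\cite{SSKA21}, and the paper only ever uses it as a black box (later, in the compiled setting, it invokes Lemmas 1--3 of the supplementary material of~\cite{SSKA21} to pass from the derived relations to the unitary $U_B$). Your outline correctly reconstructs the skeleton of that external proof: the SOS decomposition~\eqref{shiftedSATWAP} forces $P_{x,k}\ket{\psi}=0$ for all $x,k$ at maximal violation, hence $(A_x^k)^\dagger\ket{\psi}=C_x^{(k)}\ket{\psi}$ with $C_x^{(k)}$ as in~\eqref{c12}, and from these one extracts relations purely among Bob's operators on the support of the state, which a swap-type isometry then converts into the claimed characterization of $B_1,B_2$ and of the shared state.

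However, as a proof of the lemma your proposal has a genuine gap, and you essentially concede it yourself: the decisive step --- showing that the relations one can actually derive (in the paper's later notation, $\bigl[C_x^{(1)}\bigr]^{k}=C_x^{(k)}$ and $C_x^{(d-k)}C_x^{(k)}=\idd$ on the support of $\ket{\psi}$) pin down $B_1$ and $B_2$ up to a local unitary as the \emph{specific} pair $Z_d\otimes\idd_{B'}$ and $T_d\otimes\idd_{B'}$, including the nontrivial rank-correction term in $T_d$, and simultaneously force the state to be $\ket{\phi^+_d}\otimes\ket{\tau_{A'B'}}$ --- is only named, not carried out. ``Inverting the $2\times 2$ coefficient matrix'' and ``careful bookkeeping of phases'' is a plan, not an argument; for $d>2$ there is no Jordan-lemma shortcut, and this characterization is exactly the technical core of~\cite{SSKA21}. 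So your proposal is a faithful roadmap of the cited proof rather than a self-contained derivation; to close it you would either have to execute those characterization lemmas explicitly or, as the paper does, cite them.
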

Note that Eqs.~\eqref{dBobs} fully characterize Bob's measurements, given that they are projective and $B_y^{(k)} = B_y^k$.

For the compiled SATWAP Bell inequality, a modification in the modeling of the quantum prover is necessary. Our focus, thus far, has primarily been on compiled games with binary outputs. The prover is initiated in some quantum state $\ket{\psi}$ and its action in the first round is described in the same way, with eqs.~\eqref{eq:postMstate} and~\eqref{eq:probenc} still holding. However, a generalized $k$-th order decrypted observable now reads:
\begin{align}
    \rA_x^{(k)} &= \mathop{\mathds{E}}_{\xe=\enc(x)}\sum_{\aen} \omega^{k \dec(\aen)}\M_{\aen|\xe}^\dagger\M_{\aen|\xe}\\
    &= \mathop{\mathds{E}}_{\xe=\enc(x)}A_{\xe}^{(k)}
\end{align}
and the generalized decrypted correlator takes the form
\begin{equation}\label{eq:cryptocorr_gen}
     \langle\rA_x^{(k)},B_y^{(l)}\rangle = \mathop{\mathds{E}}_{\xe:\dec(\xe) = x}\sum_{\aen}\omega^{k\dec(\aen)}\bra{\psi_{\aen|\xe}}B_y^{(l)}\ket{\psi_{\aen|\xe}},
\end{equation}
with this value operationally giving the expectation value of $\omega^{ak+bl}$, which will be directly used to asses the performance of the prover in the SATWAP game. 

We generalize the modified moment matrix $\tilde{\mathcal{Q}}$ to include the expectation values of generalized observables.
Recall that generalized observables are labeled not only by the input $x \in \X$ (or $y \in \Y$) but also by the degree $k \in \{1,\dots,  d-1\}$ (or $l$).
Every entry of the modified covariance matrix $\tilde{\mathcal{Q}}$ is labeled by two generalized observables, hence is a square matrix of dimension $(|\X|+|\Y|) (d-1)$. Hence, we define a generalization of the matrix $\tilde{\mathcal{Q}}$ as follows
\begin{equation}
    \tilde{\mathcal{Q}}=\left[
\begin{array}{c|c}
\idd_{(d-1)|\X|}  & {C} \\ \hline
 {C}^T & {S}
\end{array}\right]
\end{equation}
where $\idd_{(d-1)|\X|}$ is  $(d-1)|\X|\times (d-1)|\X|$ identity matrix and 
\begin{align}
    {C} &= \sum_{x,y,k,l}\langle \rA_x^{(d-k)},B_y^{(l)}\rangle \ketbra{(k-1)d + x}{(l-1)d+y}\\
    {S} &= \mathop{\mathds{E}}_{x \in \X} \mathop{\mathds{E}}_{\xe: \dec(\xe)=x} \sum_\aen  \bra{\psi_{\aen|\xe}}  B_y^{(d-l)} B_{y'}^{(l')}  \ket{\psi_{\aen|\xe}}\ket{(l-1)d + y}\bra{(l'-1)d +y'}
\end{align}
with $\langle \rA_x^{(k)},B_y^{(l)}\rangle$ being introduced in eq.~\eqref{eq:cryptocorr_gen}. 

As in~\ref{cryptoNPA}, we define a linear operator $\Tilde{\mathds{E}}$ that maps every homogeneous degree-2 polynomial in the variables $\{A_x^{(k)},B_y^{(l)}\}_{x,y,k,l}$  to linear combinations of elements of matrix $\tilde{\mathcal{Q}}$ in the following way 
\begin{align}\label{pseudoExpABkl}
    &\Tilde{\mathds{E}}[A_x^{(d-k)}B_y^{(l)}] = {C}_{(k-1)d + x,(l-1)d+y},&&  \Tilde{\mathds{E}}[B_y^{(l)}\rA_x^{(d-k)}] = {C}^{T}_{(l-1)d+y,(k-1)d + x}\\ \label{pseudoExpAABBkkll}
    &\Tilde{\mathds{E}}[A_x^{(k)}A_{x'}^{(k')}] = \delta_{x,x'}\delta_{k,k'},&&  \Tilde{\mathds{E}}[B_y^{(l)}B_{y'}^{(l')}] = S_{(l-1)d + y,(l'-1)d+y'}
\end{align}
For $y=y'$, $l=l'$ or $x=x'$, $k=k'$ in eq.~\eqref{pseudoExpAABB} we get a consistent mapping of identity
\begin{equation}\label{xxyykkll}
    \Tilde{\mathds{E}}[A_x^{(d-k)}A_{x}^{(k)}] =  \Tilde{\mathds{E}}[\idd] = 1,\qquad  \Tilde{\mathds{E}}[B_y^{(d-l)}B_y^{(l)}] = \Tilde{\mathds{E}}[\idd] = S_{(l-1)d + y,(l-1)d+y} = 1.
\end{equation}

With this formalism, we can prove a result equivalent to Theorem~\eqref{thm:qbound_XOR} for the quantum bound of the compiled SATWAP inequality. Let us first introduce the following lemma.

\begin{lem}\label{lem:PP_SATWAP}
    Let $P_{x,k} = (A_x^k)^\dagger - C_x^{(k)}$, where $A_x^k$ is a generalised observable, and $C_x^{(k)}$ are linear sums of generalised observables defined in eq.~\eqref{c12}. Then there exists a negligible function $\delta_{\mathrm{QHE}}(\cdot)$ such that we have $\Tilde{\mathds{E}}\left[(P_x^k)^\dagger P_x^k
    \right] \geq -\delta_{\mathrm{QHE}}(\kappa)$, where $\Tilde{\mathds{E}}[\cdot]$ is the pseudo-expectation map generalised above.
\end{lem}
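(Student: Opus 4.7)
The plan is to closely mirror the proof of Lemma~\ref{lemmasquare}, replacing binary observables and signs $(-1)^{\dec(\aen)}$ by generalized observables and phases $\omega^{k\dec(\aen)}$. First I would expand
\[
(P_{x,k})^\dagger P_{x,k} = \bigl(A_x^{(k)} - (C_x^{(k)})^\dagger\bigr)\bigl(A_x^{(d-k)} - C_x^{(k)}\bigr) = \mathds{1} - A_x^{(k)} C_x^{(k)} - (C_x^{(k)})^\dagger A_x^{(d-k)} + (C_x^{(k)})^\dagger C_x^{(k)},
\]
using $(A_x^{(k)})^\dagger = A_x^{(d-k)}$ and $A_x^{(k)} A_x^{(d-k)} = \mathds{1}$ for projective generalized observables. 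Applying $\tilde{\mathds{E}}$ linearly and unpacking the definitions~\eqref{pseudoExpABkl}--\eqref{pseudoExpAABBkkll}, the $\mathds{1}$-term yields $1$, the two mixed terms produce phase-weighted crypto-correlators~\eqref{eq:cryptocorr_gen}, namely $\mathop{\mathds{E}}_{\xe=\enc(x)}\sum_{\aen}\omega^{k\dec(\aen)}\bra{\psi}\M_{\aen|\xe}^\dagger C_x^{(k)}\M_{\aen|\xe}\ket{\psi}$ and its conjugate, while the last term produces $\mathop{\mathds{E}}_{z\in\X}\mathop{\mathds{E}}_{\xe=\enc(z)}\sum_{\aen}\bra{\psi}\M_{\aen|\xe}^\dagger (C_x^{(k)})^\dagger C_x^{(k)}\M_{\aen|\xe}\ket{\psi}$, averaged over an independent plaintext $z$.

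Second, I would invoke Lemma~\ref{lem:qhe-bound} to replace the average over $z\in\X$ by the point distribution at $z=x$, paying an additive $\delta_{\mathrm{QHE}}(\kappa)$. To apply the lemma I need that $(C_x^{(k)})^\dagger C_x^{(k)}$ admits a QPT-implementable block encoding with $O(1)$ scale factor and bounded norm. Since each $B_y^{(l)}$ is the $l$-th power of the projective observable $B_y^{(1)}$, Corollary~\ref{cor:product-block-encoding} provides block encodings for all $B_y^{(l)}$, Corollary~\ref{cor:add-block-encoding} upgrades this to $C_x^{(k)}$ (a fixed linear combination with bounded coefficients $a_k, a_k^*\omega^k$), and a further application of Corollary~\ref{cor:product-block-encoding} handles the product $(C_x^{(k)})^\dagger C_x^{(k)}$; its norm is bounded since the $B_y^{(l)}$ are unitaries.

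Third, once all four contributions share the same expectation $\mathop{\mathds{E}}_{\xe=\enc(x)}\sum_{\aen}$, I would collect them under a single bra-ket and recognize the resulting operator as a manifest square. A direct computation gives
\[
\bigl(\omega^{-k\dec(\aen)}\mathds{1} - C_x^{(k)}\bigr)^\dagger\bigl(\omega^{-k\dec(\aen)}\mathds{1} - C_x^{(k)}\bigr) = \mathds{1} - \omega^{k\dec(\aen)}C_x^{(k)} - \omega^{-k\dec(\aen)}(C_x^{(k)})^\dagger + (C_x^{(k)})^\dagger C_x^{(k)},
\]
which exactly matches the aligned expression above. Therefore
\[
\tilde{\mathds{E}}\bigl[(P_{x,k})^\dagger P_{x,k}\bigr] \;\geq\; \mathop{\mathds{E}}_{\xe=\enc(x)}\sum_{\aen}\bigl\|\bigl(\omega^{-k\dec(\aen)}\mathds{1} - C_x^{(k)}\bigr)\M_{\aen|\xe}\ket{\psi}\bigr\|^2 \;-\; \delta_{\mathrm{QHE}}(\kappa) \;\geq\; -\delta_{\mathrm{QHE}}(\kappa),
\]
which is the claim.

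The main obstacle is bookkeeping the complex phases: the correct matching between the factor $\omega^{k\dec(\aen)}$ inherited from $A_x^{(k)}$ and $\omega^{-k\dec(\aen)}$ inherited from $A_x^{(d-k)} = (A_x^{(k)})^\dagger$ is precisely what forces the square to take the form $\omega^{-k\dec(\aen)}\mathds{1} - C_x^{(k)}$ rather than its conjugate. A secondary subtlety is that the generalized observables are non-Hermitian, so in translating $\tilde{\mathds{E}}[(C_x^{(k)})^\dagger A_x^{(d-k)}]$ one must verify that the resulting operational quantity really is the complex conjugate of $\tilde{\mathds{E}}[A_x^{(k)} C_x^{(k)}]$ when sandwiched between $\M_{\aen|\xe}^\dagger$ and $\M_{\aen|\xe}$, so that the cross-terms combine as in the XOR calculation.
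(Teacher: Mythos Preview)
Your proposal is correct and follows essentially the same route as the paper's proof: expand $P_{x,k}^\dagger P_{x,k}$, unpack the pseudo-expectation into crypto-correlators, invoke Lemma~\ref{lem:qhe-bound} on the $(C_x^{(k)})^\dagger C_x^{(k)}$ term via the block-encoding corollaries, and regroup into a manifestly positive expression. Your bookkeeping is in fact a bit tidier than the paper's---you write the final square as $(\omega^{-k\dec(\aen)}\mathds{1}-C_x^{(k)})^\dagger(\omega^{-k\dec(\aen)}\mathds{1}-C_x^{(k)})$ rather than the paper's $(\omega^{k\dec(\aen)}\mathds{1}-C_x^{(k)})^2$, which is the more accurate form given that $C_x^{(k)}$ is non-Hermitian; the conclusion is of course unaffected.
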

\begin{proof}
    The proof follows closely the arguments presented in the proof of Lemma~\ref{lemmasquare}, with the difference that generalised observables are not hermitian, and phases are also complex.
\begin{align*}
        \Tilde{\mathds{E}}\left[P_{x,k}^\dagger P_{x,k}\right]
        &= \Tilde{\mathds{E}}\left[(A_x^{(k)})^\dagger A_x^{(k)}\right] - \Tilde{\mathds{E}}\left[(A_x^{(k)})^\dagger C_x^{(k)}\right] - \Tilde{\mathds{E}}\left[(C_x^{(k)})^\dagger A_x^{k}\right] + \Tilde{\mathds{E}}\left[(C_x^{(k)})^\dagger C_x^{(k)}\right]\\
        &= \Tilde{\mathds{E}}\left[A_x^{(d-k)} A_x^{(k)}\right] 
        - \Tilde{\mathds{E}}\left[A_x^{(d-k)} C_x^{(k)}\right] 
        - \Tilde{\mathds{E}}\left[C_x^{(d-k)} A_x^{(k)}\right] 
        + \Tilde{\mathds{E}}\left[C_x^{(d-k)} C_x^{(k)}\right]\\
        &=1 
        - \mathop{\mathds{E}}_{\xe=\enc(x)} \sum_\aen \omega^{(d-k)\dec(\aen)}\bra{\psi_{\aen|\xe}} C_x^{(k)} \ket{\psi_{\aen|\xe}} 
        - \mathop{\mathds{E}}_{\xe=\enc(x)} \sum_\aen \omega^{k\dec(\aen)}\bra{\psi_{\aen|\xe}} C_x^{(d-k)} \ket{\psi_{\aen|\xe}} \\
        &+ \mathop{\mathds{E}}_{i\in \X} \mathop{\mathds{E}}_{\xe=\enc(i)} \sum_\aen \bra{\psi_{\aen|\xe}} C_x^{(d-k)} C_x^{(k)} \ket{\psi_{\aen|\xe}}\\
        &\approx_{\delta_{\mathrm{QHE}}(\kappa)} \mathop{\mathds{E}}_{\xe=\enc(x)} \sum_\aen \bra{\psi_{\aen|\xe}}
        \left(  \mathds{1} 
            - \omega^{-k \dec(\aen)} C_{x}^{(k)} 
            - \omega^{k \dec(\aen)}  C_{x}^{(d-k)}
            + C_{x}^{(d-k)} C_x^{(k)} \right)
        \ket{\psi_{\aen|\xe}}\\
        &\approx_{\delta_{\mathrm{QHE}}(\kappa)} \mathop{\mathds{E}}_{\xe=\enc(x)} \sum_\aen \bra{\psi_{\aen|\xe}}
        \left( 
        \omega^{k\dec(\aen)} \mathds{1} - C_{x}^{(k)}
        \right)^2
        \ket{\psi_{\aen|\xe}} \geq 0
\end{align*}
In the second and in the third line we used the linearity of the pseudo-expectation map over sums of generalised observables and the definitions stated above.
In the fourth step we apply Lemma~\ref{lem:qhe-bound} to the last addend and we fix the index $i=x$, at the price of a negligible function $\delta_{\mathrm{QHF}}(\kappa)$. We can apply this lemma because $C_{x}^{(d-k)} C_x^{(k)}$ has a QPT-implementable block encoding, since is composed by linear sums and multiplications of QPT-implementable generalised observables (Corollaries~\ref{cor:add-block-encoding} and~\ref{cor:product-block-encoding}).
Finally, using linearity we can group all the terms inside of one average over $\xe=\enc(x)$; we rephrase this sum as an operator square, which is non-negative by definition.
\end{proof}
\begin{thm}
    Let's consider the $d$-dimensional SATWAP Bell inequality, with quantum bound $(\beta_d^{SATWAP})_q$, and its compiled version through Kalai protocol.
    If $d$ is polynomial w.r.t. the security parameter $\kappa$, then the quantum bound of the compiled SATWAP inequality is $(\beta_d^{SATWAP})_q + \theta(\kappa)$, where $\theta(\cdot)$ is a negligible function.
\end{thm}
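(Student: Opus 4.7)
The plan is to mirror the structure of the proof of Theorem~\ref{thm:qbound_XOR} for compiled XOR games, now exploiting the explicit SOS decomposition of the shifted SATWAP operator given in~\eqref{shiftedSATWAP} together with the generalised pseudo-expectation map $\Tilde{\mathds{E}}$ from~\eqref{pseudoExpABkl}--\eqref{pseudoExpAABBkkll} and the lower bound of Lemma~\ref{lem:PP_SATWAP}. Concretely, I would first verify that $\Tilde{\mathds{E}}[\mathcal{B}_d^{\text{SATWAP}}]$ coincides with the compiled SATWAP bias $\bar{\beta}_q$: expanding $\mathcal{B}_d^{\text{SATWAP}}$ as a linear combination of monomials $A_x^{(k)} B_y^{(d-k)}$ and using~\eqref{pseudoExpABkl}, each such term is sent to the corresponding decrypted correlator $\langle \rA_x^{(d-k)}, B_y^{(d-k)}\rangle$ defined in~\eqref{eq:cryptocorr_gen}, which is precisely the quantity the verifier reconstructs when assessing the prover's performance in the compiled game.

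Next I would apply $\Tilde{\mathds{E}}$ to both sides of~\eqref{shiftedSATWAP}. By linearity this yields
\begin{equation*}
    (\beta_d^{\text{SATWAP}})_q - \bar{\beta}_q
    = \frac{1}{2}\sum_{k=1}^{d-1}\left(\Tilde{\mathds{E}}[P_{1,k}^\dagger P_{1,k}] + \Tilde{\mathds{E}}[P_{2,k}^\dagger P_{2,k}]\right),
\end{equation*}
where on the left I used that $\Tilde{\mathds{E}}[\idd]=1$ and the identification from the previous paragraph. Lemma~\ref{lem:PP_SATWAP} gives $\Tilde{\mathds{E}}[P_{x,k}^\dagger P_{x,k}] \geq -\delta_{\mathrm{QHE}}(\kappa)$ for each $x\in\{1,2\}$ and $k\in\{1,\dots,d-1\}$, hence
\begin{equation*}
    \bar{\beta}_q \leq (\beta_d^{\text{SATWAP}})_q + (d-1)\,\delta_{\mathrm{QHE}}(\kappa).
\end{equation*}
The assumption that $d=\mathrm{poly}(\kappa)$ is then exactly what is needed to ensure that $\theta(\kappa) := (d-1)\delta_{\mathrm{QHE}}(\kappa)$ remains a negligible function of the security parameter, since the product of a polynomial and a negligible function is negligible.

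The main (essentially only) obstacle is the accumulation of $2(d-1)$ QHE errors coming from the $2(d-1)$ terms in the SOS decomposition: each invocation of Lemma~\ref{lem:PP_SATWAP} contributes an independent instance of $\delta_{\mathrm{QHE}}$ (one per index $(x,k)$, since the block encoding of $C_x^{(d-k)}C_x^{(k)}$ depends on $k$), and one has to keep track of the fact that this growth is polynomial rather than, say, exponential in $d$. This is where the hypothesis $d = \mathrm{poly}(\kappa)$ plays its role, and it is also worth checking that the block encodings of $C_x^{(k)}$ and of the product $C_x^{(d-k)}C_x^{(k)}$ are QPT-implementable with $O(1)$ scale factor uniformly in $k$, so that Corollaries~\ref{cor:add-block-encoding} and~\ref{cor:product-block-encoding} apply term-by-term with the same negligible $\delta_{\mathrm{QHE}}$. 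Once these bookkeeping points are established, combining the SOS identity, the linearity of $\Tilde{\mathds{E}}$, and Lemma~\ref{lem:PP_SATWAP} yields the theorem.
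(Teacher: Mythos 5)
Your proposal is correct and follows essentially the same route as the paper: apply the generalised pseudo-expectation map to the SOS decomposition~\eqref{shiftedSATWAP} and invoke Lemma~\ref{lem:PP_SATWAP} term by term to bound $\bar{\beta}_q$ by $(\beta_d^{\text{SATWAP}})_q$ plus an accumulated negligible error. In fact your bookkeeping of the $2(d-1)$ error terms, and the explicit observation that $d=\mathrm{poly}(\kappa)$ is what keeps $(d-1)\,\delta_{\mathrm{QHE}}(\kappa)$ negligible, is spelled out more carefully than in the paper's own proof, which absorbs this into the single function $\theta(\kappa)$.
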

\begin{proof}
Applying the map $\Tilde{\mathds{E}}$ to the shifted SATWAP operator given in eq.~\eqref{shiftedSATWAP} gives :
\begin{align}\label{pseudoexpectation_shiftedSATWAP}
    \Tilde{\mathds{E}}\left[\beta_q \mathds{1} - \mathcal{B}_d^{\text{SATWAP}}\right] &= 
    \frac{1}{2} \sum_{k=1}^{d-1} \left( 
\Tilde{\mathds{E}}\left[P_{1,k}^\dagger P_{1,k}\right] +  \Tilde{\mathds{E}}\left[P_{2,k}^\dagger P_{2,k}\right]
    \right),
\end{align}
In Lemma~\ref{lem:PP_SATWAP} we proved that all elements in the sum on the r.h.s. are non-negative up to a negligible function, \emph{i.e.}
\begin{equation}\label{jedna}
\Tilde{\mathds{E}}\left[P_{x,k}^\dagger P_{x,k}\right] \approx_{\delta} \mathop{\mathds{E}}_{\xe=\enc(x)} \sum_\aen \bra{\psi} \M_{\aen|\xe}^{\dagger} \left(\omega^{k\dec(\aen)}\idd - C_{x}^{(k)}\right)^2 \M_{\aen|\xe}\ket{\psi}
\end{equation}
This implies that there exists a negligible function of the security parameter $\theta(\kappa)$ such that the maximal quantum score $\bar{\beta}_q$ in the compiled SATWAP inequality becomes
\begin{equation}
    \bar{\beta}_q \leq 2(d-1) + \theta(\kappa).
\end{equation}
\end{proof}

Let us now explore self-testing properties, and assume that a QPT prover achieved the score $\bar{\beta}_q$ in the compiled SATWAP inequality.
Given eq.~\eqref{pseudoexpectation_shiftedSATWAP}, reaching the optimal quantum score implies
\begin{equation}\label{zero}
0 = \frac{1}{2} \sum_{k=1}^{d-1} \left( 
\Tilde{\mathds{E}}\left[P_{1,k}^\dagger P_{1,k}\right] +  \Tilde{\mathds{E}}\left[P_{2,k}^\dagger P_{2,k}\right]
    \right).
\end{equation}
By using eqs.~\eqref{jedna} and~\eqref{zero} we get
\begin{equation}
    \mathop{\mathds{E}}_{\xe=\enc(x)} \sum_\aen \bra{\psi} \M_{\aen|\xe}^{\dagger} \left(\omega^{k\dec(\aen)}\idd - C_{x}^{(k)}\right)^2 \M_{\aen|\xe}\ket{\psi} \approx_{\delta} 0,
\end{equation}
where we write $\delta$ as shorthand for $\delta_\mathrm{QHE}$.
This further gives us
\begin{equation}
    \mathop{\mathds{E}}_{\xe=\enc(x)} \sum_\aen  \left\|C_x^{(k)}\M_{\aen|\xe}\ket{\psi} - \omega^{k\dec(\aen)}\M_{\aen|\xe}\ket{\psi}\right\|^2 \approx_{\delta} 0,
\end{equation}
further implying that for every $k$, every $\xe$ decrypting to $x$, and every $\aen$
\begin{equation}\label{ca}
     C_x^{(k)}\M_{\aen|\xe}\ket{\psi} \approx_{\delta} \omega^{k\dec(\aen)}\M_{\aen|\xe}\ket{\psi}.
\end{equation}
If we fix $k'$ the previous relation implies
\begin{align*}
     \left[C_x^{(1)}\right]^{k'}\M_{\aen|\xe}\ket{\psi} &\approx_{\delta} \left[C_x^{(1)}\right]^{k'-1}C_x^{(1)}\M_{\aen|\xe}\ket{\psi}\\
     &\approx_{2\cdot \delta} \omega^{\dec(\aen)}\left[C_x^{(1)}\right]^{k'-1}\M_{\aen|\xe}\ket{\psi}\\
     &\approx_{3\cdot \delta} \omega^{2\dec(\aen)}\left[C_x^{(1)}\right]^{k'-2}\M_{\aen|\xe}\ket{\psi}\\
     &\cdots\\
     &\approx_{k' \cdot \delta} \omega^{k'\dec(\aen)}\M_{\aen|\xe}\ket{\psi}\\
     &\approx_{(k'+1) \cdot \delta} C_x^{(k')}\M_{\aen|\xe}\ket{\psi},
\end{align*}
where in the second line we used eq.~\eqref{ca} for $k=1$, which we used successively until the last line where we used again eq.~\eqref{ca} for $k = k'$.
By summing over $\aen$, and noting that $k' < d$ which is polynomially bounded in the security parameter $\kappa$ we get:
\begin{equation}
    \left[C_x^{(1)}\right]^{k}\ket{\psi} \approx_{O(\delta)}  C_x^{(k)}\ket{\psi}.
\end{equation}
Now we develop the following expression
for an arbitrary $k$
\begin{align*}
    C_x^{(d-k)}C_x^{(k)}\M_{\aen|\xe}\ket{\psi} &\approx_\delta\omega^{k\dec(\aen)}C_x^{(d-k)}\M_{\aen|\xe}\ket{\psi}\\
    &\approx_{2\delta} \omega^{k\dec(\aen)}\omega^{(d-k)\dec(\aen)}\M_{\aen|\xe}\ket{\psi}\\
    &=\omega^{d\dec(\aen)}\M_{\aen|\xe}\ket{\psi}\\
&= \M_{\aen|\xe}\ket{\psi},
\end{align*}
where for the first line we used eq.~\eqref{ca}, for the second line we used the same equation by changing $k$ to $d-k$, and the last line follows from $\omega^d = 1$. Again by summing over $\aen$ we obtain
\begin{equation}
    C_x^{(d-k)}C_x^{(k)}\ket{\psi} \approx_{O(\delta)} \ket{\psi},
\end{equation}
Thus, on the support of $\ket{\psi}$ the observables $C_x^{(k)}$  satisfy condition
\begin{equation}
    \left[C_x^{(1)}\right]^{k} \approx_{O(\delta)} C_x^{(k)},\qquad C_x^{(d-k)}C_x^{(k)} \approx_{O(\delta)} \idd.
\end{equation}
In~\cite{SSKA21} it is proven that these two equations for $O(\delta) = 0$ imply though Lemmas 1, 2, and 3 in the supplementary material therein that there exists a unitary $U$ such that
\begin{equation}\label{exactST}
    U_B B_1^{(1)} U_B^\dagger = Z_d \otimes \mathds{1}_{B'}, \qquad U_B B_2^{(1)} U_B^\dagger = T_d \otimes \mathds{1}_{B'},
\end{equation}
with $Z_d$ and $T_d$ being given in Lemma~\ref{sska21}. Since $B_1^{(1)}$ and $B_2^{(1)}$ are unitary we get $B_1^{(k)} = \left[B_1^{(1)}\right]^k$ and $B_2^{(k)} = \left[B_2^{(1)}\right]^k$.
We conjecture that the derivation procedure is robust to noise, implying that in the case of reaching the optimal quantum violation of the compiled SATWAP inequality relations~\eqref{exactST} hold up to a negligible function of the security parameter.

\section{Computational self-test of any two binary measurements from a single prover}\label{sec:st_2qubitmeas}

In the previous sections we showed that the KLVY compiler preserves the quantum bound for some classes of games; the natural follow-up question is to investigate whether the self-testing properties can also be translated.
This opens up the possibility of defining single-prover self-testing protocols, based on computational assumptions.

Let us start by focusing on the self-test of any pair of binary observable.
We identify in the non-locality literature an inequality that self tests such configuration; furthermore this is a specific XOR game, for which our result apply.
We explicitly carry out the calculations for the robust case as well, showing the feasibility of a single-prover self-test of any pair of qubit observables, with computational assumptions.

More precisely, consider the class of correlation Bell inequalities as introduced in \cite{Le_2023} and further discussed in \cite{barizien2023custom}. This family of Bell inequalities is characterized by three parameters, namely, $\mu$, $\nu$, and $\chi$. The Bell operator within this framework is defined as
\begin{align}\label{param}
\begin{split}
    \mathcal{B}_{\mu\nu\chi} &= \cos(\mu+\nu)\cos(\mu + \chi)(\cos(\chi)A_0-\cos(\nu)A_1)\otimes B_0 \\
    &\quad + \cos(\nu)\cos(\chi) (-\cos(\mu+\chi)A_0 +\cos(\mu+\nu)A_1)\otimes B_1.
    \end{split}
\end{align}
This formulation encompasses the CHSH inequality for $\mu=\chi=0$ and $\nu=\pi$. A Bell inequality belonging to this class exhibits quantum violations when $\cos(\mu+\chi)\cos(\mu+\nu)\cos(\nu)\cos(\chi) < 0$. In instances where this condition is met, achieving the quantum limit,
\begin{equation}\label{pg}
\beta_q = \pm \sin(\mu)\sin(\chi-\nu)\sin(\mu+\nu+\chi)
\end{equation}
self-tests the maximally entangled pair of qubits \cite{wooltorton2023deviceindependent}. This characteristic renders it a viable candidate for our compilation procedure. The sum-of-squares (SOS) decomposition of the shifted Bell operator is
\begin{align}\label{nk}
    \beta_q\idd - \mathcal{B}_{\mu\nu\chi}= c_0P_0^\dagger P_0 + c_1P_1^\dagger P_1,
\end{align}
where
\begin{equation*} 
c_0 = -\frac{\cos(\chi)\cos(\mu+\chi)}{2\sin(\mu)},\quad c_1 = -\frac{\cos(\nu)\cos(\mu+\nu)}{2\sin(\mu)}
\end{equation*}
and
\begin{align*}
    P_0 &= \sin(\mu)A_0 + \cos(\mu+\nu)B_0 -\cos(\nu)B_1\\
    P_1 &= \sin(\mu)A_1 + \cos(\mu+\chi)B_0 -\cos(\chi)B_1.
\end{align*}
Given the form of the SOS decomposition~\eqref{nk}, the optimal value of the compiled inequality $\bar{\beta}_q$ is the same as the quantum bound $\beta_q$ given in~\eqref{pg} up to a negligible function. Taking the pseudo-expectation value of Eq.~\eqref{nk} we get
\begin{align}
    \beta_q - \bar{\beta} = c_0\Tilde{\mathds{E}}\left[P_0^\dagger P_0\right] + c_1\Tilde{\mathds{E}}\left[P_1^\dagger P_1\right],
\end{align}
where $\bar{\beta}$ is the violation of the compiled inequality. As usual, we denote $\hat{B}_0 = (-\cos(\mu+\nu)B_0 +\cos(\nu)B_1)/\sin(\mu)$ and $\hat{B}_1 = -(\cos(\mu+\chi)B_0 +\cos(\chi)B_1)/\sin(\mu)$.
If $\bar{\beta} = \beta_q - \varepsilon$, the equation above simplifies to
\begin{equation}\label{wte}
    \varepsilon = c_0\Tilde{\mathds{E}}\left[P_0^\dagger P_0\right] + c_1\Tilde{\mathds{E}}\left[P_1^\dagger P_1\right]
\end{equation}
where, as outlined in Lemma~\ref{lemmasquare}, the two terms on the right-hand side differ from a nonnegative function only by a negligible function contingent upon the quantum homomorphic encryption scheme:
\begin{equation*}
     \Tilde{\mathds{E}}\left[P_x^\dagger P_x\right]
     \geq
     \mathop{\mathds{E}}_{\xe=\enc(x)} \sum_\aen  \left\|\hat{B}_x\ket{\psi_{\aen|\xe}} -(-1)^{\dec(\aen)}\ket{\psi_{\aen|\xe}}\right\|^2 - \delta_{\mathrm{QHE}}(\kappa).
\end{equation*}
Employing this bound in eq.~\eqref{wte} yields:
\begin{align*}
    \varepsilon  + (c_0 + c_1) \delta_{\mathrm{QHE}}(\kappa) \geq
    & c_0\mathop{\mathds{E}}_{\xe=\enc(0)} \sum_\aen  \left\|\hat{B}_0\ket{\psi_{\aen|\xe}} -(-1)^{\dec(\aen)}\ket{\psi_{\aen|\xe}}\right\|^2 \\
    +& c_1\mathop{\mathds{E}}_{\xe=\enc(1)} \sum_\aen  \left\|\hat{B}_1\ket{\psi_{\aen|\xe}} -(-1)^{\dec(\aen)}\ket{\psi_{\aen|\xe}}\right\|^2
\end{align*}
where now the right-hand side comprises solely positive quantities

Hence, for all values of $x$ we obtain:
\begin{align}\label{eq:SOSconstraint_robust}
    \mathop{\mathds{E}}_{\xe=\enc(x)} \sum_\aen  \left\|\hat{B}_x\ket{\psi_{\aen|\xe}} -(-1)^{\dec(\aen)}\ket{\psi_{\aen|\xe}}\right\|^2 \leq \frac{\varepsilon}{c_x} + \frac{c_0+c_1}{c_x}\delta_{\mathrm{QHE}}(\kappa)
    \qquad \forall x \in\{0,1\}.
\end{align}

We initially consider the noiseless and infinite security case \emph{i.e.} $\varepsilon = 0$ and $\delta_{\mathrm{QHE}} = 0$ for pedagogical purposes, acknowledging that the actual scenario does not conform to our simplified model. The equation above would imply that every addend is equal to zero, implying
\begin{align*}
   \hat{B}_x\ket{\psi_{\aen|\xe}} =(-1)^{\dec(\aen)}\ket{\psi_{\aen|\xe}}, \qquad \forall\xe:\dec(\xe) = x, \forall\aen.
\end{align*}
When the quantum bound is saturated, the square of the hat operator $\hat{B}_x$ acts like identity on $\ket{\psi_{\aen|\xe}}$:
\begin{align}\label{eq:Bhat_squared_id}
    (\hat{B}_x)^2 \ket{\psi_{\aen|\xe}} = (-1)^{\dec(\aen)} \hat{B}_x\ket{\psi_{\aen|\xe}} = \ket{\psi_{\aen|\xe}} \qquad \forall\xe:\dec(\xe) = x, \forall\aen.
\end{align}
By the definition of $\hat{B}_0$, and assuming $B_0^2 = B_1^2 = \mathds{1}$, the following is always true for every state:
\begin{align}\label{eq:Bhat_squared_comm}
    (\hat{B}_0)^2 \ket{\psi_{\aen|\xe}}
&=\left(\frac{\cos^2(\nu)+\cos^2(\nu+\mu)}{\sin^2(\mu)} \mathds{1} - \frac{\cos(\nu)\cos(\nu+\mu)}{\sin^2(\mu)}  \{B_0,B_1\} \right)\ket{\psi_{\aen|\xe}} \nonumber\\
    &= k_\mathds{1} \ket{\psi_{\aen|\xe}} - k_{\{\}}\{B_0,B_1\}\ket{\psi_{\aen|\xe}}
\end{align}
where in the second line we simply fixed a notation for the coefficients of the operators.
Equations \eqref{eq:Bhat_squared_id} and \eqref{eq:Bhat_squared_comm} together fix the value of the anti-commutator of Bob's observables
when the maximal quantum violation is achieved:
\begin{equation*}
    \{B_0,B_1\}\ket{\psi_{\aen|\xe}} 
    = \frac{k_\mathds{1}-1}{ k_{\{\}}} \ket{\psi_{\aen|\xe}}
    = 2 \cos(\mu)\ket{\psi_{\aen|\xe}} \qquad \forall\xe:\dec(\xe) = x, \forall\aen
\end{equation*}
where the second inequality is a simple trigonometric identity
\begin{align*}
    \frac{k_\mathds{1} - 1}{k_{\{\}}} = 
    \frac{\cos^2(\nu)+\cos^2(\nu+\mu) - \sin^2(\mu)}{\sin^2(\mu)}
    \frac{\sin^2(\mu)}{\cos(\nu)\cos(\nu+\mu)}
    = 2 \cos(\mu).
\end{align*}
Since all observables are binary, this is a self-test.
Indeed, Jordan's lemma ensures that $B_0$ and $B_1$ can be simultaneously block-diagonalised such that all blocks are either of size $2\times 2$ or $1\times 1$.  In the same way like in~\cite{Supi__2022}, we embed every $1\times 1$ into a Hilbert space of larger dimension. This operation does not affect the correlation probabilities, and it simplifies our analysis, as we work with a Jordan decomposition in which all blocks are of the size $2\times 2$. Without loss of generalisation, we can apply a local unitary to the observable $B_0$ and bring it to Pauli's $\sigma_x$ in every  $2\times 2$ block:
\begin{align}
    UB_0U^\dagger = \sigma_x\otimes\sum_{i}\proj{i}.
\end{align}
We can also safely bring all the $2\times 2$ blocks of $B_1$ to the $XZ$ plane of the Bloch sphere; by fixing the value of the anti-commutator, we get that $B_1$ has $\cos(\mu)\sigma_x + \sin(\mu)\sigma_z$ in every block. 
For every $\mu \neq 0$ we can find $\nu$ and $\chi$ such that the condition $\cos(\mu+\chi)\cos(\mu+\nu)\cos(\nu)\cos(\chi) < 0$ is satisfied, implying that the quantum bound is larger than the classical. This implies that in the noiseless and infinite QHE security case we can perform a single-prover self-test of any two observables applied by Bob, \emph{i.e.} for every $\mu$ there is a compiled XOR game whose maximal violation self-tests Bob's measurements implying:
\begin{equation}\label{exactST_binary}
    UB_0U^\dagger = \sigma_x \otimes \mathds{1}, \qquad UB_1U^\dagger = (\cos(\mu)\sigma_x+\sin(\mu)\sigma_z) \otimes \mathds{1}
\end{equation}

The assumptions made in our simplified case are unrealistic, both from a cryptographic and experimental point of view.
Assuming a finite security parameter $\kappa$ and small experimental deviations $\varepsilon$, we can still prove that the anti-commutator of Bob's observables is close to the noiseless value, obtaining a robust self-test for Bob's binary observables.
Let's start considering Eq.~\eqref{eq:SOSconstraint_robust} with $\xe=\enc(0)$, and open the sum over $\aen$:
\begin{align*}
    \mathop{\mathds{E}}_{\xe=\enc(0)}
    \left[
    \sum_{\aen=\enc(0)}  \left\| (\hat{B}_0 - \mathds{1})\ket{\psi_{\aen|\xe}}\right\|^2
    +\sum_{\aen=\enc(1)}  \left\| (\hat{B}_0 + \mathds{1})\ket{\psi_{\aen|\xe}}\right\|^2
    \right]
    \leq \frac{\varepsilon}{c_0} + \frac{c_0+c_1}{c_0}\delta_{\mathrm{QHE}}(\kappa)
\end{align*}
We call the vectors inside the norms as $\ket{\Delta_{\aen|\xe}^\pm} = (\hat{B}_0 \pm \mathds{1})\ket{\psi_{\aen|\xe}}$, hence we can write:
\begin{align}\label{eq:SOS_Delta}
    \mathop{\mathds{E}}_{\xe=\enc(0)}
    \left[\sum_{\aen=\enc(0)}  \left\| \ket{\Delta_{\aen|\xe}^-}\right\|^2
    +\sum_{\aen=\enc(1)}  \left\| \ket{\Delta_{\aen|\xe}^+}\right\|^2
    \right]\leq \frac{\varepsilon}{c_0} + \frac{c_0+c_1}{c_0}\delta_{\mathrm{QHE}}(\kappa)
\end{align}
The square of $\hat{B}_0$ is
\begin{align*}
    (\hat{B}_0)^2 
    &= \mathds{1} + (\hat{B}_0 - \mathds{1}) + \hat{B}_0 (\hat{B}_0 - \mathds{1})
    = \mathds{1} - (\hat{B}_0 + \mathds{1}) + \hat{B}_0 (\hat{B}_0 + \mathds{1})\\
    &= k_\mathds{1} \mathds{1} - k_{\{\}}\{B_0,B_1\}
\end{align*}
where the first lines are trivial identities, and in the second line we rewrote Eq.~\eqref{eq:Bhat_squared_comm}.
Rearranging the terms, and applying the observables on $\ket{\psi_{\aen|\xe}}$, we obtain the following formula for the anti-commutator of Bob's observables:
\begin{align*}
    \left[ 2 \cos(\mu) \mathds{1} - \{B_0,B_1\} \right] \ket{\psi_{\aen|\xe}} 
    &= \pm \frac{1}{ k_{\{\}} } \ket{\Delta_{\aen|\xe}^\mp}
    + \frac{1}{ k_{\{\}} } \hat{B}_0 \ket{\Delta_{\aen|\xe}^\mp}
\end{align*}
where to find $2\cos(\mu)$ we used again the trigonometric identity above.
Let us first focus on the equation with $\ket{\Delta_{\aen|\xe}^-}$.
We compute its norm squared, averaging over $\xe=\enc(0)$ and summing over $\aen =\enc(0)$
\begin{align*}
    \mathop{\mathds{E}}_{\xe=\enc(0)} \sum_{\aen=\enc(0)} 
    \left\| \left[ 2 \cos(\mu)\mathds{1} - \{B_0,B_1\} \right] \ket{\psi_{\aen|\xe}} \right\|^2 
    &= \frac{1}{ k_{\{\}}^2 } \mathop{\mathds{E}}_{\xe=\enc(0)} \sum_{\aen=\enc(0)} \left\|  \ket{\Delta_{\aen|\xe}^-} +  \hat{B}_0 \ket{\Delta_{\aen|\xe}^-}\right\|^2 \\
    &\leq \frac{1}{ k_{\{\}}^2 } \mathop{\mathds{E}}_{\xe=\enc(0)} \sum_{\aen=\enc(0)} 
    \left(\left\| \ket{\Delta_{\aen|\xe}^-}\right\|^2 +  \left\| \hat{B}_0 \ket{\Delta_{\aen|\xe}^-}\right\|^2\right)\\
    &\leq \frac{1}{ k_{\{\}}^2 } \mathop{\mathds{E}}_{\xe=\enc(0)} \sum_{\aen=\enc(0)} 
    \left(\left\| \ket{\Delta_{\aen|\xe}^-}\right\|^2 + \|\hat{B}_0 \|^2 \left\| \ket{\Delta_{\aen|\xe}^-}\right\|^2\right)\\
    &=  \frac{1 + \|\hat{B}_0 \|^2}{ k_{\{\}}^2 } \mathop{\mathds{E}}_{\xe=\enc(0)} \sum_{\aen=\enc(0)} 
    \left\| \ket{\Delta_{\aen|\xe}^-}\right\|^2
\end{align*}
where we used Cauchy-Schwarz and the triangle inequality.
Similarly, we can find a similar bound summing over $\aen =\enc(1)$ and using the decomposition of the anti-commutator with the vectors $\ket{\Delta_{\aen|\xe}^+}$:
\begin{align*}
    \mathop{\mathds{E}}_{\xe=\enc(0)} \sum_{\aen=\enc(1)} 
    \left\| \left[ 2 \cos(\mu)\mathds{1} - \{B_0,B_1\} \right] \ket{\psi_{\aen|\xe}} \right\|^2 
    &= \frac{1}{ k_{\{\}}^2 } \mathop{\mathds{E}}_{\xe=\enc(0)} \sum_{\aen=\enc(1)} \left\|  -\ket{\Delta_{\aen|\xe}^+} +  \hat{B}_0 \ket{\Delta_{\aen|\xe}^+}\right\|^2 \\
    &\leq  \frac{1 + \|\hat{B}_0 \|^2}{ k_{\{\}}^2 } \mathop{\mathds{E}}_{\xe=\enc(0)} \sum_{\aen=\enc(1)} 
    \left\| \ket{\Delta_{\aen|\xe}^+}\right\|^2
\end{align*}
Now, putting together these two result we can bound the sum over $\aen$
\begin{align*}
\mathop{\mathds{E}}_{\xe=\enc(0)} \sum_{\aen} 
    \left\| \left[ 2 \cos(\mu)\mathds{1} - \{B_0,B_1\} \right] \ket{\psi_{\aen|\xe}} \right\|^2 &\leq 
     \frac{1 + \|\hat{B}_0 \|^2}{ k_{\{\}}^2 } \mathop{\mathds{E}}_{\xe=\enc(0)}
    \left[
    \sum_{\aen=\enc(0)} \left\| \ket{\Delta_{\aen|\xe}^-}\right\|^2
    +\sum_{\aen=\enc(1)} \left\| \ket{\Delta_{\aen|\xe}^+}\right\|^2
    \right]\\
    &\leq \frac{1 + \|\hat{B}_0 \|^2}{ k_{\{\}}^2 }  \left( \frac{\varepsilon}{c_0} + \frac{c_0+c_1}{c_0}\delta_{\mathrm{QHE}}(\kappa)\right)\\
    &\leq \sin^2(\mu) \left[ 2 \cos(\mu)+\frac{1}{\cos(\nu)\cos(\nu+\mu)} \right]
    \left( \frac{\varepsilon}{c_0} + \frac{c_0+c_1}{c_0}\delta_{\mathrm{QHE}}(\kappa)\right)
\end{align*}
where in the second line we used the bound found in Eq.~\eqref{eq:SOS_Delta}.
The angles are fixed from the inequality, therefore in the last line we have some negligible functions in $\epsilon$ and $\kappa$ multiplied by some constant factors depending on the angles.
Therefore we proved that, on the support of the states used in the compiled game, the anti-commutator of Bob's observable is $2 \cos(\mu)\mathds{1}$ up to a function $\eta_{ anticomm }$:
\begin{equation}
     \mathop{\mathds{E}}_{\xe=\enc(0)} \sum_{\aen} 
    \left\| \left[ 2 \cos(\mu)\mathds{1} - \{B_0,B_1\} \right] \ket{\psi_{\aen|\xe}} \right\|^2 
    \leq \eta_{ anticomm }(\epsilon,\kappa, \nu, \mu)
\end{equation}

We can adapt the procedure for $x=1$. We start by reformulating Eq. \eqref{eq:Bhat_squared_comm} for the observable $\hat{B}_1$:
\begin{align*}
    (\hat{B}_1)^2 \ket{\psi_{\aen|\xe}}
&=\left(\frac{\cos^2(\chi)+\cos^2(\chi+\mu)}{\sin^2(\mu)} \mathds{1} + \frac{\cos(\chi)\cos(\chi+\mu)}{\sin^2(\mu)}  \{B_0,B_1\} \right)\ket{\psi_{\aen|\xe}}
\end{align*}
Following the same steps, we get an equivalent statement for the anticommutator averaged over $\xe = \enc(1)$:
\begin{align*}
\mathop{\mathds{E}}_{\xe=\enc(1)} \sum_{\aen} 
    \left\| \left[ 2 \cos(\mu)\mathds{1} + \{B_0,B_1\} \right] \ket{\psi_{\aen|\xe}} \right\|^2 &\leq \sin^2(\mu) \left[ 2 \cos(\mu)+\frac{1}{\cos(\chi)\cos(\chi+\mu)} \right]
    \left( \frac{\varepsilon}{c_1} + \frac{c_0+c_1}{c_1}\delta_{\mathrm{QHE}}(\kappa)\right)\\
    & \leq \eta^{'}_{ anticomm }(\epsilon,\kappa, \chi, \mu).
\end{align*}

These equations imply that the anticommutator of $B_0$ and $B_1$ is approximately equal to $2\cos(\mu)\idd$. For binary observables, this is enough to prove robust self-testing of Bob's observables. The proof of this statement can be found in \cite{Kan_STcommutation}.

\section{Computational self-test of three Pauli observables}

In this section, we present a single-prover self-testing protocol for the triplet of Pauli observables. As in previous cases, we rely on a known inequality—the Elegant Bell inequality—whose quantum bound is preserved and for which a cryptographic sum-of-squares (SOS) decomposition is well-defined. The SOS decomposition then serves as the key tool for establishing the robust self-testing result.
The result provides a complete toolkit for certifying an arbitrary single-qubit state or operation under standard cryptographic assumptions.

The Elegant Bell inequality, first introduced in \cite{Gisin2009}, is defined for two parties and binary observables, with input sets $\X \in [4]$ and $\Y \in [3]$.
The Bell operator has the form
\begin{align*}
    \mathcal{B}^{el} =  (A_0 + A_1 - A_2 - A_3)\otimes B_0 
    + (A_0 - A_1 + A_2 - A_3)\otimes B_1 
    + (A_0 - A_1 - A_2 + A_3)\otimes B_2
\end{align*}
The quantum bound $\beta_q= 4 \sqrt{3}$ self-tests a maximally entangled pair of qubits and all three Pauli observables measured by Bob~\cite{Acin_2016}. 

The nonlocal game corresponding to the elegant Bell inequality is an XOR game, therefore Theorem~\ref{thm:qbound_XOR} applies and the quantum bound of the compiled game is preserved. 
In the following section we prove that the self-testing properties of this game are also preserved by Kalai compilation.

The standard SOS decomposition for the shifted elegant Bell operator is
\begin{align}
     \beta_q \mathds{1} - \mathcal{B}^{el}=
    \frac{\sqrt{3}}{2} \sum_{i = 0}^3 P_i^2
\end{align}
with the following definition for the polynomials $P_i$:
\begin{align*}
    &P_0 = A_0 - \frac{B_0+B_1+B_2}{\sqrt{3}}, &&\qquad 
    P_1 = A_1 - \frac{B_0-B_1-B_2}{\sqrt{3}},\\
    &P_2 = A_2 - \frac{-B_0+B_1-B_2}{\sqrt{3}}, &&\qquad
    P_3 = A_3 - \frac{-B_0-B_1+B_2}{\sqrt{3}}.\\
\end{align*}
We use the standard notation $P_i = A_i - \hat{B}_i$ and repeat the procedure of taking the pseudo-expectation value of the shifted Bell operator to get the bound on the sum-of-square elements. When the compiled game reaches the score $\bar{\beta} = \beta_q - \epsilon$, the following is true
\begin{align}
    \mathop{\mathds{E}}_{\xe=\enc(x)} \sum_\aen  \left\|\hat{B}_x \ket{\psi_{\aen|\xe}} - (-1)^{\dec(\aen)}\ket{\psi_{\aen|\xe}}\right\|^2
    \leq \frac{2}{\sqrt{3}}\varepsilon + 4 \delta_{\mathrm{QHE}}(\kappa) \qquad \forall x.
\end{align}
For simplicity we study first the noiseless case, with $\delta_{\mathrm{QHE}}(\kappa) = 0$ and $\varepsilon=0$, implying that
\begin{align*}
\hat{B}_x\ket{\psi_{\aen|\xe}} - (-1)^{\dec(\aen)}\ket{\psi_{\aen|\xe}} = 0, \qquad \forall \xe:\dec(\xe) = x, \forall x.
\end{align*}
Once again, this proves that the square of the hat operator acts like identity on $\ket{\psi_{\aen|\xe}}$:
\begin{align}\label{eq:EBI_Bhatsquared}
    (\hat{B}_x)^2 \ket{\psi_{\aen|\xe}} = (-1)^{\dec(\aen)} \hat{B}_x\ket{\psi_{\aen|\xe}} = \ket{\psi_{\aen|\xe}}.
\end{align}
The square of the hat operator can be completely characterized in terms of the anti-commutators of the observables, assuming their projectivity $B_1^2 =B_2^2=B_3^2=\mathds{1}$. Let's consider $x=1$, then
\begin{align*}
    (\hat{B}_1)^2 = \mathds{1} +\frac{1}{3} \big(\{B_1,B_2\}+\{B_2,B_3\}+\{B_1,B_3\} \big).
\end{align*}
Considering all possible $x$ and Eq. \eqref{eq:EBI_Bhatsquared}, we obtain the following system of equations :
\begin{align*}
    \big(+\{B_1,B_2\}+\{B_2,B_3\}+\{B_1,B_3\} \big) \ket{\psi_{\aen|\xe}}=  0,\\
    \big(-\{B_1,B_2\}+\{B_2,B_3\}-\{B_1,B_3\} \big) \ket{\psi_{\aen|\xe}}=  0,\\
    \big(-\{B_1,B_2\}-\{B_2,B_3\}+\{B_1,B_3\} \big) \ket{\psi_{\aen|\xe}}=  0,\\
    \big(+\{B_1,B_2\}-\{B_2,B_3\}-\{B_1,B_3\} \big) \ket{\psi_{\aen|\xe}}=  0
\end{align*}
whose only solution is
\begin{align}\label{eq:EBI_anticomm}
        \{B_1,B_2\} \ket{\psi_{\aen|\xe}} = \{B_2,B_3\} \ket{\psi_{\aen|\xe}}= \{B_1,B_3\} \ket{\psi_{\aen|\xe}}=  0.
\end{align}
This completely fixes the three observales: they must be unitarely equivalent to the three Pauli's. See Appendix C of \cite{Kan_STcommutation} for a formal proof of this statement.

To make the self-testing cryptographically and experimentally meaningful it needs to be robust to noise, \emph{i.e.} we need to consider a non-zero $\varepsilon$ and $\delta_{\mathrm{QHE}}(\kappa)$.
We want to prove that all the anti-commutators are a negligible function depending on the noise $\varepsilon$ and the security parameter $\kappa$:
\begin{equation*}
     \mathop{\mathds{E}}_{\xe=\enc(0)} \sum_{\aen} 
    \left\| \{B_{y_1},B_{y_2}\}\ket{\psi_{\aen|\xe}} \right\|^2 
    \leq \eta(\epsilon,\kappa) \qquad \forall y_1 \neq y_2.
\end{equation*}

Using similar tricks as in the previous section we get the following system of equations:
\begin{align*}
    \mathop{\mathds{E}}_{\xe=\enc(0)} \sum_{\aen} 
    \left\|
    \big(+\{B_1,B_2\}+\{B_2,B_3\}+\{B_1,B_3\} \big) \ket{\psi_{\aen|\xe}} \right\|^2
    \leq 9 (1+\sqrt{3}) \left( \frac{2}{\sqrt{3}}\varepsilon+ 4 \delta_{\mathrm{QHE}}(\kappa)\right),\\
    \mathop{\mathds{E}}_{\xe=\enc(1)} \sum_{\aen} 
    \left\|
    \big(-\{B_1,B_2\}+\{B_2,B_3\}-\{B_1,B_3\} \big) \ket{\psi_{\aen|\xe}} \right\|^2
    \leq 9 (1+\sqrt{3}) \left( \frac{2}{\sqrt{3}}\varepsilon+ 4 \delta_{\mathrm{QHE}}(\kappa)\right),\\
    \mathop{\mathds{E}}_{\xe=\enc(2)} \sum_{\aen} 
    \left\|
    \big(-\{B_1,B_2\}-\{B_2,B_3\}+\{B_1,B_3\} \big) \ket{\psi_{\aen|\xe}} \right\|^2
    \leq 9 (1+\sqrt{3})\left( \frac{2}{\sqrt{3}}\varepsilon+ 4 \delta_{\mathrm{QHE}}(\kappa)\right),\\
    \mathop{\mathds{E}}_{\xe=\enc(3)} \sum_{\aen} 
    \left\|\big(+\{B_1,B_2\}-\{B_2,B_3\}-\{B_1,B_3\} \big) \ket{\psi_{\aen|\xe}} \right\|^2
    \leq 9 (1+\sqrt{3})\left( \frac{2}{\sqrt{3}}\varepsilon+ 4 \delta_{\mathrm{QHE}}(\kappa)\right).
\end{align*}
In equations above the averages are computed across various values of  $x$, but from them we can obtain the averages over the same value $\bar{x}$, albeit incurring an additional cost represented by 
$\delta_{\mathrm{QHE}}(\kappa)$ on the right-hand side. We proceed by expanding the constituent terms within the norms. Subsequently, for each equation within the system, an analogous treatment is employed, yielding expressions akin to:
\begin{align*}
    &\left\|
    \big(\{B_1,B_2\}+\{B_2,B_3\}+\{B_1,B_3\} \big) \ket{\psi_{\aen|\xe}} \right\|^2 =\\
    &=
    \left\|
    \{B_1,B_2\} \ket{\psi_{\aen|\xe}} \right\|^2
    +\left\|
    \{B_2,B_3\} \ket{\psi_{\aen|\xe}} \right\|^2
    +\left\|
   \{B_3,B_1\} \ket{\psi_{\aen|\xe}} \right\|^2+\\
   &+\bra{\psi_{\aen|\xe}} \{ \{B_1,B_2\},\{B_2,B_3\} \} \ket{\psi_{\aen|\xe}}
   +\bra{\psi_{\aen|\xe}} \{ \{B_1,B_2\},\{B_1,B_3\} \} \ket{\psi_{\aen|\xe}}
   +\bra{\psi_{\aen|\xe}} \{ \{B_1,B_3\},\{B_2,B_3\} \} \ket{\psi_{\aen|\xe}}
\end{align*}
with possibly different signs in the last line.
When we sum all of the inequalities; only the terms with the norm of the anti-commutators are going to survive, hence

\begin{align*}
    4 \mathop{\mathds{E}}_{\xe=\enc(\bar{x})} \sum_{\aen} 
    \left(
    \left\|
    \{B_1,B_2\} \ket{\psi_{\aen|\xe}} \right\|^2
    +\left\|
    \{B_2,B_3\} \ket{\psi_{\aen|\xe}} \right\|^2
    +\left\|
   \{B_3,B_1\} \ket{\psi_{\aen|\xe}} \right\|^2
    \right)
    \leq 4 (9 (1+\sqrt{3}) )
    \left( \frac{2}{\sqrt{3}}\varepsilon+ 4 \delta_{\mathrm{QHE}}(\kappa)\right).
\end{align*}
All the terms in the sum are positive, hence we obtain the desired bound for all the pairs of anti-correlators:
\begin{equation}\label{EBI_robust_anticomm}
     \mathop{\mathds{E}}_{\xe=\enc(\bar{x})} \sum_{\aen} 
    \left\| \{B_{y_1},B_{y_2}\}\ket{\psi_{\aen|\xe}} \right\|^2 
    \leq  6 (3+\sqrt{3}) \varepsilon+ 36(1+\sqrt{3}) \delta_{\mathrm{QHE}}(\kappa) \qquad \forall y_1 \neq y_2
\end{equation}
which is indeed the robust version of equation \eqref{eq:EBI_anticomm}.

\section{Open problems}
Our findings reveal that the compilation procedure introduced in~\cite{STOC:KLVY23} effectively preserves the quantum bound of XOR games. However, it remains uncertain whether this preservation extends to generic nonlocal games. As demonstrated, the compilation of Bell inequalities is feasible, yet determining whether the quantum bound preservation holds for Bell inequalities not expressible as nonlocal games poses a challenge. In both scenarios, the complexity arises from the inability to find the quantum bound using a simple correlation matrix and the duality of semi-definite programming. Resolving this likely requires using the NPA hierarchy and employing moment matrices with numerous nonobservable elements. An additional open problem pertains to the quantum behavior of compiled games involving more than two players. Modeling a single quantum prover in such instances necessitates a rigorous characterization of sequential encrypted operations. Lastly, understanding the relationship between self-testing in standard nonlocal games and their compiled counterparts proves to be an instructive avenue for future exploration.

\section*{Acknowledgements}
We thank Paul Hermouet, Dominik Leichtle, Mirjam Weilenmann, Jef Pauwels and Borivoje Daki\'{c} for insightful discussions.
MB acknowledges funding from QuantEdu France, a state aid managed by the French National Research Agency for France 2030 with the reference ANR-22-CMAS-0001. BB acknowledges support from the European Commission Horizon-CL4 program under grant agreement 101135288 for the EPIQUE project. I\v{S} and DM acknowledge funding from  PEPR integrated project EPiQ ANR-22-PETQ-0007 part of Plan France 2030. ED acknowledges funding from the European Union’s Horizon Europe research and innovation program under the grant agreement No 101114043 (QSNP).

\bibliographystyle{quantum}
\bibliography{generic,abbrev1,crypto}

\end{document}